\documentclass[a4paper,11pt]{article} 
\RequirePackage[top=0.75in, bottom=0.75in, left=0.75in, right=0.75in]{geometry}
\usepackage{multirow}
\usepackage{amsmath,amssymb,amsthm}
\usepackage{bm}
\usepackage[colorlinks=true,bookmarks=false,citecolor=blue,urlcolor=blue]{hyperref} 
\usepackage{natbib}
\usepackage{tikz}
\usetikzlibrary{arrows.meta,positioning,calc,shapes.geometric}
\usepackage{enumitem}
\RequirePackage{graphicx}
\usepackage{subcaption}
\usepackage{booktabs}

\newtheorem{theorem}{Theorem}[section]

\newtheorem{proposition}[theorem]{Proposition}
\newtheorem{lemma}[theorem]{Lemma}
\newtheorem{corollary}[theorem]{Corollary}

\newtheorem{definition}[theorem]{Definition}

\newtheorem{remark}[theorem]{Remark}

\newcommand{\JEL}[1]{\par\noindent\textbf{JEL Classification:} #1}

\allowdisplaybreaks

\title{Pareto and Bowley Reinsurance Games in Peer-to-Peer Insurance}

\author{Tim J. Boonen\thanks{Department of Statistics and Actuarial Science, School of Computing and Data Science, The University of Hong Kong, Hong Kong; Email: \texttt{tjboonen@hku.hk}}\hspace{1em}Kenneth Tsz Hin Ng\thanks{Department of Mathematics, The Ohio State University, Columbus,  United States; Email: \texttt{ng.499@osu.edu} }\hspace{1em} Tak Wa Ng\thanks{\'{E}cole d'Actuariat, Université Laval, Quebec, Canada; Email: \texttt{tak-wa.ng.1@ulaval.ca}}\hspace{1em} Thai Nguyen\thanks{\'{E}cole d'Actuariat, Université Laval, Quebec, Canada; Email:  \texttt{thai.nguyen@act.ulaval.ca}}}
\date{\today}

\begin{document}

\maketitle

\begin{abstract}   



We propose a peer-to-peer (P2P) insurance scheme comprising a risk-sharing pool and a reinsurer. A plan manager determines how risks are allocated among members and ceded to the reinsurer, {while the reinsurer sets the reinsurance loading}. {Our work focuses on the strategic interaction between the plan manager and the reinsurer}, and this focus leads to two game-theoretic contract designs: a Pareto design and a Bowley design, for which we derive closed-form optimal contracts. In the Pareto design, cooperation between the reinsurer and the plan manager leads to multiple Pareto-optimal contracts, which are further refined by introducing the notion of coalitional stability. In contrast, the Bowley design yields a unique optimal contract through a leader–follower framework, and we provide a rigorous verification of the individual rationality constraints via pointwise comparisons of payoff vectors. 
Comparing the two designs, we prove that the Bowley-optimal contract is never Pareto optimal and typically yields lower total welfare.
{In our numerical examples, the presence of reinsurance improves welfare, especially with Pareto designs and a less risk-averse reinsurer. 
We further analyze the impact of the single-loading restriction, which disproportionately favors members with riskier losses. } 

\end{abstract}

\noindent\textbf{Keywords:} 
Risk management, 
Peer-to-peer risk sharing, 
Optimal reinsurance, 
 Pareto optimality,
 Cooperative game theory.

\JEL{C71, C72, G22}

\section{Introduction}

Decentralized insurance arrangements, such as peer-to-peer (P2P) insurance \citep{denuit2020investing,feng2023decentralized}, mutual insurance \citep{laux2010financing,li2025mean}, and tontines \citep{milevsky2015optimal,ng2024individual}, have attracted increasing attention as alternatives to traditional centralized insurance models \citep[e.g.,][]{boonen2024pareto}. By allowing members to share risks within a community directly, these schemes promise improved transparency, reduced administrative costs, and better alignment of incentives. In particular, many P2P insurance schemes incorporate an external reinsurance layer to enhance risk coverage and ensure solvency, leading to hybrid designs that combine internal risk sharing with external risk transfer. 
For example, \cite{denuit2021stop} 
study
the stop-loss treaty in P2P insurance in which the retained part is shared by the conditional mean risk-sharing rule \citep{denuit2012convex}. \cite{anthropelos2026expansion} consider the expansion of risk-sharing pools and examine the impact of exogenous reinsurance.
However, to our knowledge, the literature on the strategic interaction between a reinsurer and a decentralized P2P pool remains largely unexplored.

Understanding this interaction is important because, in practice, reinsurers are not passive risk absorbers but strategic market participants who price coverage and influence the feasibility of P2P schemes. Whether the reinsurer acts as a price-setting leader or engages in coordinated contract design directly affects equilibrium risk allocation, premium levels, and the long-run sustainability of decentralized insurance pools. Accordingly, the present paper focuses on this strategic interaction and its implications for all stakeholders in the P2P insurance scheme. 

\subsection{Contributions}


This paper considers a novel mean-variance P2P insurance-reinsurance scheme comprising a risk-sharing pool managed by a plan manager and a reinsurer that provides external insurance protection to the pool.
In this scheme, the P2P plan manager decides the internal risk mutualization among members and external risk transfer to the reinsurer, while the reinsurer sets the price of the reinsurance contract under the expected value premium principle. Depending on the interaction between the reinsurer and the plan manager, we consider two  institutional designs: the Pareto and the Bowley models. 

In the Pareto design, the reinsurer and the P2P plan manager optimize their joint interests, which we term a  \textit{joint-Pareto-optimal} (JP-optimal) contract. Following the two-step approach described in \citet[Section 2]{asimit2021risk}, we determine the risk mutualization between members and the reinsurer via a 
social-objective optimization. Since the price of the reinsurance contract cancels out when aggregating the members' and the reinsurer's disutility, the optimal risk-mutualization and reinsurance strategy is independent of the price of the reinsurance contract,
which gives rise to multiple JP-optimal contracts due to the flexibility to choose the reinsurance premium. This freedom permits additional selection criteria within the JP-optimal set based on coalitional stability considerations.
Specifically, we study the premium decision via a coalition game that assigns to each agent\footnote{In this paper, we refer to ``members'' as participants in the risk-sharing pool, and to ``agents'' as all individuals, including both members and the reinsurer.} the welfare gain from joining the P2P contract. We prove that the \textit{core} is non-empty and that its elements correspond to coalitionally stable and JP-optimal contracts. Furthermore, we derive sufficient conditions for contracts with nonnegative safety loadings, and in particular for contracts with a single common safety loading.


The Bowley design is formulated as a leader–follower game, 
in which the reinsurer acts as the leader and sets the reinsurance premium, while the plan manager responds by determining the reinsurance and internal risk-sharing strategies for the members. By backward induction and solving the two subproblems sequentially, we derive the unique Bowley-optimal contract in closed form, both with and without the single-loading restriction. Unlike the Pareto design, the reinsurance premium in the Bowley framework is uniquely determined,
but the individual rationality (IR) constraints are not immediately satisfied. We therefore provide explicit closed-form conditions on the market parameters that ensure the IR constraints are satisfied.
Finally, by comparing the two game-theoretic designs, we show that the Bowley contract is never JP-optimal and always results in lower total welfare for all agents.

{Through a comprehensive numerical analysis, we compare 
various contracts
and investigate their impacts on welfare. Owing to the multiplicity of admissible safety loadings under the Pareto design, we select JP-optimal contracts that}
Pareto-dominate the Bowley counterparts. 
The numerical comparison of all contracts yields four key insights.
First, Pareto and Bowley contracts dominate the no-reinsurance case in terms of each agent's welfare improvement. This underscores the value of the reinsurance layer, aligning with the finding in \cite{anthropelos2026expansion} despite different settings and research questions.

Second, the Pareto design generally leads to
lower safety loadings and higher risk transfers for all members, and the combined effect leads to a higher premium payment. 
From the reinsurer’s perspective, the additional risk borne under the Pareto contracts is well-compensated by the increased premium income,
resulting in higher welfare for all agents in the community.  

{Third,
we find that the single-loading restriction primarily benefits members with riskier losses, as they are typically charged lower safety loadings than in the unrestricted case. As a result, 
the high-risk member may receive a disproportionately large welfare gain 
at the expense of other members and the reinsurer. Nonetheless, this does not necessarily reduce total welfare relative to contracts without the restriction: with a less risk-averse reinsurer, Bowley contracts with the single-loading restriction achieve a larger total welfare improvement.

Finally, through
comparative statics analysis of the reinsurer’s risk aversion, we find that {without the single-loading restriction}, Bowley-optimal contracts exhibit underinsurance compared to JP-optimal cases, echoing the results in \cite{jiang2025bowley}. In addition, the welfare improvement depends on the reinsurer's risk tolerance: when she 
is less risk averse and thus more risk is transferred from the pool, a greater increase in total welfare results.

\subsection{Related Literature}

Our work contributes to the literature on decentralized insurance.
This strand of research has followed several representative directions, such as the construction of risk-sharing architectures through, among others, axiomatic characterizations \citep{denuit2022risk} and optimization-based approaches \citep{abdikerimova2022peer}. Other contributions from insurance economics examine issues such as adverse selection \citep{chen2024cost} and moral hazard \citep{BoonenNgNguyen2025}, as well as institutional design aspects. For example, \cite{denuit2021risk} propose three business models with different governance, and \cite{clemente2023optimal} design a novel cashback mechanism based on the Shapley value. In particular, our paper devises a new P2P insurance scheme featuring a proportional risk mutualization similar to that in \citet{feng2023peer} and an additional proportional reinsurance treaty.
Unlike \cite{denuit2021stop} and \cite{anthropelos2026expansion}, who assume that the reinsurance layer is exogenously offered, the novelty of this paper lies in incorporating the reinsurer’s perspective and modeling her strategic interaction with the P2P insurance manager in setting the reinsurance premium, which fundamentally shapes equilibrium risk sharing
and welfare outcomes.

Besides, our Pareto and Bowley reinsurance games connect the literature on decentralized insurance to game-theoretic (re)insurance contracting.\footnote{The literature review focuses on (re)insurance contracting under Pareto and Bowley games. For a more comprehensive review of reinsurance contracting, we refer to \cite{cai2020optimal}.} In Pareto-optimal (re)insurance arrangements, there are mainly two categories related to premium setting. In the first category, the contract is composed via a weighted-sum optimization with a given premium function. For instance, \cite{cai2017pareto} develop a mutually acceptable Pareto-optimal reinsurance contract that accommodates both the reinsurer and the insurer's goals. \cite{lo2019pareto} further extend their work to incorporate risk constraints.

Another category endogenizes the premium decision in the bargaining process among agents, which is in line with the seminal contribution of \cite{raviv1979design}. 
Methodologically, this approach first optimizes the total welfare to determine the optimal indemnity, and then redistributes the resulting welfare gain among agents via a cooperative game to set the premium. Our Pareto design features a cooperative game study on premium decision, further advancing this line of research in the P2P insurance setting.
To name a few contributions in this spirit, \cite{asimit2018insurance} study Pareto-efficient insurance contracts with one policyholder and multiple insurers, which is further examined in \cite{boonen2025pareto} with distributional uncertainty of the insurable loss. \cite{boonen2024pareto} explore the risk sharing between one monopolistic insurer and multiple policyholders with applications in flood risk management. Our Pareto design adds a risk-mutualization layer to the setting of \cite{boonen2024pareto}. Therein, the core of the insurance game exists trivially since the insurer serves as the veto player who establishes the risk-sharing scheme. In contrast, in our model, the risk-sharing scheme can operate even without the reinsurer's involvement. Therefore, our work contributes by
{providing a nontrivial} proof of the non-emptiness of the resulting core \citep{gillies1953some}.

The leader–follower framework is another key model in (re)insurance contracting. 
In the continuous-time setting, \cite{chen2018new} study such a reinsurance game under the expected utility maximization criterion, which is further extended to the settings of ambiguous claim arrival \citep{hu2018robust}, multiple competitive followers with relative performance concern \citep{bai2022hybrid}, and multiple leaders with different premium principles \citep{cao2023reinsurance}.

The Bowley reinsurance game considered herein belongs to the same leader-follower framework, featuring a monopolistic reinsurer in a discrete-time setting. The seminal work of this stream dates back to the study in  \cite{chan1985reinsurer} under the expected utility setting. \cite{cheung2019risk} revisit the problem under a general premium principle for the reinsurer and use a distortion risk measure for the insurer. In addition, \cite{boonen2023bowley} compare Pareto and Bowley-optimal reinsurance designs. They find that the Bowley optimum can be Pareto efficient but leaves the insurer's welfare indifferent to the status quo. However, the Bowley optimum is not necessarily Pareto optimal: 
\cite{jiang2025bowley} also make such a comparison under the generalized mean-variance preference setting and prove the inefficiency
of Bowley-optimal contracts. Our result echoes this finding in the literature on P2P insurance design and suggests caution regarding the adoption of the Bowley reinsurance scheme.


The remainder of the paper is organized as follows. Section \ref{Sec:ModelFormulations} formulates the general model setting. Pareto and Bowley designs are introduced in Section \ref{Sec:ParetoDesign} and Section \ref{Sec:BowleyDesign}, respectively. Section \ref{Sec:Numeric} numerically illustrates two designs and performs a welfare analysis. We conclude the paper in Section \ref{Sec:Conclusion} with future research directions. Most proofs and additional exposition are relegated to the Appendix.

\section{Model Formulations}\label{Sec:ModelFormulations}


\subsection{Basic setting and risk sharing}

We consider a pool of $n\geq2$ members who share individual-specific losses $\bm{X}=(X_1,\dots,X_n)$, where $X_i$ represents the loss random variable of the Member $i$. Denote the mean and the positive definite covariance matrix of $\bm{X}$ by $\bm{\mu}=(\mu_1,\dots,\mu_n)$ and $\bm{\Sigma}=(\sigma_{ij})_{i,j=1,\dots,n}$, respectively, where ${\mu_i},\sigma_{ii}>0$ for any $i=1,\dots,n$.
In the sequel, unless otherwise specified, we use bold symbols to denote vectors or matrices; and unbolded symbols to denote scalars. 

A plan manager oversees the risk management strategies of the pool, which consist of two main components:
    \begin{enumerate}
        \item \textbf{Reinsurance}. The plan manager decides the proportional reinsurance strategy $\bm{p}=(p_1,\dots,p_n)$, which transfers the risk $\bm{X}$ partially to a reinsurer.\footnote{We use the term ``reinsurer" to distinguish it from the P2P plan manager. Such contracts can also be offered by insurers. } In return, each member $i$ pays a premium $\pi_i$ for the reinsurance policy. We assume that the expected value premium principle is adopted:
            \begin{equation*}
    \pi_i(\eta_i,p_i):=(1+\eta_i)\mathbb{E}(p_iX_i)=(1+\eta_i)p_i\mu_i,
\end{equation*}
    where $\eta_i\geq 0$ is the safety loading.  
        \item \textbf{Risk mutualization}. The plan manager also determines the  risk mutualization rule $\bm{A}=(a_{ij})_{i,j\in\{1,\dots,n\}}$ on how losses are shared among members of the pool, where $a_{ij}$ represents the portion of member $j$'s loss, $a_{ij}X_j$, which is borne by Member $i$ \textit{ex post}.   
    \end{enumerate}
Summarizing the above, Member $i$ has to pay $\pi_i(\eta_i,p_i)$ {to} cede $p_iX_i$ to the reinsurer, and has to bear $\sum_{j=1}^na_{ij}X_j$ from other members {and herself}. 
The mechanism is graphically illustrated in Fig. \ref{Fig:LSXi}.
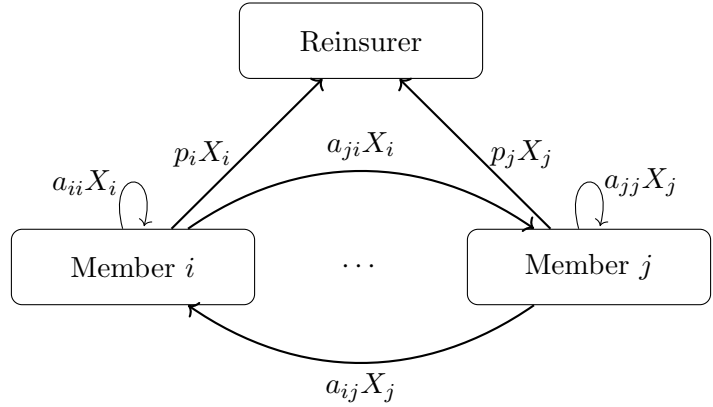
\begin{figure}[htbp]
\centering
\begin{tikzpicture}[
    node distance=6cm,
    mainnode/.style={
        draw,
        rounded corners,
        minimum width=3.2cm,
        minimum height=1cm,
        align=center
    },
    label/.style={
        draw=none,
        fill=none,
        font=\normalsize
    },
    dashedlink/.style={dashed, thick},
    arrow/.style={->, thick},
    loopstyle/.style={looseness=10}
]

\node (mi) [mainnode] {Member $i$};
\node (mj) [right of=mi, mainnode] {Member $j$};
\node (dots) at ($(mi)!0.5!(mj)$) {$\cdots$};
\node[mainnode] (r) at ($(mi)!0.5!(mj)+(0,3cm)$) {Reinsurer};
\draw[arrow] (mi) to[bend left=35]
    node[label, midway, above] {$a_{ji}X_i$} (mj);

\draw[arrow] (mj) to[bend left=35]
    node[label, midway, below] {$a_{ij}X_j$} (mi);

\draw[loopstyle] (mi)
    edge[loop above] node[label, left] {$a_{ii}X_i\,$} (mi);

\draw[loopstyle] (mj)
    edge[loop above] node[label, right] {$\, a_{jj}X_j$} (mj);

\draw[arrow] (mi) -- node[label, midway, left] {$p_iX_i\,$} (r);
\draw[arrow] (mj) -- node[label, midway, right] {$\,p_jX_j$} (r);

\end{tikzpicture}

\caption{Illustration of the risk management mechanism.}\label{Fig:LSXi}
\end{figure}

\vspace{-1em}

The risk-sharing scheme herein adds an additional  reinsurance layer to the P2P risk mutualization introduced in \cite{feng2023peer}. Motivated by their work, we formulate the zero-conserving condition and actuarial fairness for our mechanism as follows. 

\begin{definition}[Zero conserving]
    The risk-mutualization-reinsurance rule $(\bm{A},\bm{p})$ is said to be zero conserving if all losses are shared completely among members and the reinsurer. Mathematically, $p_i+\sum_{j=1}^na_{ji}=1$ for all $i=1,\dots,n$. 
The condition can be equivalently expressed as $\mathbf{1}^\top\bm{A}+\bm{p}^\top=\bm{1}^\top$, where $\bm{1}\in\mathbb{R}^n$ is the vector with all entries being 1. 
\end{definition}


\begin{definition}[Actuarial fairness]
    The risk-mutualization-reinsurance rule $(\bm{A},\bm{p})$ is said to be actuarially fair if the post-reinsurance expected loss is preserved before and after risk mutualization for all members in the pool. Mathematically, $\sum_{j=1}^na_{ij}\mu_j=(1-p_i)\mu_i$ for all $i=1,\dots,n$. 
Equivalently, the condition can be expressed as $\bm{A}\bm{\mu} + \bm{D}(\bm{\mu})\bm{p}=\bm{\mu}$, where $\bm{D}$ is the operator that transforms an $n$-dimensional vector to an $n\times n$ diagonal matrix. 
\end{definition}



On the other hand, the reinsurer (indexed by $R$ in the sequel) shall determine the safety loadings $\bm{\eta}=(\eta_1,\dots,\eta_n)$ charged to the members based on the collective risk exposure. The total premium collected by the reinsurer is thus $\sum_{i=1}^n\pi_i(\eta_i,p_i)=(\bm{D}(\bm{\mu}) (\bm{1}+\bm{\eta}))^\top\bm{p}$. 
 Here, we assume that the reinsurer can charge members differently based on their individual risks. We collect the notions of risk sharing among members, reinsurance, and safety loading to define the P2P insurance contract below.

\begin{definition}[P2P insurance contract]
    A triplet $(\bm{A},\bm{p},\bm{\eta})$ is called a P2P insurance contract, where $\bm{A}$ is the risk mutualization among members, $\bm{p}$ is the proportional reinsurance strategy, and $\bm{\eta}$ is the safety loading factor.
\end{definition}

\subsection{Agents' preferences}
We introduce the preferences of each individual member, the plan manager, and the reinsurer as follows. {The risk borne by each member} is measured by a mean-variance disutility based on the risk mutualization arrangement:
\begin{equation*}
    \rho_i(\bm{A}):=\sum_{j=1}^n\mathbb{E}(a_{ij}X_j)+\frac{\gamma_i}{2}Var\left(\sum_{j=1}^na_{ij}X_j\right)=\bm{A}_i\bm{\mu}+\frac{\gamma_i}{2}\bm{A}_i\bm{\Sigma}\bm{A}_i^\top,\quad\text{for all }i=1,\cdots,n,
    \end{equation*}
    where $\bm{A}_i$ is the $i$-th row of $\bm{A}$
    and {$\gamma_i > 0$} captures the heterogeneous risk aversion of Member $i$. Together with the premium paid to the reinsurer for the reinsurance policy, $\pi_i(\eta_i,p_i)$, the preference for the $i$-th member is given by $u_i(\bm{A},\bm{p},\bm{\eta}):=\rho_i(\bm{A})+\pi_i(\eta_i,p_i)$.
The plan manager's preference is then given by the sum of the members' preferences:
\begin{equation*}
    u(\bm{A},\bm{p},\bm{\eta}):=\sum_{i=1}^nu_i(\bm{A},\bm{p},\bm{\eta})=\bm{1}^\top\bm{A}\bm{\mu}+\frac{1}{2}tr(\bm{D}(\bm{\gamma})\bm{A}\bm{\Sigma}\bm{A}^\top)+(\bm{D}(\bm{\mu})(\bm{1}+\bm{\eta}) )^\top\bm{p},
\end{equation*}
where $tr(\cdot)$ is the trace operator.

The reinsurer aims to maximize the profit while controlling the risk taken. The former is given by $\sum_{i=1}^n\pi_i(\eta_i,p_i)$, while the latter is measured by the mean-variance disutility:
\begin{equation*}
    \rho_R(\bm{p}):=\mathbb{E}\left(\sum_{i=1}^np_iX_i\right)+\frac{\gamma_{R}}{2}Var\left(\sum_{i=1}^np_iX_i\right)=\bm{\mu}^\top\bm{p}+\frac{\gamma_{R}}{2}\bm{p}^\top\bm{\Sigma}\bm{p},
\end{equation*}
where $\gamma_{R}\geq 0$ captures the reinsurer's risk aversion, with a larger $\gamma_{R}$ signifying that the reinsurer is more risk averse. The reinsurer's preference is thus given by the sum of the two components: 
\begin{equation}\label{Eq:ReinsurerPreference}
    v(\bm{\eta},\bm{p}):=\rho_R(\bm{p})-\sum_{i=1}^n\pi_i(\eta_i,p_i)
    =\frac{\gamma_{R}}{2}\bm{p}^\top\bm{\Sigma}\bm{p}-(\bm{D}(\bm{\mu})\bm{\eta})^\top\bm{p}.
\end{equation}

\subsection{Individual rationality}\label{Sec:IR}

The individual rationality (IR) constraint is a set of conditions that must be satisfied so that all agents (members and the reinsurer) are better off than under the status quo, thereby incentivizing them to join the risk-sharing scheme. For members in the pool, we define the welfare gain $\omega_i$ for Member $i$ by
\begin{equation}
\label{Eq:wi}
    \omega_i(\bm{A},\bm{p},\bm{\eta}) := \mu_i + \frac{\gamma_i}{2}\sigma_i^2 - u_i(\bm{A},\bm{p},\bm{\eta})=\frac{\gamma_i}{2}(\sigma_i^2-\bm{A}_i\bm{\Sigma}\bm{A}_i^\top)-\eta_ip_i\mu_i, \quad i=1,\dots,n, 
\end{equation}
where the second equality follows from the actuarial fairness condition.
This represents the reduction in the member's disutility after joining the pool. The IR constraints for the members are then given by 
    \begin{equation}
        \label{Eq:MemberIR}
        \omega_i(\bm{A},\bm{p},\bm{\eta}) \geq 0\iff\eta_ip_i\mu_i\leq\frac{\gamma_i}{2}(\sigma_i^2-\bm{A}_i\bm{\Sigma}\bm{A}_i^\top), \quad \text{for all }i=1,\dots,n. 
    \end{equation}

The reinsurer's welfare gain can be formulated similarly as 
    \begin{equation*}
        \omega_R(\bm{A},\bm{p},\bm{\eta})  := -v(\bm{p},\bm{\eta}) = (\bm{D}(\bm{\mu})\bm{\eta} )^\top\bm{p} - \frac{\gamma_{R}}{2}\bm{p}^\top\bm{\Sigma}\bm{p},
    \end{equation*}
which again measures the reduction in disutility upon forming the risk-sharing scheme. The IR constraint for the reinsurer is thus 
\begin{equation}\label{Eq:ReinsurerIR}
\omega_R(\bm{A},\bm{p},\bm{\eta})  \geq 0  \iff
\frac{\gamma_{R}}{2}\bm{p}^\top\bm{\Sigma}\bm{p}\leq(\bm{D}(\bm{\mu})\bm{\eta} )^\top\bm{p}.
\end{equation}
In the sequel, we denote by $\mathcal{IR}$ the set of all contracts $(\bm{A},\bm{p},\bm{\eta})$ that satisfy constraints \eqref{Eq:MemberIR} and \eqref{Eq:ReinsurerIR}. 




\section{Pareto Design}\label{Sec:ParetoDesign}


In this section, we consider the situation in which the reinsurer and the plan manager cooperate on the risk-sharing scheme, thereby leading to a Pareto game. We use the term Joint-Pareto (JP) optimality to distinguish it from Pareto optimality in Definition \ref{Def:PO} {below}.

\begin{definition}\label{Def:JPO}[Joint-Pareto-optimal P2P insurance contract]
    A contract $(\bm{A},\bm{p},\bm{\eta})$ is called Joint-Pareto optimal if {$(\bm{A},\bm{p},\bm{\eta})\in\mathcal{IR}$, and} no   $(\bm{\Tilde{A}},\bm{\Tilde{p}},\bm{\Tilde{\eta}}){\in\mathcal{IR}}$ satisfies  $v(\bm{\Tilde{\eta}},\bm{\Tilde{p}})\leq v(\bm{\eta},\bm{p})$ and $u_i(\bm{\Tilde{A}},\bm{\Tilde{p}},\bm{\Tilde{\eta}})\leq u_i(\bm{A},\bm{p},\bm{\eta})$
for all $i=1,\dots,n$, with at least one of these inequalities being strict.
\end{definition}

\subsection{Characterization of Joint-Pareto optimality}\label{Sec:CJPO}

To characterize the JP-optimal contracts in Definition \ref{Def:JPO}, we adopt the 2-step approach delineated in \citet[Section 2]{asimit2021risk}. First, we solve 
\begin{equation}\label{Prob:RS}
    \min_{\bm{A},\bm{p}}\left\{\rho_R(\bm{p})+\sum_{i=1}^n\rho_i(\bm{A})\right\}\quad
    \text{s.t. }\bm{A}\bm{\mu}+\bm{D}(\bm{\mu})\bm{p}=\bm{\mu},\quad \mathbf{1}^\top\bm{A}+\bm{p}^\top=\bm{1}^\top.
\end{equation}
Note that the objective of Problem \eqref{Prob:RS} is given by the sum of members' and the reinsurer's preferences, which is independent of $\bm{\eta}$ as the relevant terms are canceled in the summation. Based on the solution to Problem \eqref{Prob:RS}, we then determine the loading $\bm{\eta}$ in a way that the resulting triplet is JP-optimal.  

We now provide the solution to Problem \eqref{Prob:RS}. {To this end, we define the matrix 
 \begin{equation*}
    \overline{\bm{M}}:=\left(\gamma_{R}+\frac{1}{\sum_{j=1}^n\gamma_j^{-1}}\right)\bm{\Sigma}+k\bm{D}(\bm{\mu}^2)\bm{D}(\bm{\gamma})-\frac{k\bm{\mu}\bm{\mu}^\top}{\sum_{j=1}^n\gamma_j^{-1}},
\end{equation*} 
where $\bm{\mu}^2$ is the componentwise square  $\bm{\mu}$, and $k:=(\bm{\mu}^\top\bm{\Sigma}^{-1}\bm{\mu})^{-1}$, which is positive since $\bm{\mu}\neq \bm{0}$ and $\bm{\Sigma}$ is positive definite. The following lemma addresses the invertibility of $\overline{\bm{M}}$.
\begin{lemma}\label{Lem:Mbar}  The matrix $\overline{\bm{M}}$ is positive definite and thus invertible.
\end{lemma}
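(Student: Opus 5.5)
The plan is to show directly that the quadratic form $\bm{x}^\top\overline{\bm{M}}\bm{x}$ is strictly positive for every nonzero $\bm{x}\in\mathbb{R}^n$. Write $\Gamma:=\sum_{j=1}^n\gamma_j^{-1}>0$. Since $\bm{D}(\bm{\mu}^2)\bm{D}(\bm{\gamma})$ is diagonal with entries $\gamma_i\mu_i^2$, the form expands as
\begin{equation*}
\bm{x}^\top\overline{\bm{M}}\bm{x}=\gamma_R\,\bm{x}^\top\bm{\Sigma}\bm{x}+\frac{1}{\Gamma}\bm{x}^\top\bm{\Sigma}\bm{x}+k\left(\sum_{i=1}^n\gamma_i\mu_i^2x_i^2-\frac{(\bm{\mu}^\top\bm{x})^2}{\Gamma}\right).
\end{equation*}
I will treat the three pieces separately and check the sign of each. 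Symmetry of $\overline{\bm{M}}$ is immediate, since $\bm{\Sigma}$, the diagonal matrix and $\bm{\mu}\bm{\mu}^\top$ are all symmetric.

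The first term is nonnegative because $\gamma_R\ge 0$ and $\bm{\Sigma}$ is positive definite. The second term is strictly positive for $\bm{x}\neq\bm{0}$, again by positive definiteness of $\bm{\Sigma}$ and $\Gamma>0$.

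The only piece that needs an argument is the bracket, which carries a negative rank-one part. I will apply the Cauchy--Schwarz inequality after splitting $\mu_ix_i=\gamma_i^{-1/2}\cdot\gamma_i^{1/2}\mu_ix_i$:
\begin{equation*}
(\bm{\mu}^\top\bm{x})^2=\left(\sum_{i=1}^n\gamma_i^{-1/2}\,\gamma_i^{1/2}\mu_ix_i\right)^2\le\left(\sum_{i=1}^n\gamma_i^{-1}\right)\left(\sum_{i=1}^n\gamma_i\mu_i^2x_i^2\right)=\Gamma\sum_{i=1}^n\gamma_i\mu_i^2x_i^2 .
\end{equation*}
This makes the bracket nonnegative, and multiplying by $k>0$ (noted before the lemma) preserves the sign.

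Summing the three pieces gives $\bm{x}^\top\overline{\bm{M}}\bm{x}\ge \Gamma^{-1}\bm{x}^\top\bm{\Sigma}\bm{x}>0$ for all $\bm{x}\neq\bm{0}$. Hence $\overline{\bm{M}}$ is positive definite, and therefore invertible. The main obstacle is handling the subtracted rank-one term, and the weighted Cauchy--Schwarz step above settles it. A fallback exists via $(\bm{\mu}^\top\bm{x})^2\le(\bm{\mu}^\top\bm{\Sigma}^{-1}\bm{\mu})(\bm{x}^\top\bm{\Sigma}\bm{x})=k^{-1}\bm{x}^\top\bm{\Sigma}\bm{x}$, but it only absorbs the negative term into the $\Gamma^{-1}\bm{\Sigma}$ part and gives nonnegativity, so I would use the first route.
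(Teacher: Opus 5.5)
Your proof is correct. It differs from the paper's only in which positive term absorbs the negative rank-one piece $-k(\bm{\mu}^\top\bm{x})^2/\Gamma$.

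The paper uses Cauchy--Schwarz in the $\bm{\Sigma}$-geometry, $(\bm{\mu}^\top\bm{x})^2\le k^{-1}\bm{x}^\top\bm{\Sigma}\bm{x}$. The rank-one term is then absorbed by $\Gamma^{-1}\bm{\Sigma}$, leaving $\gamma_R\bm{\Sigma}+k\bm{D}(\bm{\mu}^2)\bm{D}(\bm{\gamma})$. Strict positivity comes from that diagonal matrix, which is positive definite because every $\mu_i>0$ and $\gamma_i>0$.

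You instead use the $\bm{\gamma}$-weighted Cauchy--Schwarz, absorbing the rank-one term into the diagonal part and keeping $(\gamma_R+\Gamma^{-1})\bm{\Sigma}$. Your strict positivity comes from $\bm{\Sigma}\succ 0$.

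Your closing remark is therefore mistaken. The ``fallback'' you dismiss is exactly the paper's argument, and it does give strict positivity. You overlooked that the leftover term $k\sum_i\gamma_i\mu_i^2x_i^2$ is itself strictly positive for $\bm{x}\neq\bm{0}$.

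The two routes give different bounds. Yours gives $\overline{\bm{M}}\succeq(\gamma_R+\Gamma^{-1})\bm{\Sigma}$, which ties $\overline{\bm{M}}$ to the spectrum of $\bm{\Sigma}$ and never uses $\mu_i\neq 0$ for individual $i$; you only need $k>0$. The paper's gives $\overline{\bm{M}}\succeq\gamma_R\bm{\Sigma}+k\bm{D}(\bm{\mu}^2)\bm{D}(\bm{\gamma})$, which uses the individual means. Neither bound relies on $\gamma_R>0$, so either argument transfers verbatim to Lemma~\ref{lem:M} for $\bm{M}=\overline{\bm{M}}-\gamma_R\bm{\Sigma}$.
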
}

\begin{proof}
    See Appendix \ref{App:Mbar}.
\end{proof}

The following proposition solves Problem \eqref{Prob:RS}.

\begin{proposition}
\label{Prop:ConvexPareto}
  The unique minimizer of Problem \eqref{Prob:RS} is
    \begin{align}\label{Eq:OptAP2}
    \begin{split}
        \bm{A}_*=&\ \frac{\bm{D}(\bm{\gamma})^{-1}\bm{1}\bm{1}^\top}{\sum_{j=1}^n\gamma_j^{-1}}\bm{D}(\bm{1}-\bm{p}_*)+k\left(\bm{I}_n-\frac{\bm{D}(\bm{\gamma})^{-1}\bm{1}\bm{1}^\top}{\sum_{j=1}^n\gamma_j^{-1}}\right)\bm{D}(\bm{1}-\bm{p}_*)\bm{\mu}\bm{\mu}^\top\bm{\Sigma}^{-1},\\
    \bm{p}_*=&\ \bm{1}-\gamma_{R}\overline{\bm{M}}^{-1}\bm{\Sigma}\bm{1}.
    \end{split}
    \end{align}
    In addition, if     \begin{equation}\label{Eq:unicond2}
    \begin{aligned}
        &-\gamma_R \sigma_i^2 +  \sum_{m\neq i}\left(\frac{k\mu_i\mu_m}{\sum_{j=1}^n\gamma_j^{-1}}-\left(\gamma_R+\frac{1}{\sum_{j=1}^n\gamma_j^{-1}}\right)\sigma_{im} \right)_+ \\
        &< \frac{\sum_{{m\neq i}}(k\mu_i\mu_m-\sigma_{im})}{\sum_{j=1}^n\gamma_j^{-1}}   < k\mu_i^2\gamma_i {+ \frac{\sigma_i^2-k\mu_i^2}{\sum_{j=1}^n\gamma_j^{-1}}}  - \sum_{m\neq i} \left( \left(\gamma_R + \frac{1}{\sum_{j=1}^n\gamma_j^{-1}} \right)\sigma_{im}-\frac{k\mu_i\mu_m}{\sum_{j=1}^n\gamma_j^{-1}}  \right)_+ 
    \end{aligned}
\end{equation}
    for any $i\in\{1,\dots,n\}$, then $\bm{p}_*\in(0,1)^n$.
\end{proposition}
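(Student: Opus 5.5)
Since Problem \eqref{Prob:RS} has a strictly convex quadratic objective in $(\bm{A},\bm{p})$ and linear equality constraints, I will proceed by a two-stage (nested) minimization rather than a full Lagrangian in one shot. First, I fix $\bm{p}$ and minimize over $\bm{A}$. For fixed $\bm{p}$ the objective reduces to $\bm{1}^\top\bm{A}\bm{\mu}+\frac12\sum_i\gamma_i\bm{A}_i\bm{\Sigma}\bm{A}_i^\top$ plus terms in $\bm{p}$ only. The constraints then fix the column sums $\bm{1}^\top\bm{A}=(\bm{1}-\bm{p})^\top$ and the row-wise means $\bm{A}\bm{\mu}=\bm{D}(\bm{1}-\bm{p})\bm{\mu}$. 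Because of these constraints, $\bm{1}^\top\bm{A}\bm{\mu}=(\bm{1}-\bm{p})^\top\bm{\mu}$ is fixed, so only the trace term matters. I introduce multipliers $\bm{\lambda}\in\mathbb{R}^n$ for the column constraint and $\bm{\nu}\in\mathbb{R}^n$ for the row constraint. The first-order condition is $\gamma_i\bm{\Sigma}\bm{A}_i^\top=\bm{\lambda}+\nu_i\bm{\mu}$, i.e.\ $\bm{A}_i=\gamma_i^{-1}(\bm{\lambda}^\top+\nu_i\bm{\mu}^\top)\bm{\Sigma}^{-1}$. Solving the two linear systems for $\bm{\lambda},\bm{\nu}$ then gives the stated $\bm{A}_*$ as a function of $\bm{p}$. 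I will also verify directly that the displayed $\bm{A}_*$ satisfies both constraints: the second term has zero column sums because $\bm{1}^\top(\bm{I}_n-\bm{D}(\bm{\gamma})^{-1}\bm{1}\bm{1}^\top/\sum\gamma_j^{-1})=\bm{0}$, and $k\bm{\mu}^\top\bm{\Sigma}^{-1}\bm{\mu}=1$ makes the row means come out correctly. Uniqueness for fixed $\bm{p}$ follows from strict convexity in each row $\bm{A}_i$.

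Next, I substitute $\bm{A}_*(\bm{p})$ back into the objective. This yields a quadratic $\bm{\mu}^\top\bm{p}+(\bm{1}-\bm{p})^\top\bm{\mu}+\frac{\gamma_R}{2}\bm{p}^\top\bm{\Sigma}\bm{p}+\frac12 Q(\bm{1}-\bm{p})$. I expect $Q(\bm{q})=\bm{q}^\top\bigl(\frac{1}{\sum\gamma_j^{-1}}\bm{\Sigma}+k\bm{D}(\bm{\mu}^2)\bm{D}(\bm{\gamma})-\frac{k\bm{\mu}\bm{\mu}^\top}{\sum\gamma_j^{-1}}\bigr)\bm{q}$; this is the step I anticipate being the main computational obstacle. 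To carry it out, I will write $\bm{A}_*=\bm{B}+\bm{C}$ with $\bm{B}$ the rank-one "$\gamma$-weighted equal-sharing" part and $\bm{C}$ the mean-correction part. I then compute $tr(\bm{D}(\bm{\gamma})\bm{B}\bm{\Sigma}\bm{B}^\top)$, the cross term, and $tr(\bm{D}(\bm{\gamma})\bm{C}\bm{\Sigma}\bm{C}^\top)$ separately, using $\bm{\mu}^\top\bm{\Sigma}^{-1}\bm{\mu}=k^{-1}$ and $\bm{1}^\top\bm{D}(\bm{\gamma})^{-1}\bm{1}=\sum\gamma_j^{-1}$. The linear terms cancel, so minimizing over $\bm{p}$ amounts to minimizing $\frac{\gamma_R}{2}\bm{p}^\top\bm{\Sigma}\bm{p}+\frac12(\bm{1}-\bm{p})^\top(\overline{\bm{M}}-\gamma_R\bm{\Sigma})(\bm{1}-\bm{p})$. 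Its gradient condition is $\gamma_R\bm{\Sigma}\bm{p}=(\overline{\bm{M}}-\gamma_R\bm{\Sigma})(\bm{1}-\bm{p})$, i.e.\ $\overline{\bm{M}}(\bm{1}-\bm{p})=\gamma_R\bm{\Sigma}\bm{1}$. Lemma \ref{Lem:Mbar} then gives the unique solution $\bm{p}_*=\bm{1}-\gamma_R\overline{\bm{M}}^{-1}\bm{\Sigma}\bm{1}$. The Hessian here is $\overline{\bm{M}}\succ0$, so this critical point is the global minimizer. Combined with uniqueness of $\bm{A}_*$ given $\bm{p}$, this yields uniqueness of the joint minimizer.

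For the interior condition, I set $\bm{q}:=\bm{1}-\bm{p}_*$, which solves $\overline{\bm{M}}\bm{q}=\gamma_R\bm{\Sigma}\bm{1}$; the goal is to show $\bm{q}\in(0,1)^n$. My plan is to rewrite this as $\overline{\bm{M}}(\bm{1}-\bm{q})=(\overline{\bm{M}}-\gamma_R\bm{\Sigma})\bm{1}$ and to read \eqref{Eq:unicond2} as row-wise diagonal-dominance-type inequalities. The left inequality should ensure the $i$-th row of $\gamma_R\bm{\Sigma}\bm{1}$ relative to $\overline{\bm{M}}$ forces $q_i>0$, and the right inequality should force $q_i<1$. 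I expect to use a maximum-principle argument. Take $i$ to be the index attaining $\min_j q_j$ (respectively $\max_j q_j$). Bound the $i$-th equation $\overline{M}_{ii}q_i+\sum_{m\ne i}\overline{M}_{im}q_m=\gamma_R\sum_m\sigma_{im}$ using the positive parts $(\cdot)_+$ of the off-diagonal entries $\overline{M}_{im}=(\gamma_R+1/\sum\gamma_j^{-1})\sigma_{im}-k\mu_i\mu_m/\sum\gamma_j^{-1}$. Together with $q_m\in[\min q,\max q]$, this should yield a contradiction if $\min q\le0$ or $\max q\ge1$. I expect the delicate part to be matching exactly the constants in \eqref{Eq:unicond2} when running this contradiction argument, possibly requiring a short case distinction on the sign of the extremal value.
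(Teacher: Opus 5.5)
Your derivation of $(\bm A_*,\bm p_*)$ is correct but organized differently from the paper. The paper writes a single Lagrangian for Problem \eqref{Prob:RS} and solves for the multipliers $\bm\psi=k\bm D(\bm\gamma)\bm D(\bm 1-\bm p)\bm\mu$ and $\bm\phi$. It then plugs these into the $\bm p$-stationarity equation $\gamma_R\bm\Sigma\bm p=\bm D(\bm\mu)\bm\psi+\bm\phi$, which gives $\overline{\bm M}(\bm 1-\bm p)=\gamma_R\bm\Sigma\bm 1$ with no trace computation; this is essentially the envelope theorem. You instead substitute $\bm A_*(\bm p)$ into the objective. The computation you flag as the main obstacle is short: the cross term vanishes because $(\bm I_n-\bm 1\bm 1^\top\bm D(\bm\gamma)^{-1}/\sum_j\gamma_j^{-1})\bm 1=\bm 0$. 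Your reduced objective $\frac{\gamma_R}{2}\bm p^\top\bm\Sigma\bm p+\frac12(\bm 1-\bm p)^\top\bm M(\bm 1-\bm p)$, with $\bm M=\overline{\bm M}-\gamma_R\bm\Sigma$, is correct. What your route buys is an explicit global-optimality and uniqueness argument (the reduced Hessian is $\overline{\bm M}\succ 0$), which the paper only states via invertibility of $\overline{\bm M}$.

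For the interior claim the paper uses the Poincar\'e--Miranda theorem. With $\bm q=\bm 1-\bm p$, \eqref{Eq:unicond2} says exactly that $(\overline{\bm M}\bm q-\gamma_R\bm\Sigma\bm 1)_i$ is negative on the face $q_i=0$ and positive on the face $q_i=1$, whenever the other coordinates lie in $[0,1]$. Your maximum-principle alternative is more elementary, but as sketched it does not close.

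Write $b_i:=\gamma_R(\bm\Sigma\bm 1)_i$. Then \eqref{Eq:unicond2} at index $i$ is equivalent to
\[
\sum_{m\neq i}(\overline{M}_{im})_+ \;<\; b_i \;<\; \overline{M}_{ii}-\sum_{m\neq i}(-\overline{M}_{im})_+ .
\]
At a row $i$ attaining $\min q$, bounding with $q_m\in[\min q,\max q]$ only gives
\[
b_i\le\Bigl(\overline{M}_{ii}-\sum_{m\neq i}(-\overline{M}_{im})_+\Bigr)\min q+\sum_{m\neq i}(\overline{M}_{im})_+\max q .
\]
When $\min q\le 0$, this contradicts the left bound only if you already know $\max q\le 1$. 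The row attaining $\max q$ has the symmetric defect. So the case where $\min q\le 0$ and $\max q\ge 1$ hold simultaneously is not handled by a case split on the sign of a single extremal value.

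The clean fix is to center at $\frac12\bm 1$. Set $\delta_i:=\overline{M}_{ii}-\sum_{m\neq i}|\overline{M}_{im}|$. The interval above has midpoint $\frac12(\overline{\bm M}\bm 1)_i$ and half-length $\frac12\delta_i$. Hence \eqref{Eq:unicond2} is equivalent to $|d_i|<\delta_i/2$ for all $i$, where $\bm d:=\gamma_R\bm\Sigma\bm 1-\frac12\overline{\bm M}\bm 1$; in particular it forces strict row diagonal dominance of $\overline{\bm M}$. The vector $\bm r:=\bm q-\frac12\bm 1$ solves $\overline{\bm M}\bm r=\bm d$. At an index $i$ with $|r_i|=\|\bm r\|_\infty$ you get $\delta_i\|\bm r\|_\infty\le|d_i|<\delta_i/2$, so $\|\bm r\|_\infty<\frac12$, i.e.\ $\bm p_*\in(0,1)^n$. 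This version is purely algebraic and needs no fixed-point theorem. It also yields the quantitative bound $\|\bm p_*-\frac12\bm 1\|_\infty\le\max_i|d_i|/\delta_i$.
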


\begin{proof}
    See Appendix \ref{App:ConvexPareto}.
\end{proof}

Note that the risk allocation in Proposition \ref{Prop:ConvexPareto} differs from the one depicted in \citet[Remark 3]{feng2023peer} due to the extra reinsurance layer.
One can immediately infer from the form of $\bm{A}_*$ that if there is no risk transfer to the reinsurer, both risk mutualizations in the current paper and that in \cite{feng2023peer} coincide. 

From the form of $\bm{p}_*$, one can deduce that $\gamma_R$ should be positive and sufficiently small to circumvent full and zero reinsurance. In particular, $\gamma_R=0$ leads to $\bm{p}_*=\bm{1}$, i.e., it is in the best interest of all agents for the risk-neutral reinsurer to absorb all risks in the Pareto design.
This entails that the reinsurer's risk tolerance should be within an appropriate range to avoid both full reinsurance and no reinsurance. Moreover, \eqref{Eq:unicond2} {is easier to satisfy} when the $\sigma_{im}$ have small magnitudes compared to $\sigma_i^2$, i.e., $\bm{\Sigma}$ exhibits a weak dependence, which enables the diversification among agents.

The following statement characterizes the Joint-Pareto optimality.

\begin{theorem}\label{Thm:JPO}[Characterization of Joint-Pareto optimality]
    Suppose that Condition \eqref{Eq:unicond2} holds.
    Let $\mathcal{JPO}$ be the set of JP-optimal contracts and $\mathcal{Z}:=\{(\bm{A},\bm{p},\bm{\eta})\in\mathcal{IR}: (\bm{A},\bm{p})\text{ solves Problem \eqref{Prob:RS}}\}$.
Then, $\mathcal{JPO}=\mathcal{Z}\neq\emptyset$. Furthermore, {let $(\bm{A}_*,\bm{p}_*)$ be given in \eqref{Eq:OptAP2}, and $\bm{\eta}_*$ be a vector of safety loadings. Then, the following statements are equivalent: } 

\begin{enumerate}[label=(\roman*)]
\item $(\bm{A}_*,\bm{p}_*,\bm{\eta}_*)\in \mathcal{JPO}$.
\item There exists $(c_1,c_2,\dots,c_n)\in\mathbb{R}^n$ such that $(\bm{A}_*,\bm{p}_*,\bm{\eta}_*)$ solves 
\begin{equation}\label{Prob:equJPO}
\begin{aligned}
        & \min_{(\bm{A},\bm{p},\bm{\eta})\in\mathcal{IR},{\bm{p}>0}}v(\bm{\eta},\bm{p})\\
        \text{s.t. }& u_i(\bm{A},\bm{p},\bm{\eta})=c_i,\ \text{for all }i=1,\dots,n,\ \bm{A}\bm{\mu}+\bm{D}(\bm{\mu})\bm{p}=\bm{\mu},\ \mathbf{1}^\top\bm{A}+\bm{p}^\top=\bm{1}^\top.
    \end{aligned}
\end{equation}
\item There exists $c_R\in\mathbb{R}$ such that $(\bm{A}_*,\bm{p}_*,\bm{\eta}_*)$ solves 
\begin{equation}\label{Prob:equJPO2}
    \min_{(\bm{A},\bm{p},\bm{\eta})\in\mathcal{IR},{\bm{p}>0}}u(\bm{A},\bm{p},\bm{\eta})\text{ s.t. }v(\bm{\eta},\bm{p})=c_R,\ \bm{A}\bm{\mu}+\bm{D}(\bm{\mu})\bm{p}=\bm{\mu},\ \mathbf{1}^\top\bm{A}+\bm{p}^\top=\bm{1}^\top.
\end{equation}
\end{enumerate}
\end{theorem}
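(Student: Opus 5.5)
The proof rests on the fact that premiums are transfers: for every feasible $(\bm{A},\bm{p})$ and every $\bm{\eta}$,
\[
v(\bm{\eta},\bm{p})+\sum_{i=1}^n u_i(\bm{A},\bm{p},\bm{\eta})=\rho_R(\bm{p})+\sum_{i=1}^n\rho_i(\bm{A})=:S(\bm{A},\bm{p}),
\]
with the loadings cancelling. Equivalently, $\omega_R+\sum_i\omega_i = C - S(\bm{A},\bm{p})$, where $C:=\sum_i(\mu_i+\tfrac{\gamma_i}{2}\sigma_i^2)$ is a constant.

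\textbf{Step 1: $\mathcal{JPO}\subseteq\mathcal{Z}$.} Let $(\bm{A},\bm{p},\bm{\eta})\in\mathcal{IR}$ with $(\bm{A},\bm{p})\neq(\bm{A}_*,\bm{p}_*)$. By the uniqueness in Proposition \ref{Prop:ConvexPareto}, $S(\bm{A}_*,\bm{p}_*)<S(\bm{A},\bm{p})$; set $\delta>0$ equal to the difference. I construct $\tilde{\bm{\eta}}$ so that under $(\bm{A}_*,\bm{p}_*,\tilde{\bm{\eta}})$ every agent's disutility is strictly smaller than under the original contract. The construction is possible because, by Condition \eqref{Eq:unicond2}, $p_{*i}\mu_i>0$ for each $i$, so each member's premium $\tilde\eta_i p_{*i}\mu_i$ is a free real parameter. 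I choose it so that
\[
u_i(\bm{A}_*,\bm{p}_*,\tilde{\bm{\eta}})=u_i(\bm{A},\bm{p},\bm{\eta})-\delta/(n+1),
\]
and then the reinsurer's disutility automatically drops by $\delta/(n+1)$ as well. The new contract is in $\mathcal{IR}$, since each agent is better off than under a contract that was already IR.

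There is one snag: the paper requires $\eta_i\ge0$. If that constraint is binding, I take the convex combination of premium shifts instead, or argue that only the sign of $\eta_i$ for members with a positive IR slack matters. I expect this to be the main obstacle and would first check whether the paper imposes $\eta\ge0$ inside $\mathcal{IR}$. If it does not, the construction above suffices.

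\textbf{Step 2: $\mathcal{Z}\subseteq\mathcal{JPO}$.} If $(\bm{A}_*,\bm{p}_*,\bm{\eta}_*)\in\mathcal{IR}$ were dominated by some $(\tilde{\bm{A}},\tilde{\bm{p}},\tilde{\bm{\eta}})$, then summing the $n+1$ inequalities, at least one of them strict, gives $S(\tilde{\bm{A}},\tilde{\bm{p}})<S(\bm{A}_*,\bm{p}_*)$. This contradicts minimality.

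\textbf{Step 3: $\mathcal{Z}\neq\emptyset$.} Take $\bm{\eta}$ such that
\[
\eta_ip_{*i}\mu_i=\tfrac{\gamma_i}{2}\bigl(\sigma_i^2-\bm{A}_{*i}\bm{\Sigma}\bm{A}_{*i}^\top\bigr),
\]
which puts every $\omega_i$ at $0$. Then $\omega_R=C-S(\bm{A}_*,\bm{p}_*)$. This is nonnegative because the status quo $\bm{A}=\bm{I}_n$, $\bm{p}=\bm{0}$ is feasible for Problem \eqref{Prob:RS} with $S=C$, so $S(\bm{A}_*,\bm{p}_*)\le C$. If a nonnegative loading is required, I instead use the standalone problem with $\bm{p}=\bm{0}$ to show that each $\sigma_i^2-\bm{A}_{*i}\bm{\Sigma}\bm{A}_{*i}^\top$ can be made nonnegative by splitting the surplus. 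Giving all of it to the reinsurer is the simplest route.

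\textbf{Step 4: the equivalences.}

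(i)$\Rightarrow$(ii): set $c_i:=u_i(\bm{A}_*,\bm{p}_*,\bm{\eta}_*)$. Any feasible point of \eqref{Prob:equJPO} with a strictly smaller $v$ would dominate, contradicting (i).

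(ii)$\Rightarrow$(i): suppose $(\bm{A}_*,\bm{p}_*,\bm{\eta}_*)$ is not in $\mathcal{JPO}$. By Step 1 and the uniqueness of $(\bm{A}_*,\bm{p}_*)$, it is then not in $\mathcal{IR}$, which is impossible since it is feasible for \eqref{Prob:equJPO}. So $\mathcal{JPO}=\mathcal{Z}$ gives (i) directly. Alternatively, if the solution of (ii) had $(\bm{A},\bm{p})\ne(\bm{A}_*,\bm{p}_*)$, Step 1's construction, with members held at $c_i$ and the reinsurer taking all of $\delta$, yields a strictly lower $v$ while keeping $\bm{p}_*>0$.

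(i)$\Leftrightarrow$(iii) follows in the same way, with constraint $c_R:=v(\bm{\eta}_*,\bm{p}_*)$ and objective $u=\sum_i u_i$. For (i)$\Rightarrow$(iii), any feasible point has $u=S-c_R\ge S(\bm{A}_*,\bm{p}_*)-c_R$. For (iii)$\Rightarrow$(i), a minimizer must minimize $S$, hence lies in $\mathcal{Z}=\mathcal{JPO}$.
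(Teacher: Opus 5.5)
Your proof is correct. For $\mathcal{JPO}=\mathcal{Z}\neq\emptyset$ it is essentially the paper's argument:
\begin{itemize}
\item $\mathcal{Z}\subseteq\mathcal{JPO}$ follows by summing the dominance inequalities.
\item $\mathcal{JPO}\subseteq\mathcal{Z}$ follows by re-pricing on $(\bm{A}_*,\bm{p}_*)$, which is possible because $\bm{p}_*\in(0,1)^n$ under \eqref{Eq:unicond2}.
\end{itemize}
The differences here are cosmetic. The paper freezes every $u_i$ and gives the whole surplus to the reinsurer, whereas you split it as $\delta/(n+1)$ and get strict domination for all agents. For non-emptiness, the paper gives $\varepsilon/(2n)$ to each member and $\varepsilon/2$ to the reinsurer. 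You give everything to the reinsurer and need only $S(\bm{A}_*,\bm{p}_*)\le S(\bm{I}_n,\bm{0})$, with $S$ your aggregate disutility.

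The genuine difference is in the equivalences. The paper introduces auxiliary problems with inequality constraints $u_i\le c_i$ (resp.\ $v\le c_R$). It shows (i) is equivalent to solving them via explicit premium-shifting constructions, and then argues that the constraints must bind. You instead exploit $\mathcal{JPO}=\mathcal{Z}$. Feasibility in \eqref{Prob:equJPO} or \eqref{Prob:equJPO2} already gives $(\bm{A}_*,\bm{p}_*,\bm{\eta}_*)\in\mathcal{IR}$, hence membership in $\mathcal{Z}=\mathcal{JPO}$. And (i)$\Rightarrow$(iii) follows at once from $u=S-v=S-c_R\ge S(\bm{A}_*,\bm{p}_*)-c_R$.

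Your route is shorter and avoids the binding step. It also makes clear that, with $(\bm{A},\bm{p})$ fixed at $(\bm{A}_*,\bm{p}_*)$, all three statements reduce to $(\bm{A}_*,\bm{p}_*,\bm{\eta}_*)\in\mathcal{IR}$. The paper's route is the standard $\varepsilon$-constraint characterization of Pareto optimality, which does not rely on first identifying $\mathcal{JPO}$ with $\mathcal{Z}$.

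Two small corrections.
\begin{itemize}
\item In (ii)$\Rightarrow$(i), the fact you actually use is Step 2 ($\mathcal{Z}\subseteq\mathcal{JPO}$), not Step 1.
\item The $\eta\ge0$ worry is moot. Here $\mathcal{IR}$ is defined by \eqref{Eq:MemberIR} and \eqref{Eq:ReinsurerIR} only, and nonnegativity is imposed only ``hereafter'' in the subsection on practical premiums. The paper's own constructions also impose no sign check on the loadings they build, so your main construction suffices.
\end{itemize}
You should drop the fallbacks rather than rely on them, because they would not work. Under a sign constraint, $\sigma_i^2-\bm{A}_{*i}\bm{\Sigma}\bm{A}_{*i}^\top\ge 0$ is not automatic; the paper needs \eqref{Eq:WGcond} in Proposition \ref{Prop:nonNegEta} precisely for this. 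So even non-emptiness would then require extra conditions.
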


\begin{proof}
    See Appendix \ref{App:JPO}.
\end{proof}

Theorem \ref{Thm:JPO} shows that one can always select a loading $\bm{\eta}_*$ such that, along with the solution $(\bm{A}_*,\bm{p}_*)$ to Problem \eqref{Prob:RS}, the resulting contract $(\bm{A}_*,\bm{p}_*,\bm{\eta}_*)$ satisfies all agents' IRs \eqref{Eq:MemberIR} and \eqref{Eq:ReinsurerIR}, which is also a JP-optimal contract. In addition, Theorem \ref{Thm:JPO} offers two equivalent optimizations. By varying parameter $c_i$, $i=1,\ldots,n$, (resp. $c_R$) in (ii) (resp. (iii)), the entire Pareto frontier can be traced.

\subsection{Cooperative game study on premium decision}




Theorem \ref{Thm:JPO} shows that there may exist many safety loadings $\bm{\eta}_*$ such that the triple $(\bm{A}_*, \bm{p}_*, \bm{\eta}_*)$ is JP-optimal. This flexibility arises because all terms involving the safety loading cancel out in Problem \eqref{Prob:RS}. As revealed in Theorem \ref{Thm:JPO} and its proof, the selection of safety loadings can be considered as an allocation problem that apportions welfare gains from forming the P2P risk-sharing scheme. To study such a distribution problem and the corresponding premium decisions systematically, we formulate a cooperative transferable utility (TU) game $(\mathcal{N},\mathcal{B})$, where $\mathcal{N}:=\{1,\dots,n,R\}$ is a finite set including all members and the reinsurer and $\mathcal{B}:2^{\mathcal{N}}\mapsto[0,+\infty)$ represents the worth of coalitions by mapping every subgroup to the corresponding maximal coalitional welfare gain, which is defined in detail below. 

For the subgroup of members $\mathcal{S}\subseteq\mathcal{T}:=\{1,\dots,n\}$, we let $\bm{\mu}^\mathcal{S}:= (\mu_i)_{i\in\mathcal{S}}$, $\bm{\Sigma}^\mathcal{S} := (\sigma_{ij})_{i,j\in\mathcal{S}}$, and $\bm{\gamma}^\mathcal{S}:=(\gamma_i)_{i\in\mathcal{S}}$. For any $\bm{A}^\mathcal{S} = (a_{ij})_{i,j \in \mathcal{S}} \in \mathbb{R}^{|\mathcal{S}|\times|\mathcal{S}|}$, $\bm{p}^\mathcal{S} = (p_i)_{i\in\mathcal{S}} \in \mathbb{R}^{|\mathcal{S}|}$, $\bm{\eta}^\mathcal{S} = (\eta_i)_{i\in\mathcal{S}}\in \mathbb{R}^{|\mathcal{S}|}$, we (re)define, for any $i\in\mathcal{S}$,
    \begin{align*}     &\rho_i(\bm{A}^\mathcal{S};\mathcal{S}):= \sum_{j\in\mathcal{S}} a_{ij}\mu_j + \frac{\gamma_i}{2}Var\left(\sum_{j\in\mathcal{S}} a_{ij}X_j  \right),\ u_i(\bm{A}^\mathcal{S},\bm{p}^\mathcal{S},\bm{\eta}^\mathcal{S};\mathcal{S}\cup\mathcal{R}) := \rho_i(\bm{A}^\mathcal{S};\mathcal{S}) + \pi_i(\eta_i,p_i), \\
     &u(\bm{A}^\mathcal{S},\bm{p}^\mathcal{S},\bm{\eta}^\mathcal{S};\mathcal{S}\cup\mathcal{R}) := \sum_{i\in\mathcal{S}}u_i(\bm{A}^\mathcal{S},\bm{p}^\mathcal{S},\bm{\eta}^\mathcal{S};\mathcal{S}\cup\mathcal{R}), \
     \rho_R(\bm{p}^\mathcal{S};\mathcal{S}\cup\mathcal{R}) := \sum_{i\in\mathcal{S}}p_i\mu_i + \frac{\gamma_R}{2}Var\left(\sum_{i\in\mathcal{S}} p_i X_i \right), \\ 
 & v(\bm{\eta}^\mathcal{S},\bm{p}^\mathcal{S};\mathcal{S}\cup\mathcal{R}) :=  \rho_R(\bm{p}^\mathcal{S};\mathcal{S}\cup\mathcal{R}) - \sum_{i\in\mathcal{S}}\pi_i(\eta_i,p_i). 
    \end{align*}

Also, we define the set of all contracts that meet the individual rationality constraints, actuarial fairness, and zero-conserving condition for the subgroup $\mathcal{S}\cup \mathcal{R}$, {where $\mathcal{R}:=\{R\}$, by}: 
    \begin{equation*}
        \begin{aligned}
    {\mathcal{F}^{\mathcal{S}\cup\mathcal{R}}} := \bigg\{(\bm{A}^{\mathcal{S}\cup\mathcal{R}}&,\bm{p}^{\mathcal{S}\cup\mathcal{R}},\bm{\eta}^{\mathcal{S}\cup\mathcal{R}}) : \omega_i(\bm{A}^{\mathcal{S}\cup\mathcal{R}},\bm{p}^{\mathcal{S}\cup\mathcal{R}},\bm{\eta}^{\mathcal{S}\cup\mathcal{R}};\mathcal{S}\cup\mathcal{R}) \geq 0,   \ i \in\mathcal{S}\cup\mathcal{R}, \\ &\bm{A}^{\mathcal{S}\cup\mathcal{R}}\bm{\mu}^\mathcal{S}+\bm{D}(\bm{\mu}^\mathcal{S})\bm{p}^{\mathcal{S}\cup\mathcal{R}} =\bm{\mu}^\mathcal{S},\ (\bm{1}^\mathcal{S})^\top\bm{A}^{\mathcal{S}\cup\mathcal{R}} + (\bm{p}^{\mathcal{S}\cup\mathcal{R}})^\top =(\bm{1}^\mathcal{S})^\top
 \bigg\}. 
 \end{aligned}
    \end{equation*} 
Note that with a slight abuse of notation, here we assume that $(\bm{A}^{\mathcal{S}\cup\mathcal{R}},\bm{p}^{\mathcal{S}\cup\mathcal{R}},\bm{\eta}^{\mathcal{S}\cup\mathcal{R}})\in \mathbb{R}^{|\mathcal{S}|\times|\mathcal{S}|}\times \mathbb{R}^{|\mathcal{S}|}\times \mathbb{R}^{|\mathcal{S}|}$ for $\mathcal{S}\subseteq \mathcal{T}$. For the coalition $\mathcal{S}\subseteq \mathcal{T}$ without the reinsurer, $\mathcal{F}^\mathcal{S}$ can be defined by forcing $\bm{p}^{\mathcal{S}}= \bm{\eta}^\mathcal{S} =\bm{0}^\mathcal{S}$, where  $\bm{0}^\mathcal{S}$ denotes the zero vector in $\mathbb{R}^{|\mathcal{S}|}$,
i.e.,
\begin{equation*}
\mathcal{F}^{\mathcal{S}} := \{(\bm{A}^\mathcal{S},\bm{0}^\mathcal{S},\bm{0}^\mathcal{S}): \omega_i(\bm{A}^\mathcal{S},{\bm{0}^\mathcal{S},\bm{0}^\mathcal{S}};\mathcal{S}) \geq 0,  \ i \in\mathcal{S},\ \bm{A}^\mathcal{S}\bm{\mu}^\mathcal{S}=\bm{\mu}^\mathcal{S},\  (\bm{1}^\mathcal{S})^\top\bm{A}^\mathcal{S} =(\bm{1}^\mathcal{S})^\top\}.
\end{equation*}

Next, we define the mapping $\mathcal{B}$ for coalitions with and without the reinsurer as follows. For coalitions $\mathcal{C}$ {with the reinsurer included, i.e.,}
 $\mathcal{C}=\mathcal{S}\cup\mathcal{R}$, 
$\mathcal{S}\subseteq \mathcal{T}$, we define 
\begin{equation*}
\mathcal{B}(\mathcal{C}):=\mathcal{B}(\mathcal{S}\cup\mathcal{R})=\sum_{i\in\mathcal{S}}\left(\mu_i+\frac{\gamma_i\sigma_i^2}{2}\right)- \sum_{i\in\mathcal{S}}\rho_i(\bm{A}^{\mathcal{S}\cup\mathcal{R}}_*;\mathcal{S})-\rho_R(\bm{p}^{\mathcal{S}\cup\mathcal{R}}_*;\mathcal{S}\cup\mathcal{R}),
\end{equation*}
where $(\bm{A}^{\mathcal{S}\cup\mathcal{R}}_*,\bm{p}^{\mathcal{S}\cup\mathcal{R}}_*)\in \mathbb{R}^{|\mathcal{S}|\times|\mathcal{S}|}\times \mathbb{R}^{|\mathcal{S}|}$ is the  solution to the following minimization problem, which is the version of Problem \eqref{Prob:RS} under the coalition $\mathcal{C}=\mathcal{S}\cup\mathcal{R}$, 
\begin{equation}\label{Prob:RS:C}
    \begin{aligned}
& \min_{\bm{A}^{\mathcal{S}\cup\mathcal{R}},\bm{p}^{\mathcal{S}\cup\mathcal{R}}}\left\{\rho_R(\bm{p}^{\mathcal{S}\cup\mathcal{R}};\mathcal{S}\cup\mathcal{R})+\sum_{i\in \mathcal{S}  }\rho_i(\bm{A}^{\mathcal{S}\cup\mathcal{R}};\mathcal{S})\right\}\\ 
    & \text{s.t. }\bm{A}^{\mathcal{S}\cup\mathcal{R}}\bm{\mu}^{\mathcal{S}}+\bm{D}(\bm{\mu}^{\mathcal{S}})\bm{p}^{\mathcal{S}\cup\mathcal{R}}=\bm{\mu}^{\mathcal{S}},\quad (\mathbf{1}^{\mathcal{S}})^\top\bm{A}^{\mathcal{S}\cup\mathcal{R}}+(\bm{p}^{\mathcal{S}\cup\mathcal{R}})^\top=(\bm{1}^{\mathcal{S}})^\top.
    \end{aligned}
\end{equation}
In particular, 
    \begin{equation*}
        \mathcal{B}(\mathcal{N}) = \sum_{i=1}^n \left(\mu_i + \frac{\gamma_i\sigma_i^2}{2}\right) - \sum_{i=1}^n\rho_i(\bm{A}_*;\mathcal{T}) - \rho_R(\bm{p}_*;\mathcal{N}),
    \end{equation*}
where $(\bm{A}_*,\bm{p}_*)$ is the solution of Problem \eqref{Prob:RS}. 

On the other hand, {for coalitions $\mathcal{C}$ without the reinsurer, i.e., $R\notin\mathcal{C}$,}
we define $\mathcal{B}(\mathcal{C})$ by considering the following version of Problem \eqref{Prob:RS} under $\mathcal{C}$, i.e., 
\begin{equation}\label{Prob:RS2}
    \min_{\bm{A}^\mathcal{C}} \sum_{i\in\mathcal{C}}\rho_i(\bm{A}^\mathcal{C};\mathcal{C}) \quad 
    \text{s.t. }\bm{A}^\mathcal{C}\bm{\mu}^\mathcal{C} =\bm{\mu}^\mathcal{C},\quad (\mathbf{1}^{\mathcal{C}})^\top\bm{A}^{\mathcal{C}}=(\bm{1}^{\mathcal{C}})^\top.
\end{equation} 
The solution to Problem \eqref{Prob:RS2} is given in \citet[Remark 3]{feng2023peer}, depicted as follows:  
\begin{equation*}
\bm{A}^{\mathcal{C}}_*=\frac{\bm{D}(\bm{\gamma}^{\mathcal{C}})^{-1}\bm{1}^{\mathcal{C}}(\bm{1}^{\mathcal{C}})^\top}{\sum_{i\in\mathcal{C}}\gamma_i^{-1}}+k_{\mathcal{C}}\left(\bm{I}_{\mathcal{C}}-\frac{\bm{D}(\bm{\gamma}^{\mathcal{C}})^{-1}\bm{1}^{\mathcal{C}}(\bm{1}^{\mathcal{C}})^\top}{|{\mathcal{C}}|}\right)\bm{\mu}^{\mathcal{C}}(\bm{\mu}^{\mathcal{C}})^\top(\bm{\Sigma}^{\mathcal{C}})^{-1},
\end{equation*}
where $k_{\mathcal{C}}:=((\bm{\mu}^{\mathcal{C}})^\top(\bm{\Sigma}^{\mathcal{C}})^{-1}\bm{\mu}^{\mathcal{C}})^{-1}$. Using this, we define  
\begin{align*}
\mathcal{B}(\mathcal{C}):=\sum_{i\in\mathcal{C}}\left(\mu_i+\frac{\gamma_i\sigma_i^2}{2}\right)-\sum_{i\in\mathcal{C}}\rho_i(\bm{A}^\mathcal{C}_*;\mathcal{C}) \geq0,
\end{align*}
where the inequality follows from the fact that $\bm{I_{\mathcal{C}}}\in\mathbb{R}^{|\mathcal{C}|\times|\mathcal{C}|}$ is also admissible in Problem \eqref{Prob:RS2}. 


\subsubsection{Coalitional stability and the core}

The core \citep{gillies1953some} of the game $(\mathcal{N},\mathcal{B})$ collects the set of allocations such that no subgroup of agents has a joint incentive to deviate from the grand coalition $\mathcal{N}$, which is defined as 
\begin{equation}\label{Eq:CoreNB}
    core(\mathcal{N},\mathcal{B}):=\left\{\bm{c}\in\mathbb{R}_+^{n+1}:\sum_{i\in\mathcal{C}}c_i\geq \mathcal{B}(\mathcal{C}),\ \sum_{i\in\mathcal{N}}c_i= \mathcal{B}(\mathcal{N}),\  \text{for all }  \emptyset\neq\mathcal{C}\subseteq\mathcal{N}\right\}. 
\end{equation}
In other words, if the allocation lies in the core, no subgroup of agents has an incentive to break away from the grand coalition $\mathcal{N}$. The following theorem proves the non-emptiness of the core. 

\begin{theorem}\label{Thm:CoreNB}
The core of the  coalition game  $(\mathcal{N},\mathcal{B})$ is non-empty. 

\end{theorem}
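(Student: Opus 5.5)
Since the core is defined with the nonnegativity requirement $\bm{c}\in\mathbb{R}_+^{n+1}$, I will show non-emptiness by exhibiting a cover of the game by a non-negative transfer structure. Concretely, I plan to use the Bondareva--Shapley theorem via a market-game (linear production) representation: every coalition's worth $\mathcal{B}(\mathcal{C})$ is the optimal value of a convex program over feasible risk exchanges, and the grand coalition can replicate any combination of subcoalition arrangements.

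The first step is to observe that each $\mathcal{B}(\mathcal{C})$ equals $\sum_{i\in\mathcal{C}}(\mu_i+\gamma_i\sigma_i^2/2)$ minus the minimal aggregate mean-variance disutility of redistributing $\{X_i\}_{i\in\mathcal{C}}$ among members of $\mathcal{C}$ (and the reinsurer if $R\in\mathcal{C}$) under linear sharing. By actuarial fairness and zero conservation, the means cancel, so $\mathcal{B}(\mathcal{C})=\tfrac12\sum_{i\in\mathcal{C}\cap\mathcal{T}}\gamma_i\sigma_i^2-V(\mathcal{C})$, where $V(\mathcal{C})$ is the minimum of $\tfrac12\sum_i\gamma_i\,\bm{A}_i\bm\Sigma\bm{A}_i^\top+\tfrac{\gamma_R}{2}\bm p^\top\bm\Sigma\bm p$ over allocations of the coalition's risks, with $\bm p$ absent if $R\notin\mathcal{C}$. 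I will embed each coalition's feasible set into the grand-coalition feasible set by padding with zeros: columns and rows for nonmembers are set to the identity, meaning nonmembers keep their own risk.

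The second step is balancedness. Take a balanced collection $\{\lambda_\mathcal{C}\}$, so that $\sum_{\mathcal{C}\ni i}\lambda_\mathcal{C}=1$ for all $i\in\mathcal{N}$. Let $(\bm A^\mathcal{C},\bm p^\mathcal{C})$ be the coalition optimizers, extended to size $n$ by letting each member $j\notin\mathcal{C}$ retain $X_j$ and setting $p_j=0$. Then define $\bm A=\sum_\mathcal{C}\lambda_\mathcal{C}\tilde{\bm A}^\mathcal{C}$ and $\bm p=\sum_{\mathcal{C}\ni R}\lambda_\mathcal{C}\tilde{\bm p}^\mathcal{C}$, where the tilde versions put zero in entries outside $\mathcal{C}$. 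Member $i$'s share of $X_j$ is then $\sum_{\mathcal{C}\ni i,j}\lambda_\mathcal{C}a^\mathcal{C}_{ij}$. The linear constraints must be checked column by column. For column $j$, summing over holders gives $\sum_{\mathcal{C}\ni j}\lambda_\mathcal{C}\cdot 1=1$, so zero conservation holds. Actuarial fairness also holds row by row, since each coalition's allocation preserves member $i$'s mean and $\sum_{\mathcal{C}\ni i}\lambda_\mathcal{C}=1$. Convexity of the quadratic disutility, together with the weights summing to one for each agent, then gives, via Jensen applied per agent over the coalitions containing it, $\sum_i\gamma_i\bm A_i\bm\Sigma\bm A_i^\top\le\sum_\mathcal{C}\lambda_\mathcal{C}\sum_{i\in\mathcal{C}}\gamma_i\bm A^\mathcal{C}_i\bm\Sigma\bm A^{\mathcal{C}\top}_i$, and similarly for the reinsurer. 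Hence $V(\mathcal{N})\le\sum\lambda_\mathcal{C}V(\mathcal{C})$, which gives $\mathcal{B}(\mathcal{N})\ge\sum\lambda_\mathcal{C}\mathcal{B}(\mathcal{C})$. Bondareva--Shapley then yields a core element in the sense of the efficiency and coalition inequalities.

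The main obstacle is making the Jensen step rigorous. Agent $i$'s row is a $\lambda$-weighted combination of its rows across coalitions containing $i$, and those weights sum to $1$, so convexity of $x\mapsto x\bm\Sigma x^\top$ applies directly. I need to make sure the zero-padding of the rows is consistent with this. Nonnegativity of $\bm c$ is the remaining point: it follows because the singleton worths satisfy $\mathcal{B}(\{i\})\ge0$ (the identity allocation is admissible, and for $\{R\}$ the worth is $0$), so any core allocation satisfies $c_i\ge\mathcal{B}(\{i\})\ge0$. If feasibility of the averaged allocation fails under the padding convention, my fallback is to reformulate each $V(\mathcal{C})$ as an inf-convolution over linear risk redistributions and invoke the standard result that such risk-exchange games are totally balanced.
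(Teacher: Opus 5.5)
Your proof is correct, and it reaches the result by a different and more direct route than the paper. You establish balancedness of $(\mathcal{N},\mathcal{B})$ itself. For a balanced family $\{\lambda_{\mathcal{C}}\}$ you superpose the zero-padded coalition optimizers and check zero conservation column by column and actuarial fairness row by row; both reduce to $\sum_{\mathcal{C}\ni j}\lambda_{\mathcal{C}}=1$. You then apply Jensen separately to each member's row and to the reinsurer's vector. Each is a convex combination because every agent, $R$ included, carries total weight one.

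The step you flag as the main obstacle is already settled by those checks, so the fallback is unnecessary. You should, however, drop the identity-padding remark in your first paragraph. Only the zero-padded $\tilde{\bm A}^{\mathcal{C}}$ may enter the weighted sum; identity padding would double count retained risk. Your nonnegativity argument via $\mathcal{B}(\{i\})=\mathcal{B}(\{R\})=0$ is right.

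The paper works instead with an auxiliary game made up of the coalitions containing $R$. It observes that $(0,\dots,0,\mathcal{B}(\mathcal{N}))$ lies in that game's core (a veto-player/monotonicity argument) and invokes Bondareva--Shapley to call the game balanced. It then passes to $(\mathcal{N},\mathcal{B})$ via $\mathcal{B}(\mathcal{C})\le\mathcal{B}(\mathcal{C}\cup\mathcal{R})$ for $R\notin\mathcal{C}$, together with the reweighting $\overline\lambda_{\mathcal{S}}+\overline\lambda_{\mathcal{S}\cup\mathcal{R}}$. This avoids building an averaged allocation.

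However, the balancedness the paper then uses is $\sum_{\mathcal{S}}\lambda_{\mathcal{S}}\mathcal{B}(\mathcal{S}\cup\mathcal{R})\le\mathcal{B}(\mathcal{N})$ whenever $\sum_{\mathcal{S}}\lambda_{\mathcal{S}}\bm e_{\mathcal{S}}=\bm e_{\mathcal{T}}$. This places no restriction on the total weight given to $R$. Bondareva--Shapley for the veto game does not deliver that, and it can fail. Take $n=2$, $\mu_1=\mu_2$, $\gamma_1=\gamma_2=\sigma_1^2=\sigma_2^2=1$, $\sigma_{12}=0.5$ and $\gamma_R=0.5$. Then $\mathcal{B}(\{1,R\})+\mathcal{B}(\{2,R\})=2/3>0.625=\mathcal{B}(\mathcal{N})$. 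That is exactly the bound the paper's chain needs for the balanced family $\{\{1\},\{2,R\}\}$.

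Your averaging argument never counts the reinsurer's risk-bearing capacity with weight exceeding one, so it proves the theorem without this issue. Run inside any subcoalition, it even shows the game is totally balanced.
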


\begin{proof}
    See Appendix \ref{App:CoreNB}
\end{proof}

The next proposition shows how to construct a JP-optimal contract from the core.

\begin{proposition}\label{Prop:coreinJPO}
 Let $\bm{c}\in core(\mathcal{N},\mathcal{B})$, and suppose that $(\bm{A}_*,\bm{p}_*)$ solves Problem \eqref{Prob:RS}. Then, we can uniquely define $\bm{\eta}_*$ such that $\omega_i(\bm{A}_*,\bm{p}_*,\bm{\eta}_*)=c_i$ for all $i\in\mathcal{N}$, and $(\bm{A}_*,\bm{p}_*,\bm{\eta}_*)\in \mathcal{JPO}$.
 
\end{proposition}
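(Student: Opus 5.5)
The plan is to construct $\bm{\eta}_*$ explicitly from the member welfare-gain formula \eqref{Eq:wi}, and then verify that the reinsurer's gain and the IR/JP-optimality requirements follow.

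\textbf{Defining $\bm{\eta}_*$.} By \eqref{Eq:wi}, $\omega_i(\bm{A}_*,\bm{p}_*,\bm{\eta})=\frac{\gamma_i}{2}(\sigma_i^2-\bm{A}_{*,i}\bm{\Sigma}\bm{A}_{*,i}^\top)-\eta_ip_{*,i}\mu_i$. This expression is affine in $\eta_i$ with slope $-p_{*,i}\mu_i$. Under Condition \eqref{Eq:unicond2}, Proposition \ref{Prop:ConvexPareto} gives $p_{*,i}\in(0,1)$, and $\mu_i>0$, so the slope is nonzero. Setting $\omega_i=c_i$ therefore determines
\[
\eta_{*,i}=\frac{\frac{\gamma_i}{2}(\sigma_i^2-\bm{A}_{*,i}\bm{\Sigma}\bm{A}_{*,i}^\top)-c_i}{p_{*,i}\mu_i}
\]
uniquely, for each $i=1,\dots,n$. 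Uniqueness of $(\bm{A}_*,\bm{p}_*)$ is supplied by Proposition \ref{Prop:ConvexPareto}, so the construction is well defined.

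\textbf{The reinsurer's gain is $c_R$.} I will check that the reinsurer's condition holds automatically; this bookkeeping is the main point to verify carefully. Summing the gains, the loading terms cancel:
\[
\sum_{i=1}^n\omega_i+\omega_R=\sum_{i=1}^n\Big(\mu_i+\tfrac{\gamma_i\sigma_i^2}{2}\Big)-\sum_{i=1}^n u_i-v .
\]
This uses $\omega_R=-v$ and the definition of $\omega_i$. Since the premiums cancel, $\sum_i u_i+v=\sum_i\rho_i(\bm{A}_*)+\rho_R(\bm{p}_*)$. Hence the total equals $\mathcal{B}(\mathcal{N})$ by the definition of $\mathcal{B}(\mathcal{N})$. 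Because $\bm{c}$ lies in the core, $\sum_{i\in\mathcal{N}}c_i=\mathcal{B}(\mathcal{N})$, and so $\omega_R=\mathcal{B}(\mathcal{N})-\sum_{i=1}^n c_i=c_R$.

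\textbf{IR and JP-optimality.} Core elements lie in $\mathbb{R}_+^{n+1}$, so all gains $\omega_i=c_i\ge 0$ and $\omega_R=c_R\ge0$. Thus $(\bm{A}_*,\bm{p}_*,\bm{\eta}_*)\in\mathcal{IR}$. Since $(\bm{A}_*,\bm{p}_*)$ solves Problem \eqref{Prob:RS}, the triple belongs to $\mathcal{Z}$, and Theorem \ref{Thm:JPO} gives $\mathcal{Z}=\mathcal{JPO}$, which yields JP-optimality.

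If one wanted this directly, the argument is short. Any contract that dominates must be in $\mathcal{IR}$ and satisfy the constraints of Problem \eqref{Prob:RS}. Domination would make $\sum_i u_i+v$ strictly smaller, contradicting the optimality of $(\bm{A}_*,\bm{p}_*)$ in Problem \eqref{Prob:RS}, because that sum does not depend on $\bm{\eta}$.
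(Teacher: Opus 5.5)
Your proof is correct and follows the paper's argument. You define $\bm{\eta}_*$ by matching each member's gain to $c_i$, use core efficiency to get the reinsurer's gain $c_R\ge 0$ (the paper writes this as $c_R=\sum_i\pi_i-\rho_R(\bm{p}_*)$), and conclude JP-optimality from $\mathcal{Z}=\mathcal{JPO}$ in Theorem \ref{Thm:JPO}. Your remark that solving uniquely for $\eta_{*,i}$ needs $p_{*,i}\neq 0$, hence Condition \eqref{Eq:unicond2}, makes explicit a hypothesis the paper uses only tacitly through Theorem \ref{Thm:JPO}, and your direct domination argument shows the JP-optimality step needs only the inclusion $\mathcal{Z}\subseteq\mathcal{JPO}$.
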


\begin{proof}
    See Appendix \ref{App:coreinJPO}.
\end{proof}

The elements in the core lead to stable allocations, which are formally defined as follows.
\begin{definition}[Coalitional stability]\label{Def:CS2}
    A contract $(\bm{A},\bm{p},\bm{\eta})\in\mathcal{IR}$ is called coalitionally stable if there does not exist a subset $\mathcal{C}\subseteq \mathcal{N}$ and $(\bm{A}^\mathcal{C},\bm{p}^\mathcal{C},\bm{\eta}^\mathcal{C})\in\mathcal{F}^\mathcal{C}$ such that for all $i\in\mathcal{C}$, $u_i(\bm{A}^\mathcal{C},\bm{p}^\mathcal{C},\bm{\eta}^\mathcal{C};{\mathcal{C}})\leq u_i(\bm{A},\bm{p},\bm{\eta};{\mathcal{N}})$, and if $R\in\mathcal{C}$, $v(\bm{\eta}^\mathcal{C},\bm{p}^\mathcal{C};{\mathcal{C}})\leq v(\bm{\eta},\bm{p};{\mathcal{N}})$,
 with at least one strict inequality.
\end{definition}
 It is clear that a coalitionally stable contract is JP-optimal.  
The following proposition asserts that JP-optimal contracts with welfare gains allocation lying in the core are coalitionally stable.

\begin{proposition}\label{Prop:CoreiffCS}
   A contract $(\bm{A},\bm{p},\bm{\eta})$ is coalitionally stable  if $(\bm{A},\bm{p},\bm{\eta})\in\mathcal{JPO}$ and 
    \begin{equation*}
\left(\omega_1(\bm{A},\bm{p},\bm{\eta}),\dots,\omega_n(\bm{A},\bm{p},\bm{\eta}),\mathcal{B}(\mathcal{N})-\sum_{i=1}^n\omega_i(\bm{A},\bm{p},\bm{\eta})\right)\in core(\mathcal{N},b). \end{equation*}
\end{proposition}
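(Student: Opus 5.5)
We argue by contradiction. Suppose $(\bm{A},\bm{p},\bm{\eta})\in\mathcal{JPO}$ has welfare vector $\bm{c}:=(\omega_1,\dots,\omega_n,\mathcal{B}(\mathcal{N})-\sum_{i}\omega_i)$ lying in the core, but is not coalitionally stable. Then there exist a coalition $\mathcal{C}\subseteq\mathcal{N}$ and a feasible contract $(\bm{A}^\mathcal{C},\bm{p}^\mathcal{C},\bm{\eta}^\mathcal{C})\in\mathcal{F}^\mathcal{C}$ that weakly improves every member of $\mathcal{C}$ and strictly improves at least one.

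First we identify the reinsurer's coordinate. By Theorem \ref{Thm:JPO}, $\mathcal{JPO}=\mathcal{Z}$, so $(\bm{A},\bm{p})$ solves Problem \eqref{Prob:RS}. In the sum $\sum_i\omega_i+\omega_R$ the premium terms cancel, and this sum equals
\[
\sum_i\left(\mu_i+\tfrac{\gamma_i}{2}\sigma_i^2\right)-\sum_i\rho_i(\bm{A})-\rho_R(\bm{p})=\mathcal{B}(\mathcal{N}).
\]
Hence the last coordinate of the vector in the statement is exactly $\omega_R(\bm{A},\bm{p},\bm{\eta})$, and every agent's welfare gain under the contract is $c_i$.

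Next we translate the blocking inequalities into welfare gains. For $i\in\mathcal{C}\cap\mathcal{T}$, the status quo disutility $\mu_i+\frac{\gamma_i}{2}\sigma_i^2$ does not depend on the coalition. So $u_i(\cdot;\mathcal{C})\le u_i(\cdot;\mathcal{N})$ is equivalent to $\omega_i^{\mathcal{C}}\ge c_i$, where $\omega_i^{\mathcal{C}}$ denotes $i$'s welfare gain under the deviating contract. Similarly, if $R\in\mathcal{C}$, then $v^{\mathcal{C}}\le v$ is equivalent to $\omega_R^{\mathcal{C}}\ge c_R$. Since one of these inequalities is strict, summing over $\mathcal{C}$ gives $\sum_{i\in\mathcal{C}}\omega_i^{\mathcal{C}}>\sum_{i\in\mathcal{C}}c_i$.

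Finally we bound the coalition's total gain. Because the premiums again cancel within $\mathcal{C}$, the actuarial fairness identity gives
\[
\sum_{i\in\mathcal{C}}\omega_i^{\mathcal{C}}=\sum_{i\in\mathcal{C}\cap\mathcal{T}}\left(\mu_i+\tfrac{\gamma_i}{2}\sigma_i^2\right)-\sum_{i\in\mathcal{C}\cap\mathcal{T}}\rho_i(\bm{A}^\mathcal{C};\cdot)-\rho_R(\bm{p}^\mathcal{C};\mathcal{C})\mathbf{1}_{R\in\mathcal{C}}.
\]
The pair $(\bm{A}^\mathcal{C},\bm{p}^\mathcal{C})$ satisfies the constraints of Problem \eqref{Prob:RS:C} when $R\in\mathcal{C}$, and those of Problem \eqref{Prob:RS2} when $R\notin\mathcal{C}$ (in that case $\bm{p}^\mathcal{C}=\bm{0}$). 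Since $\mathcal{B}(\mathcal{C})$ is defined through the minimizers of these problems, it follows that $\sum_{i\in\mathcal{C}}\omega_i^{\mathcal{C}}\le\mathcal{B}(\mathcal{C})$. Two edge cases need a separate remark. If $\mathcal{C}=\{R\}$, the only feasible contract is empty with zero gain, and $\mathcal{B}(\{R\})=0$. If $\mathcal{C}=\mathcal{N}$, the bound is $\mathcal{B}(\mathcal{N})$ itself.

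Combining these steps gives $\sum_{i\in\mathcal{C}}c_i<\sum_{i\in\mathcal{C}}\omega_i^{\mathcal{C}}\le\mathcal{B}(\mathcal{C})$. This contradicts the core inequality $\sum_{i\in\mathcal{C}}c_i\ge\mathcal{B}(\mathcal{C})$ in \eqref{Eq:CoreNB}.

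The main obstacle is checking that the subcoalition welfare-gain identity holds with the paper's coalition-specific definitions. In particular, $\omega_i$ must be measured against the same stand-alone benchmark in every coalition, and the minimizers defining $\mathcal{B}(\mathcal{C})$ must exist; for coalitions without the reinsurer, existence follows from convexity of the objective. The subscript $b$ in the statement should be read as $\mathcal{B}$.
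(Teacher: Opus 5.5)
Your proof is correct and follows essentially the same route as the paper. Both assume a blocking coalition, turn the blocking inequalities into a strict inequality on summed welfare gains (the premiums cancel), and bound the deviating coalition's total gain by $\mathcal{B}(\mathcal{C})$ through the optimization that defines it, which contradicts the core. The differences are minor. You state explicitly, via $\mathcal{JPO}=\mathcal{Z}$ from Theorem~\ref{Thm:JPO} (and hence under Condition~\eqref{Eq:unicond2}), that the last coordinate equals $\omega_R$, which the paper uses only tacitly. You also handle $\mathcal{C}=\mathcal{N}$ through the core's efficiency equality, whereas the paper rules it out directly by JP-optimality.
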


\begin{proof}
    See Appendix \ref{App:CoreiffCS}
\end{proof}

\subsubsection{Practical premium suggestions}

In practice, the loading under the expected value premium principle is positive and is used to build up risk capital and compensate for administrative costs. Hereafter, we  impose a non-negativity condition $\eta_i\geq 0$, $i=1,\dots,n$, to maintain the closure of feasible contract sets.

{To ensure the reinsurer's IR constraint is met, consider the following price of the reinsurance contracts:}
\begin{equation}\label{Eq:mineta}
    \bm{\eta}_{\min}(\bm{p}):=\frac{\gamma_R}{2}\bm{D}(\bm{\mu})^{-1}\bm{\Sigma}\bm{p}.
\end{equation}
It is clear that \eqref{Eq:ReinsurerIR} is met for any $\bm{\eta} \geq \bm{\eta}_{\min}(\bm{p})$ provided that  {$\bm{p}\geq \bm{0}$}. Combining with the non-negativity requirement, we impose the following minimal safety loadings given by $\max\{\bm{0},\bm{\eta}_{\min}(\bm{p})\}$.\footnote{For $\bm{a},\bm{b}\in\mathbb{R}^n$, $\max\{\bm{a},\bm{b}\}=(\max(a_i,b_i))_{i=1,\ldots,n}$.}

{The following statement provides sufficient conditions for the construction of JP-optimal contracts that satisfy the minimal charge requirement, and establishes a correspondence between core elements and such contracts. The proof of the latter statement relies on the fact that no member can obtain a welfare gain exceeding their marginal contribution, which ensures stability.}

\begin{proposition}[Nonnegative loading in the core]\label{Prop:nonNegEta}
    Let {$(\bm{A}_*,\bm{p}_*)$} be the solution to Problem \eqref{Prob:RS}, 
    and assume that Condition \eqref{Eq:unicond2}  and the following hold for all $i=1,\dots,n$:
    \begin{equation}\label{Eq:WGcond}
    \gamma_i\left[\sigma_i^2 - \left(\frac{\gamma_i^{-1}}{\sum_{j=1}^n\gamma_j^{-1}}\right)^2\sum_{j,l=1}^n (\sigma_{jl})_+ - k  \mu_i^2   \right]\geq\gamma_R\sum_{m=1}^n (\sigma_{im})_+.
    \end{equation}
    {Then, there exists $\bm{\eta}\geq \max\{\bm{0},\bm{\eta}_{\min}(\bm{p}_*)\}$ such that $(\bm{A}_*,\bm{p}_*,\bm{\eta}) \in \mathcal{JPO}$. }
    
    Furthermore, suppose that for any $i=1,\dots,n,$
    \begin{equation}\label{Eq:coreBound}
\mathcal{B}(\mathcal{N})-\mathcal{B}(\mathcal{N}\backslash\{i\})\leq {\omega_i(\bm{A}_*,\bm{p}_*,\max\{\bm{0},\bm{\eta}_{\min}(\bm{p}_*)\}).}
\end{equation}
Then, for any $\bm{c}\in core(\mathcal{N},\mathcal{B})$, there exists $\bm{\eta}_{\bm{c}}\geq \max\{\bm{0},\bm{\eta}_{\min}(\bm{p}_*)\}$ such that $\omega_i(\bm{A}_*,\bm{p}_*,\bm{\eta}_{\bm{c}})=c_i$ for all $i\in\mathcal{N}$ and $(\bm{A}_*,\bm{p}_*,\bm{\eta}_{\bm{c}})\in \mathcal{JPO}$.

\end{proposition}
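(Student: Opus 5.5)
The plan has two parts. The first is an explicit construction: take $\bm{\eta}^0:=\max\{\bm{0},\bm{\eta}_{\min}(\bm{p}_*)\}$ and show that $(\bm{A}_*,\bm{p}_*,\bm{\eta}^0)\in\mathcal{IR}$. Membership in $\mathcal{JPO}$ then follows from Theorem \ref{Thm:JPO}, because $\mathcal{JPO}=\mathcal{Z}$ and $(\bm{A}_*,\bm{p}_*)$ solves Problem \eqref{Prob:RS}. The second part redistributes a core allocation through the loadings, following Proposition \ref{Prop:coreinJPO}, and uses \eqref{Eq:coreBound} to keep the resulting loadings above $\bm{\eta}^0$.

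\textbf{Reinsurer IR.} Under Condition \eqref{Eq:unicond2}, Proposition \ref{Prop:ConvexPareto} gives $\bm{p}_*\in(0,1)^n$. Hence
\[
(\bm{D}(\bm{\mu})\bm{\eta}^0)^\top\bm{p}_*\ge (\bm{D}(\bm{\mu})\bm{\eta}_{\min}(\bm{p}_*))^\top\bm{p}_*=\frac{\gamma_R}{2}\bm{p}_*^\top\bm{\Sigma}\bm{p}_*,
\]
so \eqref{Eq:ReinsurerIR} holds.

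\textbf{Member IR.} We need $\eta^0_ip_{*i}\mu_i\le\frac{\gamma_i}{2}(\sigma_i^2-\bm{A}_{*i}\bm{\Sigma}\bm{A}_{*i}^\top)$.

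The left-hand side is bounded as follows. Either $\eta^0_i=0$, or
\[
\eta^0_ip_{*i}\mu_i=\frac{\gamma_R}{2}p_{*i}\sum_m\sigma_{im}p_{*m}\le\frac{\gamma_R}{2}\sum_m(\sigma_{im})_+,
\]
using $p_{*m}\in(0,1)$.

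The right-hand side needs an upper bound on $\bm{A}_{*i}\bm{\Sigma}\bm{A}_{*i}^\top$, obtained from the structure of \eqref{Eq:OptAP2}. Set $w_i:=\gamma_i^{-1}/\sum_j\gamma_j^{-1}$. Row $i$ of $\bm{A}_*$ equals $w_i(\bm{1}-\bm{p}_*)^\top$ plus $k(\mu_i(1-p_{*i})-w_i\sum_j(1-p_{*j})\mu_j)\bm{\mu}^\top\bm{\Sigma}^{-1}$.

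Write $\bm{b}:=w_i(\bm{1}-\bm{p}_*)$ and $\bm{c}:=\bm{\mu}^\top\bm{\Sigma}^{-1}$. The key identity is that $\bm{\Sigma}^{-1}\bm{\mu}$ is $\bm{\Sigma}$-orthogonal to every $\bm{z}$ with $\bm{z}^\top\bm{\mu}=0$, and $\bm{\mu}^\top\bm{\Sigma}^{-1}\bm{\Sigma}\bm{\Sigma}^{-1}\bm{\mu}=1/k$. Decomposing $\bm{b}$ along $\bm{\mu}$ therefore makes the variance split into a "$\bm{b}$-part" and a "$\bm{\mu}$-direction part". I expect the quadratic form to collapse to
\[
\bm{b}^\top\bm{\Sigma}\bm{b}-k(\bm{b}^\top\bm{\mu})^2+k\mu_i^2(1-p_{*i})^2 .
\]
I would verify this by direct expansion; the actuarial fairness constraint fixes the $\bm{\mu}$-component of the row to be $(1-p_{*i})\mu_i$. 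The resulting bound is
\[
\bm{A}_{*i}\bm{\Sigma}\bm{A}_{*i}^\top\le w_i^2\sum_{j,l}(\sigma_{jl})_++k\mu_i^2,
\]
using $0<1-p_{*j}<1$ and discarding the nonpositive term. Then \eqref{Eq:WGcond} yields exactly
\[
\frac{\gamma_i}{2}(\sigma_i^2-\bm{A}_{*i}\bm{\Sigma}\bm{A}_{*i}^\top)\ge\frac{\gamma_R}{2}\sum_m(\sigma_{im})_+\ge\eta^0_ip_{*i}\mu_i .
\]
This variance computation is the main obstacle: the paper's $\bm{A}_*$ formula must be expanded carefully to confirm the orthogonal decomposition and the sign of the cross term.

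\textbf{Core allocations.} Given $\bm{c}\in core(\mathcal{N},\mathcal{B})$, Proposition \ref{Prop:coreinJPO} supplies a unique $\bm{\eta}_{\bm{c}}$ with $\omega_i=c_i$ and $(\bm{A}_*,\bm{p}_*,\bm{\eta}_{\bm{c}})\in\mathcal{JPO}$. Explicitly,
\[
\eta_{\bm{c},i}=\frac{\frac{\gamma_i}{2}(\sigma_i^2-\bm{A}_{*i}\bm{\Sigma}\bm{A}_{*i}^\top)-c_i}{p_{*i}\mu_i},
\]
and $p_{*i}>0$ makes this well defined. Since $\omega_i$ is strictly decreasing in $\eta_i$, the claim $\eta_{\bm{c},i}\ge\eta^0_i$ is equivalent to $c_i\le\omega_i(\bm{A}_*,\bm{p}_*,\bm{\eta}^0)$.

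The core conditions give $\sum_{j\ne i}c_j\ge\mathcal{B}(\mathcal{N}\setminus\{i\})$ and $\sum_{j}c_j=\mathcal{B}(\mathcal{N})$. Hence
\[
c_i\le\mathcal{B}(\mathcal{N})-\mathcal{B}(\mathcal{N}\setminus\{i\})\le\omega_i(\bm{A}_*,\bm{p}_*,\bm{\eta}^0)
\]
by \eqref{Eq:coreBound}, which completes the argument.
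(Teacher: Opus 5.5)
Your proposal is correct and follows the paper's proof essentially step for step. The paper likewise reduces the first claim to showing $(\bm{A}_*,\bm{p}_*,\max\{\bm{0},\bm{\eta}_{\min}(\bm{p}_*)\})\in\mathcal{IR}$ and then invokes Theorem \ref{Thm:JPO}. It uses exactly the identity $\bm{A}_{*,i}\bm{\Sigma}\bm{A}_{*,i}^\top=\bm{b}^\top\bm{\Sigma}\bm{b}-k(\bm{b}^\top\bm{\mu})^2+k\mu_i^2(1-p_{*i})^2$ that you anticipated, which does hold by direct expansion. It derives the second claim from $c_i\le\mathcal{B}(\mathcal{N})-\mathcal{B}(\mathcal{N}\backslash\{i\})$ together with \eqref{Eq:coreBound} and Proposition \ref{Prop:coreinJPO}; your monotonicity argument for $\eta_{\bm{c},i}\ge\max\{0,\eta_{\min,i}(\bm{p}_*)\}$ simply makes explicit a step the paper leaves implicit.
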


\begin{proof}
    See Appendix \ref{App:nonNegEta}.
\end{proof}


{Condition \eqref{Eq:WGcond} holds when the risk aversion of the reinsurer is sufficiently small, and risks are mildly correlated. In particular, for a given $\bm{p}$, the minimum safety loading $\bm{\eta}_{\min}(\bm{p})$ required to satisfy the reinsurer’s IR constraint decreases as $\gamma_R$ decreases, which in turn raises the members’ welfare gains {(see \eqref{Eq:wi})} and makes it easier to satisfy their IR constraints.}
Moreover, Condition \eqref{Eq:coreBound} further indicates that all members shall receive welfare gains that exceed their marginal contributions to the pool so that stability is ensured.

The following corollary describes a condition under which the reinsurer can set a single nonnegative loading without violating the stability requirement when she is not allowed to charge differently across members in the pool, {i.e., she must set $\bm{\eta} = t \bm{1}$ for some $t\geq 0$. }

\begin{corollary}[Single loading in the core]\label{Cor:JPOSingleLoading}
    Assume that $core(\mathcal{N},\mathcal{B})$ is not a singleton and that Conditions \eqref{Eq:unicond2}, \eqref{Eq:WGcond} and \eqref{Eq:coreBound} hold.
    Let $\bm{c^{(1)}},\bm{c^{(2)}}\in core(\mathcal{N},\mathcal{B})$,
    and $\bm{\eta^{(1)}}$ and $\bm{\eta^{(2)}}$ the corresponding loadings (see Proposition \ref{Prop:coreinJPO}). If $\min_{i}\eta^{(1)}_i\geq \max_{i}\eta^{(2)}_i$, then
    for any  $t\in [\max_{i}\eta^{(1)}_i, \min_{i}\eta^{(2)}_i]$, $(\bm{A}_*,\bm{p}_*,t\bm{1} )\in\mathcal{JPO}$, and 
    \begin{equation*}
\left(\omega_1(\bm{A}_*,\bm{p}_*,t\bm{1}),\dots,\omega_n(\bm{A}_*,\bm{p}_*,t\bm{1}),\mathcal{B}(\mathcal{N})-\sum_{i=1}^n\omega_i(\bm{A},\bm{p},t\bm{1})\right)\in core(\mathcal{N},b).
    \end{equation*}
\end{corollary}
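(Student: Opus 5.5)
The plan rests on two facts. First, each member's welfare gain depends on the loadings only through her own loading, and it does so monotonically. Second, every core inequality in \eqref{Eq:CoreNB} can be rewritten as a one-sided bound on a sum of members' gains. I read the hypothesis as saying that the two loading vectors are ordered, $\min_i\eta^{(1)}_i\geq\max_i\eta^{(2)}_i$. Then for any $t$ between $\max_i\eta^{(2)}_i$ and $\min_i\eta^{(1)}_i$ we get the componentwise sandwich $\bm{\eta}^{(2)}\leq t\bm{1}\leq\bm{\eta}^{(1)}$. This is the interval the statement intends; its endpoints appear to be swapped in the statement as printed.

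\textbf{Step 1: setup.} By Proposition \ref{Prop:ConvexPareto} and Condition \eqref{Eq:unicond2}, we have $\bm{p}_*\in(0,1)^n$. From \eqref{Eq:wi}, $\omega_i(\bm{A}_*,\bm{p}_*,\bm{\eta})=\frac{\gamma_i}{2}(\sigma_i^2-\bm{A}_{*,i}\bm{\Sigma}\bm{A}_{*,i}^\top)-\eta_i p_{*,i}\mu_i$. This is affine and strictly decreasing in $\eta_i$ alone, since $p_{*,i}\mu_i>0$. Also, the loading terms cancel in $\sum_{i=1}^n\omega_i+\omega_R$, so this total equals $\mathcal{B}(\mathcal{N})$ for every $\bm{\eta}$. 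Hence the reinsurer's gain is $\omega_R=\mathcal{B}(\mathcal{N})-\sum_{i=1}^n\omega_i$.

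\textbf{Step 2: sandwich the gains.} Write $\omega_i(t):=\omega_i(\bm{A}_*,\bm{p}_*,t\bm{1})$ and $\omega_R(t)$ for the corresponding reinsurer gain. By Proposition \ref{Prop:coreinJPO}, $\omega_i(\bm{A}_*,\bm{p}_*,\bm{\eta}^{(\ell)})=c^{(\ell)}_i$ for $\ell=1,2$. Since each $\omega_i$ is decreasing in $\eta_i$, the sandwich $\bm{\eta}^{(2)}\leq t\bm{1}\leq\bm{\eta}^{(1)}$ gives $c^{(1)}_i\leq\omega_i(t)\leq c^{(2)}_i$ for every member $i$. Summing and using Step 1 then gives $c^{(2)}_R\leq\omega_R(t)\leq c^{(1)}_R$.

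\textbf{Step 3: verify the core inequalities.} Take the vector $(\omega_1(t),\dots,\omega_n(t),\omega_R(t))$.
\begin{enumerate}[label=(\alph*)]
\item Nonnegativity follows from $\omega_i(t)\geq c^{(1)}_i\geq0$ and $\omega_R(t)\geq c^{(2)}_R\geq0$.
\item Efficiency holds by Step 1.
\item For a coalition $\mathcal{C}\subseteq\mathcal{T}$ without the reinsurer, $\sum_{i\in\mathcal{C}}\omega_i(t)\geq\sum_{i\in\mathcal{C}}c^{(1)}_i\geq\mathcal{B}(\mathcal{C})$.
\item For a coalition $\mathcal{S}\cup\mathcal{R}$, efficiency makes the constraint equivalent to $\sum_{i\in\mathcal{T}\setminus\mathcal{S}}\omega_i(t)\leq\mathcal{B}(\mathcal{N})-\mathcal{B}(\mathcal{S}\cup\mathcal{R})$. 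This follows from $\sum_{i\in\mathcal{T}\setminus\mathcal{S}}\omega_i(t)\leq\sum_{i\in\mathcal{T}\setminus\mathcal{S}}c^{(2)}_i$ together with $\bm{c}^{(2)}$ lying in the core.
\end{enumerate}
Thus lower bounds on $\omega_i(t)$ handle the reinsurer-free coalitions, and upper bounds handle the coalitions containing $R$.

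\textbf{Step 4: conclude.} All agents' gains are nonnegative, so $(\bm{A}_*,\bm{p}_*,t\bm{1})\in\mathcal{IR}$. Since $(\bm{A}_*,\bm{p}_*)$ solves Problem \eqref{Prob:RS}, Theorem \ref{Thm:JPO} ($\mathcal{JPO}=\mathcal{Z}$) gives $(\bm{A}_*,\bm{p}_*,t\bm{1})\in\mathcal{JPO}$. Under Conditions \eqref{Eq:WGcond} and \eqref{Eq:coreBound}, Proposition \ref{Prop:nonNegEta} lets us take $\bm{\eta}^{(2)}\geq\max\{\bm{0},\bm{\eta}_{\min}(\bm{p}_*)\}$, so $t\geq0$ and the single loading is admissible. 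Proposition \ref{Prop:CoreiffCS} then additionally yields coalitional stability.

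The main obstacle is Step 3. Componentwise betweenness of two core vectors does not in general place a point in the core, because $t\bm{1}$ is not a convex combination of the two loading vectors. The argument therefore hinges on the specific sign structure of the constraints: reinsurer-free coalitions impose only lower bounds on members' gains, and coalitions containing $R$ impose only upper bounds. This structure is exactly what the ordered sandwich exploits.
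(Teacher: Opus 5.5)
Your proof is correct and matches the paper's argument essentially step for step. You sandwich $c^{(1)}_i\le\omega_i(t)\le c^{(2)}_i$ using monotonicity in the loading, then apply the lower bound to reinsurer-free coalitions and efficiency plus the upper bound to coalitions containing $R$. Your reading of the interval as $[\max_i\eta^{(2)}_i,\min_i\eta^{(1)}_i]$ is exactly the one the paper's proof uses, and your Step 4 (JP-optimality via Theorem \ref{Thm:JPO}, $t\ge 0$ via Proposition \ref{Prop:nonNegEta}) makes explicit what the paper leaves implicit.
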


\begin{proof}
    See Appendix \ref{App:JPOSingleLoading}.
\end{proof}




\section{Bowley Design}\label{Sec:BowleyDesign}

In this section, we examine the risk-sharing scheme of the Bowley design. Under this setting, the reinsurer acts as the leader and has the first-mover advantage of choosing the safety loading $\bm{\eta}$ while anticipating the plan manager’s response. The plan manager, in turn, acts as the follower and determines the risk-mutualization–reinsurance strategy $(\bm{A}(\bm{\eta}), \bm{p}(\bm{\eta}))$ based on the quoted loading $\bm{\eta}$. This leads to two sequentially linked sub-problems, as illustrated in Fig. \ref{fig:Bowley}. 

\begin{figure}[ht]
    \centering

\tikzset{every picture/.style={line width=0.7pt}} 

\begin{tikzpicture}[x=0.7pt,y=0.7pt,yscale=-0.9,xscale=0.9]

\draw   (386,-30) -- (651,-30) -- (651,46) -- (386,46) -- cycle ;
\draw   (5,-30) -- (250,-30) -- (250,46) -- (5,46) -- cycle ;

\draw    (249,-10) -- (384,-10) ;
\draw [shift={(386,-10)}, rotate = 180.42] [color={rgb, 255:red, 0; green, 0; blue, 0 }  ][line width=0.75]    (10.93,-3.29) .. controls (6.95,-1.4) and (3.31,-0.3) .. (0,0) .. controls (3.31,0.3) and (6.95,1.4) .. (10.93,3.29)   ;
\draw    (386,30) -- (251,30) ;
\draw [shift={(249,30)}, rotate = 0.42] [color={rgb, 255:red, 0; green, 0; blue, 0 }  ][line width=0.75]    (10.93,-3.29) .. controls (6.95,-1.4) and (3.31,-0.3) .. (0,0) .. controls (3.31,0.3) and (6.95,1.4) .. (10.93,3.29)   ;

\draw (395,0) node [anchor=north west][inner sep=0.75pt]   [align=left] {P2P Plan Manager ($\bm{A}(\bm{\eta}),\bm{p}(\bm{\eta})$)};
\draw (70,0) node [anchor=north west][inner sep=0.75pt]   [align=left] {Reinsurer ($\bm{\eta}$)};
\draw (18,-70) node [anchor=north west][inner sep=0.75pt]    {Risk-aware profit optimization};
\draw (370,-70) node [anchor=north west][inner sep=0.75pt]    {Pareto allocation and reinsurance game};
\draw (312,-30) node [anchor=north west][inner sep=0.75pt]    {$\bm{\eta}$};
\draw (304,038) node [anchor=north west][inner sep=0.75pt]    {$\bm{p}(\bm{\eta})$};
\draw (95,-90) node [anchor=north west][inner sep=0.75pt]    {Stage 1};
\draw (490,-90) node [anchor=north west][inner sep=0.75pt]    {Stage 2};

\end{tikzpicture}
    \caption{Summary of the sequential game between the reinsurer and the P2P insurance plan manager. The reinsurer selects the risk loading $\eta$, and the insurer the proportional reinsurance strategy $\bm{p}(\bm{\eta})$.} 
    \label{fig:Bowley}
\end{figure}
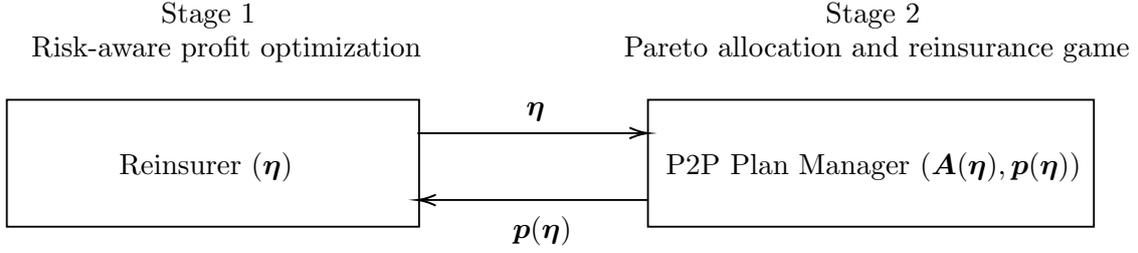


\subsection{Problem formulation}

We formulate sub-problems in the sequential game and thereby define the Bowley-optimal contract.

\subsubsection{Plan manager's follower problem}
We first consider the plan manager's problem. To this end, we assume that the safety loading $\bm{\eta}$ is exogenous and define the manager's optimization problem in Definition \ref{Def:PO}. {To maintain mathematical tractability and obtain closed-form solutions, we initially omit the IR constraints from the follower’s and leader’s problems, and defer the analysis of the IR constraints to Lemma~\ref{Lem:reinsurerIR} and Section~\ref{sec:discussion:eta:bound}.
 }

\begin{definition}[Pareto-optimal risk allocation]\label{Def:PO}
    Given a safety loading $\bm{\eta}$, a risk allocation $(\bm{A}(\bm{\eta})$,$\bm{p}(\bm{\eta}))$ is called Pareto optimal if there exists no other risk allocation $(\bm{\Tilde{A}}(\bm{\eta}),\bm{\Tilde{p}}(\bm{\eta}))$ such that $u_i(\bm{\Tilde{A}}(\bm{\eta}),\bm{\Tilde{p}}(\bm{\eta}),\bm{\eta})\leq u_i(\bm{A},\bm{p},\bm{\eta})$
for all $i=1,\dots,n$, with at least one strict inequality.
\end{definition}

By the standard argument \citep[e.g.,][Chapter 12]{aubin2002optima}, a Pareto-optimal risk allocation is characterized by the following convex minimization problem with the actuarial fairness and zero-conserving constraints:
\begin{equation}\label{Prob:follower}
\min_{\bm{A}(\bm{\eta}),\bm{p}(\bm{\eta})}u(\bm{A}(\bm{\eta}),\bm{p}(\bm{\eta}),\bm{\eta})
        \quad\text{s.t. }\bm{A}(\bm{\eta})\bm{\mu}+\bm{D}(\bm{\mu})\bm{p}(\bm{\eta})=\bm{\mu},\quad \mathbf{1}^\top\bm{A}(\bm{\eta})+\bm{p}^\top(\bm{\eta})=\bm{1}^\top.
\end{equation}

\subsubsection{Reinsurer's leader problem}
Next, we consider the reinsurer's optimization problem. 
{Exploiting her first-mover advantage and taking into account the induced follower response $(\bm{A}^*(\bm{\eta}), \bm{p}^*(\bm{\eta}))$ in Proposition \ref{Prop:convexfollower}}, the reinsurer chooses the loading to optimize her objective, which is given by 
\begin{equation}\label{Prob:leader}
\min_{\bm{\eta}}v(\bm{\eta},\bm{p}^*(\bm{\eta}))\quad
        \text{s.t. }\bm{p}^*(\bm{\eta})\text{ is the optimal indemnity of Problem \eqref{Prob:follower}}.
\end{equation}

Based on the two sub-problems, we define the Bowley-optimal contract as follows. 

\begin{definition}\label{Def:BO}
    A contract $(\bm{A}^*(\bm{\eta}^*),\bm{p}^*(\bm{\eta}^*),\bm{\eta}^*)$ is called Bowley-optimal if it satisfies the following conditions:
    \begin{enumerate}
        \item $(\bm{A}^*(\bm{\eta}^*),\bm{p}^*(\bm{\eta}^*))$ is a Pareto-optimal risk allocation (see Definition \ref{Def:PO});
        \item $\bm{\eta}^*$ solves Problem \eqref{Prob:leader};
        \item $(\bm{A}^*(\bm{\eta}^*),\bm{p}^*(\bm{\eta}^*),\bm{\eta}^*)\in\mathcal{IR}$.
    \end{enumerate}
\end{definition}

\subsection{Solution to the {plan manager's} follower problem}

Following the backward induction, we first tackle the follower's problem. 
Similar to Lemma \ref{Lem:Mbar}, the solution of Problem \eqref{Prob:follower} is characterized by a matrix $\bm{M}$ defined below, whose properties can be proven in the same manner as Lemma \ref{Lem:Mbar}.
    \begin{lemma}
    \label{lem:M}
        The matrix $\bm{M}:=\overline{\bm{M}} - \gamma_R\bm{\Sigma}$
is invertible and positive definite. 
    \end{lemma}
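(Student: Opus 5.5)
The plan is to write $\bm{M}$ explicitly, split it into a positive semidefinite part and a positive definite part, and then use the fact that such a sum is positive definite. Invertibility will follow at once. Set $s:=\sum_{j=1}^n\gamma_j^{-1}>0$. By the definition of $\overline{\bm{M}}$,
\begin{equation*}
\bm{M}=\overline{\bm{M}}-\gamma_R\bm{\Sigma}=\frac{1}{s}\left(\bm{\Sigma}-k\bm{\mu}\bm{\mu}^\top\right)+k\,\bm{D}(\bm{\mu}^2)\bm{D}(\bm{\gamma}).
\end{equation*}
This is the same decomposition one would use for Lemma \ref{Lem:Mbar}, just without the extra term $\gamma_R\bm{\Sigma}$. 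Conversely, $\overline{\bm{M}}=\bm{M}+\gamma_R\bm{\Sigma}$ with $\gamma_R\ge 0$, so the present lemma also recovers Lemma \ref{Lem:Mbar}.

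\emph{Step 1: the first summand is positive semidefinite.} Fix $\bm{x}\in\mathbb{R}^n$. Since $\bm{\Sigma}$ is positive definite, $\langle \bm{y},\bm{z}\rangle_{\bm{\Sigma}}:=\bm{y}^\top\bm{\Sigma}\bm{z}$ is an inner product. Writing $\bm{\mu}^\top\bm{x}=\langle \bm{\Sigma}^{-1}\bm{\mu},\bm{x}\rangle_{\bm{\Sigma}}$, the Cauchy--Schwarz inequality gives
\begin{equation*}
(\bm{\mu}^\top\bm{x})^2\le (\bm{\mu}^\top\bm{\Sigma}^{-1}\bm{\mu})\,(\bm{x}^\top\bm{\Sigma}\bm{x})=k^{-1}\,\bm{x}^\top\bm{\Sigma}\bm{x}.
\end{equation*}
Hence $\bm{x}^\top(\bm{\Sigma}-k\bm{\mu}\bm{\mu}^\top)\bm{x}\ge 0$, and multiplying by $1/s>0$ preserves this.

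\emph{Step 2: the second summand is positive definite.} The matrix $k\,\bm{D}(\bm{\mu}^2)\bm{D}(\bm{\gamma})$ is diagonal with entries $k\mu_i^2\gamma_i$. Each entry is strictly positive because $k>0$, $\mu_i>0$ and $\gamma_i>0$. Therefore $\bm{x}^\top k\bm{D}(\bm{\mu}^2)\bm{D}(\bm{\gamma})\bm{x}=k\sum_i\gamma_i\mu_i^2x_i^2>0$ for every $\bm{x}\neq\bm{0}$.

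\emph{Step 3: conclusion.} Adding the two quadratic forms gives $\bm{x}^\top\bm{M}\bm{x}>0$ for all $\bm{x}\neq\bm{0}$. Symmetry of $\bm{M}$ is clear from the formula, so $\bm{M}$ is positive definite. A positive definite matrix has only positive eigenvalues, so $\bm{M}$ is invertible.

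The only step needing a real idea is Step 1. Its key observation is that $k\bm{\mu}\bm{\mu}^\top$ is dominated by $\bm{\Sigma}$ in the Loewner order precisely because $k=(\bm{\mu}^\top\bm{\Sigma}^{-1}\bm{\mu})^{-1}$. Everything else is routine.
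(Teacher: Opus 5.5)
Your proof is correct and follows essentially the paper's route. The paper proves Lemma \ref{Lem:Mbar} by applying Cauchy--Schwarz to get $\bm{v}^\top\bm{\Sigma}\bm{v}-k(\bm{\mu}^\top\bm{v})^2\ge 0$ and then uses the strictly positive diagonal term $k\bm{D}(\bm{\mu}^2)\bm{D}(\bm{\gamma})$, and it states that Lemma \ref{lem:M} follows in the same manner, which is exactly your argument with the $\gamma_R\bm{\Sigma}$ term removed.
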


The following proposition solves Problem \eqref{Prob:follower} by the Lagrangian method. 

\begin{proposition}\label{Prop:convexfollower}
   
    The solution to Problem \eqref{Prob:follower} is given by 
    \begin{align}\label{Eq:OptAP}
    \begin{split}
        \bm{A}^*(\bm{\eta})=&\frac{\bm{D}(\bm{\gamma})^{-1}\bm{1}\bm{1}^\top}{\sum_{j=1}^n\gamma_j^{-1}}\bm{D}(\bm{1}-\bm{p}^*(\bm{\eta}))+k\left(\bm{I}_n-\frac{\bm{D}(\bm{\gamma})^{-1}\bm{1}\bm{1}^\top}{\sum_{j=1}^n\gamma_j^{-1}}\right)\bm{D}(\bm{1}-\bm{p}^*(\bm{\eta}))\bm{\mu}\bm{\mu}^\top\bm{\Sigma}^{-1},\\
    \bm{p}^*(\bm{\eta})=&\ \bm{1}-\bm{M}^{-1}\bm{D}(\bm{\mu})\bm{\eta}.
    \end{split}
    \end{align}
In addition, if     \begin{equation}\label{Eq:unicond}
        \begin{aligned}
           { \frac{\sum_{m\neq i}(\sigma_{im}-k\mu_i\mu_m)_+}{\sum_{j=1}^n\gamma_j^{-1}}<\mu_i\eta_i <-\frac{\sum_{m\neq i}(k\mu_i\mu_m-\sigma_{im})_+}{\sum_{j=1}^n\gamma_j^{-1}}+k\mu_i^2\gamma_i+\frac{\sigma_i^2-k\mu_i^2}{\sum_{j=1}^n\gamma_j^{-1}},}
        \end{aligned}
    \end{equation}
    then we have $\bm{p}^*(\bm{\eta})\in(0,1)^n$.
\end{proposition}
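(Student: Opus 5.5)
Since the objective of Problem \eqref{Prob:follower} is a convex quadratic in $(\bm{A},\bm{p})$ and the constraints are linear, I will solve it by a Lagrangian in two stages. The first stage eliminates $\bm{A}$ for a fixed $\bm{p}$. The second stage optimizes over $\bm{p}$. This mirrors the proof of Proposition \ref{Prop:ConvexPareto}. The only change is that the reinsurer's term $\frac{\gamma_R}{2}\bm{p}^\top\bm{\Sigma}\bm{p}$ is replaced by the linear premium term $(\bm{D}(\bm{\mu})\bm{\eta})^\top\bm{p}$, and this explains why $\bm{M}=\overline{\bm{M}}-\gamma_R\bm{\Sigma}$ appears.

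\textbf{Step 1: optimal $\bm{A}$ for fixed $\bm{p}$.} Fix $\bm{p}$. Minimize $\bm{1}^\top\bm{A}\bm{\mu}+\frac12 tr(\bm{D}(\bm{\gamma})\bm{A}\bm{\Sigma}\bm{A}^\top)$ subject to $\bm{A}\bm{\mu}=\bm{D}(\bm{\mu})(\bm{1}-\bm{p})$ and $\bm{1}^\top\bm{A}=(\bm{1}-\bm{p})^\top$. By actuarial fairness the linear term equals $\bm{1}^\top\bm{D}(\bm{\mu})(\bm{1}-\bm{p})$, which is constant. So only the quadratic part matters, and it is strictly convex and row-separable.
\begin{itemize}
\item Introduce multipliers $\bm{\lambda}\in\mathbb{R}^n$ for the row constraints and $\bm{\nu}\in\mathbb{R}^n$ for the column constraints.
\item The first-order condition is $\gamma_i\bm{A}_i\bm{\Sigma}=\lambda_i\bm{\mu}^\top+\bm{\nu}^\top$.
\item Solve for $\bm{A}_i$ and impose both constraints. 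This gives exactly the first formula in \eqref{Eq:OptAP}, with $\bm{1}-\bm{p}$ in place of $\bm{1}$ in \citet[Remark 3]{feng2023peer}.
\item Check this quickly by verifying both constraints, using $k\bm{\mu}^\top\bm{\Sigma}^{-1}\bm{\mu}=1$.
\end{itemize}
This step is identical to the corresponding step for \eqref{Eq:OptAP2}, so I can cite it.

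\textbf{Step 2: reduced objective in $\bm{p}$.} Substitute $\bm{A}^*(\bm{p})$ into the quadratic term. With $\bm{q}:=\bm{1}-\bm{p}$, I expect
\[
\tfrac12 tr(\bm{D}(\bm{\gamma})\bm{A}^*\bm{\Sigma}\bm{A}^{*\top})=\tfrac12\bm{q}^\top\bm{M}\bm{q},
\]
with $\bm{M}=\frac{1}{\sum_j\gamma_j^{-1}}\bm{\Sigma}+k\bm{D}(\bm{\mu}^2)\bm{D}(\bm{\gamma})-\frac{k\bm{\mu}\bm{\mu}^\top}{\sum_j\gamma_j^{-1}}$. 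The cross terms between the two pieces of $\bm{A}^*$ should cancel, because $\bm{1}^\top(\bm{I}_n-\bm{D}(\bm{\gamma})^{-1}\bm{1}\bm{1}^\top/\sum_j\gamma_j^{-1})$ annihilates appropriately after weighting by $\bm{D}(\bm{\gamma})$. This algebraic identity is the main computational obstacle. It was already needed for Proposition \ref{Prop:ConvexPareto}, where the reinsurer's variance added $\gamma_R\bm{\Sigma}$ to give $\overline{\bm{M}}$, so I will reuse it.

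\textbf{Step 3: optimizing over $\bm{p}$.} Using $\bm{1}^\top\bm{D}(\bm{\mu})\bm{q}+(\bm{D}(\bm{\mu})(\bm{1}+\bm{\eta}))^\top(\bm{1}-\bm{q})$, the plan manager's objective becomes, up to constants,
\[
\tfrac12\bm{q}^\top\bm{M}\bm{q}-(\bm{D}(\bm{\mu})\bm{\eta})^\top\bm{q}.
\]
By Lemma \ref{lem:M}, $\bm{M}$ is positive definite, so this function is strictly convex and has the unique minimizer $\bm{q}=\bm{M}^{-1}\bm{D}(\bm{\mu})\bm{\eta}$. Hence $\bm{p}^*(\bm{\eta})=\bm{1}-\bm{M}^{-1}\bm{D}(\bm{\mu})\bm{\eta}$, and since $\bm{A}^*$ is uniquely determined by $\bm{p}^*$, the solution is unique.

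\textbf{Step 4: showing $\bm{p}^*(\bm{\eta})\in(0,1)^n$.} We need $\bm{q}\in(0,1)^n$, where $\bm{q}$ solves $\bm{M}\bm{q}=\bm{D}(\bm{\mu})\bm{\eta}$. I would use a diagonal-dominance / maximum-principle argument, as for \eqref{Eq:unicond2}.
\begin{itemize}
\item The diagonal of $\bm{M}$ is $M_{ii}=k\mu_i^2\gamma_i+\frac{\sigma_i^2-k\mu_i^2}{\sum_j\gamma_j^{-1}}$, and the off-diagonal entries are $M_{im}=\frac{\sigma_{im}-k\mu_i\mu_m}{\sum_j\gamma_j^{-1}}$.
\item The right inequality in \eqref{Eq:unicond} says $\mu_i\eta_i<M_{ii}-\sum_{m\neq i}(-M_{im})_+$. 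The left inequality says $\mu_i\eta_i>\sum_{m\neq i}(M_{im})_+$.
\item Suppose $\max_i q_i\ge1$, attained at index $i$. Then $M_{ii}q_i=\mu_i\eta_i-\sum_{m\ne i}M_{im}q_m$. I would bound the sum using $q_m\le q_i$, which yields a contradiction. The case $\min_i q_i\le0$ is handled symmetrically.
\end{itemize}
The delicate point is that $q_m$ could fall outside $[0,1]$ at other indices, so the bound must be applied at the extremal index in each case. If a single step does not close, I would run a continuity or homotopy argument in $\bm{\eta}$, starting from the boundary identities.
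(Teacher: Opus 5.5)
Your Steps 1--3 are sound. They lead to the same linear system $\bm{M}(\bm{1}-\bm{p})=\bm{D}(\bm{\mu})\bm{\eta}$ that the paper obtains by substituting the multipliers into the $\bm{p}$-stationarity condition of the joint Lagrangian. Your two-stage reduction is a legitimate variant, and the identity $tr(\bm{D}(\bm{\gamma})\bm{A}^*\bm{\Sigma}\bm{A}^{*\top})=\bm{q}^\top\bm{M}\bm{q}$ does hold. It follows by summing the row-wise formula for $\bm{A}_{*,i}\bm{\Sigma}\bm{A}_{*,i}^\top$ from the proof of Proposition \ref{Prop:nonNegEta} with weights $\gamma_i$. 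The cross terms vanish because $\bm{1}^\top(\bm{I}_n-\bm{D}(\bm{\gamma})^{-1}\bm{1}\bm{1}^\top/\sum_j\gamma_j^{-1})=\bm{0}^\top$.

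\textbf{The gap is in Step 4.} At an index $i$ maximizing $q_i$, the bound $q_m\le q_i$ controls only the terms with $M_{im}<0$. For $M_{im}>0$ you need an upper bound on $-M_{im}q_m$, i.e.\ a lower bound such as $q_m\ge 0$. That is what the ``symmetric'' minimum argument is supposed to supply. But that argument in turn needs $q_m\le 1$ for its own positive off-diagonal entries. So the two one-sided arguments are circular, and neither closes on its own. Working at the extremal index does not break the circularity. Your homotopy fallback is not specified either: there is no base point and no argument that $\bm{q}$ cannot reach $\partial[0,1]^n$.

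\textbf{How the paper avoids this.} It uses the Poincar\'e--Miranda theorem, exactly as for \eqref{Eq:pi2}. Set $L_i(\bm{p}):=[\bm{M}(\bm{1}-\bm{p})]_i-\mu_i\eta_i$ and restrict all other coordinates to $[0,1]$. On the face $p_i=1$ one gets $L_i\le\sum_{m\neq i}(M_{im})_+-\mu_i\eta_i<0$, and on the face $p_i=0$ one gets $L_i\ge M_{ii}-\sum_{m\neq i}(-M_{im})_+-\mu_i\eta_i>0$. So $L$ vanishes somewhere in $[0,1]^n$, not on the boundary by strictness, and invertibility of $\bm{M}$ identifies that zero with $\bm{p}^*(\bm{\eta})$. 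The face evaluation only ever uses $p_m\in[0,1]$, which is precisely the information your extremal argument lacks.

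\textbf{Repairing your maximum principle.} Couple the two extremal rows. Put $\alpha_l:=M_{ll}-\sum_{m\neq l}(-M_{lm})_+$ and $\beta_l:=\sum_{m\neq l}(M_{lm})_+$. Then \eqref{Eq:unicond} reads $\beta_l<\mu_l\eta_l<\alpha_l$, so $\theta:=\max_l\beta_l/\alpha_l<1$. Let $s:=(\max_l q_l-1)_+$ and $t:=(-\min_l q_l)_+$.
\begin{itemize}
\item The max row, using $q_m\le q_i$ and $q_m\ge -t$, gives $\max_l q_l-1<\theta t$.
\item The min row, using $q_m\ge q_j$ and $q_m\le 1+s$, gives $-\min_l q_l<\theta s$.
\end{itemize}
Hence $s\le\theta t\le\theta^2 s$, which forces $s=t=0$, and the same strict inequalities then give $\bm{q}\in(0,1)^n$.

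Your homotopy idea can also be completed. Start from $\bm{D}(\bm{\mu})\bm{\eta}_0=\tfrac12\bm{M}\bm{1}$, which satisfies \eqref{Eq:unicond} because that condition forces strict diagonal dominance of $\bm{M}$. Then use convexity of the admissible box of $\bm{\eta}$ together with the face computation above.
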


\begin{proof}
It follows from an argument similar to the proof of Proposition \ref{Prop:ConvexPareto} in Appendix \ref{App:ConvexPareto}.
\end{proof}

Examining the form of $\bm{p}^*(\bm{\eta})$ and Condition \eqref{Eq:unicond} entails a bound on the loading factor $\bm{\eta}$. Economically, this implies that {reinsurance contracts must be priced within a moderate range: excessively high loadings lead to zero reinsurance, while overly low loadings induce full risk transfer.}


\subsection{Solution to the {reinsurer's} leader problem}

We then solve the leader's problem.
Using Proposition \ref{Prop:convexfollower}, Problem \eqref{Prob:leader} can be transformed into:
\begin{equation}\label{Prob:leader2}
\min_{\bm{\eta}}\frac{\gamma_{R}}{2}\bm{p}^\top\bm{\Sigma}\bm{p}-(\bm{D}(\bm{\mu})\bm{\eta})^\top\bm{p}\quad
        \text{s.t. }       \bm{p}=\bm{1}-\bm{M}^{-1}\bm{D}(\bm{\mu})\bm{\eta}. 
\end{equation} 
The following proposition gives the optimal loading factor to Problem \eqref{Prob:leader2}.

\begin{proposition}\label{Prop:leader} 
    The optimal safety loading of Problem \eqref{Prob:leader2} is
    \begin{equation}\label{Eq:Opteta}
        \bm{\eta}^*:=\bm{D}(\bm{\mu})^{-1}\bm{M}(\gamma_{R}\bm{\Sigma}+2\bm{M})^{-1}(\gamma_{R}\bm{\Sigma}+\bm{M})\bm{1}.
    \end{equation}
    {Moreover, suppose that $\bm{M}$ and $\bm{\Sigma}$ are strictly diagonally dominant, i.e., $\delta_{\bm{M}}:=\min_{i\in\{1,\ldots,n\}}(M_{ii}-\sum_{j\neq i}|M_{ij}|)>0$,  $\delta_{\bm{\Sigma}}:=\min_{i\in\{1,\ldots,n\}}(\sigma_{ii}-\sum_{j\neq i}|\sigma_{ij}|)>0$,  where $M_{ij}$ denotes the $(i,j)$-th entry of $\bm{M}$, and 
     \begin{equation}\label{Eq:deltaINE}
        \delta_{\bm{M}}+\frac{\gamma_R}{2}\delta_{\bm{\Sigma}}\geq \frac{\gamma_R^2\|\bm{\Sigma}\|_{\infty}^2}{2(2\delta_{\bm{M}}+\gamma_R\delta_{\bm{\Sigma}})},
    \end{equation}
$\|\cdot\|_\infty$ denotes the infinity-norm of a square matrix. Then, $\bm{\eta}^*\geq\bm{0}$.
    }
\end{proposition}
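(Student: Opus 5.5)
Substituting the follower's response reduces Problem \eqref{Prob:leader2} to an unconstrained quadratic program in the variable $\bm{y}:=\bm{D}(\bm{\mu})\bm{\eta}$. Since $\bm{D}(\bm{\mu})$ is invertible, optimizing over $\bm{\eta}$ and optimizing over $\bm{y}$ are equivalent. Writing $\bm{p}=\bm{1}-\bm{M}^{-1}\bm{y}$, the objective becomes
\begin{equation*}
f(\bm{y})=\frac{\gamma_R}{2}(\bm{1}-\bm{M}^{-1}\bm{y})^\top\bm{\Sigma}(\bm{1}-\bm{M}^{-1}\bm{y})-\bm{y}^\top\bm{1}+\bm{y}^\top\bm{M}^{-1}\bm{y}.
\end{equation*}
By Lemma \ref{lem:M}, $\bm{M}$ is symmetric positive definite, so the Hessian
\begin{equation*}
\bm{H}=\gamma_R\bm{M}^{-1}\bm{\Sigma}\bm{M}^{-1}+2\bm{M}^{-1}=\bm{M}^{-1}(\gamma_R\bm{\Sigma}+2\bm{M})\bm{M}^{-1}
\end{equation*}
is positive definite. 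Hence $f$ is strictly convex and its unique minimizer is given by the first-order condition. The gradient is
\begin{equation*}
-\gamma_R\bm{M}^{-1}\bm{\Sigma}\bm{1}+\gamma_R\bm{M}^{-1}\bm{\Sigma}\bm{M}^{-1}\bm{y}-\bm{1}+2\bm{M}^{-1}\bm{y}.
\end{equation*}
Setting it to zero and multiplying on the left by $\bm{M}$ gives $(\gamma_R\bm{\Sigma}+2\bm{M})\bm{M}^{-1}\bm{y}=(\gamma_R\bm{\Sigma}+\bm{M})\bm{1}$. Therefore
\begin{equation*}
\bm{y}=\bm{M}(\gamma_R\bm{\Sigma}+2\bm{M})^{-1}(\gamma_R\bm{\Sigma}+\bm{M})\bm{1},
\end{equation*}
and $\bm{\eta}^*=\bm{D}(\bm{\mu})^{-1}\bm{y}$ is exactly \eqref{Eq:Opteta}.

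For nonnegativity, since $\bm{D}(\bm{\mu})^{-1}$ has positive diagonal entries, it suffices to show $\bm{y}\geq\bm{0}$. I rewrite $\bm{y}$ to isolate a simple leading term. Put $\bm{B}:=\gamma_R\bm{\Sigma}+2\bm{M}$ and note $\gamma_R\bm{\Sigma}+\bm{M}=\tfrac12\bm{B}+\tfrac{\gamma_R}{2}\bm{\Sigma}$. Then
\begin{equation*}
\bm{y}=\tfrac12\bm{M}\bm{1}+\tfrac{\gamma_R}{2}\bm{M}\bm{B}^{-1}\bm{\Sigma}\bm{1}.
\end{equation*}
Using $\bm{M}=\tfrac12(\bm{B}-\gamma_R\bm{\Sigma})$ in the second term gives
\begin{equation*}
\bm{y}=\tfrac12\bm{M}\bm{1}+\tfrac{\gamma_R}{4}\bm{\Sigma}\bm{1}-\tfrac{\gamma_R^2}{4}\bm{\Sigma}\bm{B}^{-1}\bm{\Sigma}\bm{1}
=\tfrac12\Big(\bm{M}+\tfrac{\gamma_R}{2}\bm{\Sigma}\Big)\bm{1}-\tfrac{\gamma_R^2}{4}\bm{\Sigma}\bm{B}^{-1}\bm{\Sigma}\bm{1}.
\end{equation*}
The first term is bounded below componentwise by diagonal dominance: $(\bm{M}\bm{1})_i\geq M_{ii}-\sum_{j\neq i}|M_{ij}|\geq\delta_{\bm{M}}$, and similarly $(\bm{\Sigma}\bm{1})_i\geq\delta_{\bm{\Sigma}}$. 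So each component of the first term is at least $\tfrac12(\delta_{\bm{M}}+\tfrac{\gamma_R}{2}\delta_{\bm{\Sigma}})$.

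The main step is to bound the correction term by the infinity norm:
\begin{equation*}
\|\bm{\Sigma}\bm{B}^{-1}\bm{\Sigma}\bm{1}\|_\infty\leq\|\bm{\Sigma}\|_\infty^2\|\bm{B}^{-1}\|_\infty .
\end{equation*}
$\bm{B}$ is strictly diagonally dominant with margin $\delta_{\bm{B}}\geq2\delta_{\bm{M}}+\gamma_R\delta_{\bm{\Sigma}}$, because the row-wise margins add, $\sum_{j\neq i}|B_{ij}|\leq\gamma_R\sum_{j\neq i}|\sigma_{ij}|+2\sum_{j\neq i}|M_{ij}|$, and $\gamma_R\geq0$. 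The Ahlberg--Nilson--Varah bound then gives $\|\bm{B}^{-1}\|_\infty\leq 1/\delta_{\bm{B}}\leq 1/(2\delta_{\bm{M}}+\gamma_R\delta_{\bm{\Sigma}})$. Combining the two bounds,
\begin{equation*}
y_i\geq\tfrac12\Big(\delta_{\bm{M}}+\tfrac{\gamma_R}{2}\delta_{\bm{\Sigma}}\Big)-\frac{\gamma_R^2\|\bm{\Sigma}\|_\infty^2}{4(2\delta_{\bm{M}}+\gamma_R\delta_{\bm{\Sigma}})},
\end{equation*}
and this is nonnegative precisely under \eqref{Eq:deltaINE}.

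The step I expect to need the most care is this algebraic rearrangement. The decomposition must produce exactly the constants in \eqref{Eq:deltaINE}, and a different splitting (for instance, bounding $\bm{M}\bm{B}^{-1}$ directly) would give a weaker or mismatched inequality. I would also confirm that the Varah bound is applied to $\bm{B}$ with the summed margin and not to $\bm{M}$ alone.
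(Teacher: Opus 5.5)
Your proposal is correct and follows essentially the paper's argument. It uses the same first-order condition $(\gamma_R\bm{\Sigma}+2\bm{M})\bm{M}^{-1}\bm{D}(\bm{\mu})\bm{\eta}=(\gamma_R\bm{\Sigma}+\bm{M})\bm{1}$, the same decomposition $\bm{D}(\bm{\mu})\bm{\eta}^*=\tfrac12\bm{M}\bm{1}+\tfrac{\gamma_R}{4}\bm{\Sigma}\bm{1}-\tfrac{\gamma_R^2}{4}\bm{\Sigma}(\gamma_R\bm{\Sigma}+2\bm{M})^{-1}\bm{\Sigma}\bm{1}$, and the same Varah bound with margin $2\delta_{\bm{M}}+\gamma_R\delta_{\bm{\Sigma}}$. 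Your differences are minor refinements that lead to exactly \eqref{Eq:deltaINE}: you verify strict convexity explicitly, which the paper leaves implicit, and you bound the components of $\bm{y}$ directly rather than proving row diagonal dominance of $\bm{M}+\tfrac{\gamma_R}{2}\bm{\Sigma}-\tfrac{\gamma_R^2}{2}\bm{\Sigma}(\gamma_R\bm{\Sigma}+2\bm{M})^{-1}\bm{\Sigma}$.
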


\begin{proof}
    See Appendix \ref{App:leader}.
\end{proof}
\begin{remark}
    The condition \eqref{Eq:deltaINE} implies an entry-wise lower bound on the optimal safety loading $\bm{\eta}^*$. {In Appendix~\ref{sec:discussion:eta:bound}, we invoke the idea of the proof of the non-negativity of $\bm{\eta}^*$ and establish an upper bound on $\bm{\eta}^*$. This also provides a way to verify \eqref{Eq:unicond} under the reinsurer's optimal strategy, which requires substituting $\bm{\eta}^*$ into the constraints. In particular, when $\gamma_R=0$, \eqref{Eq:unicond} is equivalent to, for $i=1,\dots,n$,  
    \begin{equation}
    \label{Eq:unicond2:gam_R=0}
            k\mu_i^2\gamma_i + \frac{\sigma_i^2-k\mu_i^2}{\sum_{j=1}^n\gamma_j^{-1}} > \frac{\sum_{j\neq i}|\sigma_{ij}-k\mu_i\mu_j|}{\sum_{j=1}^n\gamma_j^{-1}} . 
    \end{equation}
   }%
\end{remark}

\subsection{Bowley optimum}

With the optimal risk allocation in Proposition \ref{Prop:convexfollower} and the optimal loading in Proposition \ref{Prop:leader}, we have a candidate for the desired Bowley optimum. 
Following Definition \ref{Def:BO}, the next step is to check the individual rationality of all agents, shown in the following statement.



\begin{lemma}\label{Lem:reinsurerIR}
  Under the contract $(\bm{A}^*(\bm{\eta}^*),\bm{p}^*(\bm{\eta}^*),\bm{\eta}^*)$ given by Propositions \ref{Prop:convexfollower} and \ref{Prop:leader}, the reinsurer's welfare gain is given by 
    \begin{equation*}
     \omega_R(\bm{A}^*(\bm{\eta}^*),\bm{p}^*(\bm{\eta}^*),\bm{\eta}^*)  = \frac{1}{2}\bm{1}^\top \left(\gamma_{R}\bm{M}^{-1}\bm{\Sigma}\bm{M}^{-1}+2\bm{M}^{-1} \right)^{-1} \bm{1}>0. 
    \end{equation*}
In addition, the members' IRs are fulfilled provided that $\bm{\eta}^*$ satisfies \eqref{Eq:unicond}, and the following holds for $i=1,\dots,n$:
    \begin{equation}\label{Eq:MIRcond}
    \begin{aligned}
             &  \ \ \ \    \frac{\gamma_i}{2}\left(\sigma_i^2 - \left(\frac{\gamma_i^{-1}}{\sum_{j=1}^n\gamma_j^{-1}}\right)^2\sum_{j,l=1}^n (\sigma_{jl})_+ - 3k  \mu_i^2   \right)   + \frac{\sum_{m\neq i} (k\mu_i\mu_m-\sigma_{im})_+   + k\mu_i^2 - \sigma_i^2 }{\sum_{j=1}^n\gamma_j^{-1}}  \geq 0. 
    \end{aligned}
    \end{equation}
\end{lemma}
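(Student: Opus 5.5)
For the reinsurer's part, I will compute directly. Write $\bm{y}:=\bm{D}(\bm{\mu})\bm{\eta}^*$. By Proposition \ref{Prop:convexfollower}, $\bm{p}^*=\bm{1}-\bm{M}^{-1}\bm{y}$, and Proposition \ref{Prop:leader} gives $\bm{y}=\bm{M}(\gamma_R\bm{\Sigma}+2\bm{M})^{-1}(\gamma_R\bm{\Sigma}+\bm{M})\bm{1}$. Then
$$\bm{p}^*=\bm{1}-(\gamma_R\bm{\Sigma}+2\bm{M})^{-1}(\gamma_R\bm{\Sigma}+\bm{M})\bm{1}=(\gamma_R\bm{\Sigma}+2\bm{M})^{-1}\bm{M}\bm{1}.$$
Put $\bm{K}:=\gamma_R\bm{\Sigma}+2\bm{M}$, which is symmetric positive definite by Lemma \ref{lem:M}. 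This gives $\bm{p}^*=\bm{K}^{-1}\bm{M}\bm{1}$ and $\bm{y}=\bm{M}\bm{K}^{-1}(\bm{K}-\bm{M})\bm{1}=\bm{M}\bm{1}-\bm{M}\bm{K}^{-1}\bm{M}\bm{1}$. Substituting into $\omega_R=\bm{y}^\top\bm{p}^*-\frac{\gamma_R}{2}\bm{p}^{*\top}\bm{\Sigma}\bm{p}^*$ and using $\gamma_R\bm{\Sigma}=\bm{K}-2\bm{M}$, I get
$$\omega_R=\bm{1}^\top\bm{M}\bm{K}^{-1}\bm{M}\bm{1}-\bm{1}^\top\bm{M}\bm{K}^{-1}\bm{M}\bm{K}^{-1}\bm{M}\bm{1}-\tfrac12\bm{1}^\top\bm{M}\bm{K}^{-1}(\bm{K}-2\bm{M})\bm{K}^{-1}\bm{M}\bm{1}=\tfrac12\bm{1}^\top\bm{M}\bm{K}^{-1}\bm{M}\bm{1}.$$
Finally, $\bm{M}\bm{K}^{-1}\bm{M}=(\bm{M}^{-1}\bm{K}\bm{M}^{-1})^{-1}=(\gamma_R\bm{M}^{-1}\bm{\Sigma}\bm{M}^{-1}+2\bm{M}^{-1})^{-1}$. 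This matrix is positive definite, so $\omega_R>0$ because $\bm{1}\neq\bm{0}$.

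For the members, I will use \eqref{Eq:wi}: $\omega_i=\frac{\gamma_i}{2}(\sigma_i^2-\bm{A}^*_i\bm{\Sigma}\bm{A}_i^{*\top})-\eta_i^*p_i^*\mu_i$. Condition \eqref{Eq:unicond} gives $p_i^*\in(0,1)$, which yields two bounds.

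First, $\eta_i^*p_i^*\mu_i\le\mu_i\eta_i^*$, and the upper bound in \eqref{Eq:unicond} controls this by $k\mu_i^2\gamma_i+\frac{\sigma_i^2-k\mu_i^2}{\sum_j\gamma_j^{-1}}-\frac{\sum_{m\ne i}(k\mu_i\mu_m-\sigma_{im})_+}{\sum_j\gamma_j^{-1}}$. Written as $\frac{\gamma_i}{2}\cdot 2k\mu_i^2$ plus the remaining terms, this accounts for the extra $2k\mu_i^2$ inside the bracket of \eqref{Eq:MIRcond} and for the last fraction there.

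Second, I need $\bm{A}^*_i\bm{\Sigma}\bm{A}^{*\top}_i\le(\gamma_i^{-1}/\sum_j\gamma_j^{-1})^2\sum_{j,l}(\sigma_{jl})_++k\mu_i^2$. To get it, write $\bm{A}^*_i=\bm{a}+\bm{b}$ with
$$\bm{a}=\frac{\gamma_i^{-1}}{\sum_j\gamma_j^{-1}}(\bm{1}-\bm{p}^*)^\top,\qquad \bm{b}=k\,c_i\,\bm{\mu}^\top\bm{\Sigma}^{-1},$$
where $c_i=(1-p^*_i)\mu_i-\frac{\gamma_i^{-1}}{\sum_j\gamma_j^{-1}}\sum_j(1-p^*_j)\mu_j$, taken from the $i$-th row of \eqref{Eq:OptAP}. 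The key structural identity is
$$\bm{a}\bm{\Sigma}\bm{b}^\top=k\,c_i\,\frac{\gamma_i^{-1}}{\sum_j\gamma_j^{-1}}(\bm{1}-\bm{p}^*)^\top\bm{\mu},$$
together with $\bm{b}\bm{\Sigma}\bm{b}^\top=kc_i^2$. Combining these gives
$$\bm{A}^*_i\bm{\Sigma}\bm{A}^{*\top}_i=\bm{a}\bm{\Sigma}\bm{a}^\top+k\bigl((1-p_i^*)^2\mu_i^2-\beta_i^2\bigr)\le \bm{a}\bm{\Sigma}\bm{a}^\top+k\mu_i^2,$$
with $\beta_i=\frac{\gamma_i^{-1}}{\sum_j\gamma_j^{-1}}(\bm{1}-\bm{p}^*)^\top\bm{\mu}$. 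Since $0<1-p_j^*<1$, I also have $\bm{a}\bm{\Sigma}\bm{a}^\top\le(\gamma_i^{-1}/\sum_j\gamma_j^{-1})^2\sum_{j,l}(\sigma_{jl})_+$.

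Adding the two estimates gives $\omega_i\ge$ the left side of \eqref{Eq:MIRcond}, which is $\ge0$.

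The main obstacle is the algebra of the quadratic form of $\bm{A}_i^*$. I must verify carefully that the cross term collapses as claimed, and that the accounting of the three $k\mu_i^2$ terms (one from the variance, two from the loading bound) matches \eqref{Eq:MIRcond} exactly. If the clean identity fails, I will fall back to the cruder bound $2\bm{a}\bm{\Sigma}\bm{b}^\top\le$ something, using $|c_i|\le\mu_i$-type estimates.
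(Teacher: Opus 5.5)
Your proof is correct and follows the paper's route: a direct substitution for $\omega_R$ (your bookkeeping with $\bm{K}=\gamma_R\bm{\Sigma}+2\bm{M}$ is a tidier version of the paper's manipulation of $\gamma_R\bm{M}^{-1}\bm{\Sigma}+2\bm{I}_n$), and, for the members, the same identity $\bm{A}^*_i\bm{\Sigma}\bm{A}^{*\top}_i=\bm{a}\bm{\Sigma}\bm{a}^\top+k\bigl((1-p_i^*)^2\mu_i^2-\beta_i^2\bigr)$ as in \eqref{eq:A:Sigma:A:pareto}. That identity holds exactly, so no fallback is needed, and you combine it with the upper bound in \eqref{Eq:unicond} just as the paper does. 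The only point to state explicitly is that $\eta_i^*p_i^*\mu_i\le\mu_i\eta_i^*$ needs $\mu_i\eta_i^*\ge 0$; this follows from the lower bound in \eqref{Eq:unicond}, which is nonnegative.
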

\begin{proof}
    See Appendix \ref{App:reinsurerIR}.
\end{proof}

Synthesizing the solutions to two sequentially-linked sub-problems and the IR analysis, we characterize the Bowley optimum as follows. 


\begin{theorem}\label{Thm:BO}
    {Suppose {that C}onditions \eqref{Eq:unicond} and \eqref{Eq:MIRcond} are fulfilled. Let $(\bm{A}^*(\bm{\eta}^*),\bm{p}^*(\bm{\eta}^*),\bm{\eta}^*)$ be given in \eqref{Eq:OptAP} and \eqref{Eq:Opteta}. Then, $(\bm{A}^*(\bm{\eta}^*),\bm{p}^*(\bm{\eta}^*),\bm{\eta}^*)$ is Bowley-optimal.}
\end{theorem}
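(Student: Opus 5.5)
The plan is to check the three requirements of Definition \ref{Def:BO} one at a time. Each follows from an earlier result, so the proof is essentially an assembly step. The one point needing care is that the follower's solution in Proposition \ref{Prop:convexfollower} was obtained without IR constraints and without sign constraints on $\bm{p}$. Condition \eqref{Eq:unicond} is what makes that unconstrained solution admissible.

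\textbf{Item 1 (Pareto optimality of the follower's allocation).} Fix $\bm{\eta}=\bm{\eta}^*$. By Proposition \ref{Prop:convexfollower}, $(\bm{A}^*(\bm{\eta}^*),\bm{p}^*(\bm{\eta}^*))$ is the unique minimizer of Problem \eqref{Prob:follower}. Uniqueness comes from strict convexity, which holds because $\bm{M}$ is positive definite (Lemma \ref{lem:M}). I then argue directly that the minimizer of the sum $u=\sum_i u_i$ is Pareto optimal. Suppose another feasible $(\tilde{\bm{A}},\tilde{\bm{p}})$ satisfied $u_i(\tilde{\bm{A}},\tilde{\bm{p}},\bm{\eta}^*)\le u_i(\bm{A}^*,\bm{p}^*,\bm{\eta}^*)$ for all $i$, with at least one strict inequality. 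Summing over $i$ would give a strictly smaller value of $u$, contradicting minimality. This is the standard weighted-sum argument cited from \citet{aubin2002optima}. Condition \eqref{Eq:unicond} evaluated at $\bm{\eta}^*$ gives $\bm{p}^*(\bm{\eta}^*)\in(0,1)^n$, so the allocation is also a meaningful reinsurance treaty.

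\textbf{Item 2 (the leader's problem).} Proposition \ref{Prop:leader} states that $\bm{\eta}^*$ in \eqref{Eq:Opteta} solves Problem \eqref{Prob:leader2}. Problem \eqref{Prob:leader2} is exactly Problem \eqref{Prob:leader} after substituting the follower's best response $\bm{p}^*(\bm{\eta})=\bm{1}-\bm{M}^{-1}\bm{D}(\bm{\mu})\bm{\eta}$. To confirm that $\bm{\eta}^*$ is a global minimizer, I would note that the objective is a quadratic in $\bm{y}:=\bm{D}(\bm{\mu})\bm{\eta}$ with Hessian
\begin{equation*}
\bm{M}^{-1}(\gamma_R\bm{\Sigma}+2\bm{M})\bm{M}^{-1},
\end{equation*}
which is positive definite. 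The first-order condition is therefore sufficient.

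\textbf{Item 3 (individual rationality), the main obstacle.} This is where most of the work sits, and I would invoke Lemma \ref{Lem:reinsurerIR}. For the reinsurer, the lemma gives
\begin{equation*}
\omega_R=\tfrac12\bm{1}^\top\bigl(\gamma_R\bm{M}^{-1}\bm{\Sigma}\bm{M}^{-1}+2\bm{M}^{-1}\bigr)^{-1}\bm{1}>0,
\end{equation*}
which is positive because the matrix in parentheses is positive definite; hence \eqref{Eq:ReinsurerIR} holds. For the members, the lemma shows that \eqref{Eq:unicond} together with \eqref{Eq:MIRcond} imply $\omega_i\ge 0$ for every $i$, i.e.\ \eqref{Eq:MemberIR}. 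The hypotheses of the theorem are exactly these two conditions, so the contract lies in $\mathcal{IR}$. Combining the three items gives Bowley optimality.

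If the members' IR were not already available from Lemma \ref{Lem:reinsurerIR}, I would prove it as follows. Bound $\bm{A}_i\bm{\Sigma}\bm{A}_i^\top$ using the explicit form \eqref{Eq:OptAP} and the fact that $1-p_j^*\in(0,1)$. Bound $\eta_i^*p_i^*\mu_i\le\mu_i\eta_i^*$ by the upper bound in \eqref{Eq:unicond}. Then compare the two terms through \eqref{Eq:MIRcond}.
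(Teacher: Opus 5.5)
Your proposal is correct and matches the paper's proof, which likewise obtains the three requirements of Definition \ref{Def:BO} directly from Proposition \ref{Prop:convexfollower}, Proposition \ref{Prop:leader}, and Lemma \ref{Lem:reinsurerIR}. Your extra checks are valid elaborations of steps the paper leaves implicit: that minimizing $u=\sum_i u_i$ yields Pareto optimality, and that the leader's objective has positive definite Hessian $\bm{M}^{-1}(\gamma_R\bm{\Sigma}+2\bm{M})\bm{M}^{-1}$ in $\bm{D}(\bm{\mu})\bm{\eta}$.
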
 

\begin{proof}
    It follows directly from Proposition \ref{Prop:convexfollower}, Proposition \ref{Prop:leader}, and Lemma \ref{Lem:reinsurerIR}.
\end{proof}

The following corollary characterizes the optimal loading when the reinsurer is not allowed to set different prices for members in the pool.

\begin{corollary}\label{Lem:singleprice}
    With {the single-loading restriction}, i.e., $\bm{\eta}=\eta\bm{1}$, 
    the optimal safety loading of Problem \eqref{Prob:leader2} is given by 
    \begin{equation}\label{Eq:OptSingleeta}
        \eta^*=\frac{\bm{\mu}^\top(\bm{I}_n+\gamma_{R}\bm{M}^{-1}\bm{\Sigma})\bm{1}}{\bm{\mu}^\top(2\bm{M}^{-1}+\gamma_{R}\bm{M}^{-1}\bm{\Sigma}\bm{M}^{-1})\bm{\mu}}.
    \end{equation}
    Assume further that Conditions \eqref{Eq:unicond} and \eqref{Eq:MIRcond} are fulfilled; then $(\bm{A}^*(\eta^*\bm{1}),\bm{p}^*(\eta^*\bm{1}),\eta^*\bm{1})$ is the Bowley-optimal contract. 
    In addition, if $\bm{M}$ is strictly diagonally dominant and $\gamma_R\leq\delta_{\bm{M}}/\|\bm{\Sigma}\|_{\infty}$, 
    then $\eta^*\geq 0$. 
\end{corollary}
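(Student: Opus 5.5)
Substitute $\bm{\eta}=\eta\bm{1}$ into Problem \eqref{Prob:leader2}, reduce to a scalar quadratic in $\eta$, and solve it. Since $\bm{D}(\bm{\mu})\eta\bm{1}=\eta\bm{\mu}$, Proposition \ref{Prop:convexfollower} gives $\bm{p}=\bm{1}-\eta\bm{M}^{-1}\bm{\mu}$. Hence the reinsurer's objective is
\begin{equation*}
f(\eta)=\frac{\gamma_R}{2}(\bm{1}-\eta\bm{M}^{-1}\bm{\mu})^\top\bm{\Sigma}(\bm{1}-\eta\bm{M}^{-1}\bm{\mu})-\eta\bm{\mu}^\top(\bm{1}-\eta\bm{M}^{-1}\bm{\mu}).
\end{equation*}
Both $\bm{M}$ (Lemma \ref{lem:M}) and $\bm{\Sigma}$ are symmetric. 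Expanding, the coefficient of $\eta^2$ is $\frac12\bm{\mu}^\top(\gamma_R\bm{M}^{-1}\bm{\Sigma}\bm{M}^{-1}+2\bm{M}^{-1})\bm{\mu}$, and the coefficient of $\eta$ is $-\bm{\mu}^\top(\bm{I}_n+\gamma_R\bm{M}^{-1}\bm{\Sigma})\bm{1}$.

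The quadratic coefficient is strictly positive, because $\bm{M}^{-1}\succ 0$, $\bm{M}^{-1}\bm{\Sigma}\bm{M}^{-1}\succeq 0$, $\gamma_R\ge 0$ and $\bm{\mu}\neq\bm{0}$. So $f$ is strictly convex, and the first-order condition $f'(\eta)=0$ yields the unique minimizer \eqref{Eq:OptSingleeta}.

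The Bowley-optimality claim then follows exactly as in Theorem \ref{Thm:BO}, applied with $\bm{\eta}=\eta^*\bm{1}$. Proposition \ref{Prop:convexfollower} gives the follower's Pareto-optimal response. For individual rationality, the members' part of Lemma \ref{Lem:reinsurerIR} only used the structure of $(\bm{A}^*(\bm{\eta}),\bm{p}^*(\bm{\eta}))$ together with \eqref{Eq:unicond} and \eqref{Eq:MIRcond}, so I would check that its argument does not depend on $\bm{\eta}$ being the unrestricted optimizer. The reinsurer's IR is simpler: $f(\eta^*)\le f(0)$ is not the right comparison, but since $\eta=0$ gives $\bm{p}=\bm{1}$ and $f(0)=\frac{\gamma_R}{2}\bm{1}^\top\bm{\Sigma}\bm{1}$, I instead use the minimized value directly. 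Writing $a$ for the quadratic coefficient and $b$ for minus the linear coefficient, we have $\omega_R=-f(\eta^*)=\frac{b^2}{4a}-\frac{\gamma_R}{2}\bm{1}^\top\bm{\Sigma}\bm{1}$. Positivity of this is not automatic. If it is not implied by the hypotheses, I would mirror Lemma \ref{Lem:reinsurerIR} by rewriting $\omega_R$ in the variable $\bm{q}:=\bm{1}-\bm{p}$. This gives $\omega_R=\eta\bm{\mu}^\top\bm{p}-\frac{\gamma_R}{2}\bm{p}^\top\bm{\Sigma}\bm{p}$ and allows a comparison with the IR-satisfying loading $\bm{\eta}_{\min}$. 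This verification is the step I expect to need the most care.

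For $\eta^*\ge0$, note first that the denominator is positive, so only the sign of the numerator $\bm{\mu}^\top\bm{1}+\gamma_R\bm{\mu}^\top\bm{M}^{-1}\bm{\Sigma}\bm{1}$ matters. I will show the stronger entrywise bound $\bm{1}+\gamma_R\bm{M}^{-1}\bm{\Sigma}\bm{1}\ge\bm{0}$, which suffices since $\bm{\mu}>\bm{0}$. It is enough to show $\|\gamma_R\bm{M}^{-1}\bm{\Sigma}\bm{1}\|_\infty\le 1$. Varah's bound for a strictly diagonally dominant $\bm{M}$ gives $\|\bm{M}^{-1}\|_\infty\le 1/\delta_{\bm{M}}$, and clearly $\|\bm{\Sigma}\bm{1}\|_\infty\le\|\bm{\Sigma}\|_\infty$. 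Hence $\|\gamma_R\bm{M}^{-1}\bm{\Sigma}\bm{1}\|_\infty\le\gamma_R\|\bm{\Sigma}\|_\infty/\delta_{\bm{M}}\le1$ by the assumption $\gamma_R\le\delta_{\bm{M}}/\|\bm{\Sigma}\|_\infty$. Therefore the numerator is nonnegative and $\eta^*\ge0$.

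The only step I expect to require care is confirming the IR conditions at the restricted optimum. The rest is a routine one-dimensional convex minimization plus a norm estimate.
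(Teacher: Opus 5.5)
Your computation of \eqref{Eq:OptSingleeta} matches the paper's argument and is correct. So does your proof of $\eta^*\ge 0$, which uses Varah's bound $\|\bm M^{-1}\|_\infty\le 1/\delta_{\bm M}$ to get $\bm 1+\gamma_R\bm M^{-1}\bm\Sigma\bm 1\ge\bm 0$.

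\textbf{The gap.} Your proposal leaves the reinsurer's individual rationality in the Bowley-optimality claim unproved. You are right that it is not inherited from Theorem~\ref{Thm:BO}, even though the paper's proof simply cites that theorem. The reinsurer part of Lemma~\ref{Lem:reinsurerIR} is the identity $\omega_R=\frac12\bm 1^\top(\gamma_R\bm M^{-1}\bm\Sigma\bm M^{-1}+2\bm M^{-1})^{-1}\bm 1$. It is obtained by substituting the unrestricted optimizer \eqref{Eq:Opteta}, i.e.\ the full vector first-order condition. At $\eta^*\bm 1$ you only have the projection of that condition onto $\bm u:=\bm M^{-1}\bm\mu$.

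\textbf{Why your fallback does not work.} In Proposition~\ref{Prop:nonNegEta}, the device $\bm\eta\ge\bm\eta_{\min}(\bm p_*)$ works because $\bm p_*$ does not depend on $\bm\eta$. In the Bowley problem the follower's $\bm p$ moves with the loading, and $\bm\eta_{\min}(\bm p)$ is generally not a multiple of $\bm 1$. Optimality of $\eta^*$ only compares it with other scalar loadings. You would therefore need a scalar $\eta$ with $\bm p^*(\eta\bm 1)\ge\bm 0$ and $\eta\bm 1\ge\bm\eta_{\min}(\bm p^*(\eta\bm 1))$, and nothing in your argument supplies one.

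\textbf{The reinsurer's IR can fail.} It is not a consequence of the single-loading optimization. The matrix $\bm M$, and hence $\bm u$, does not depend on $\gamma_R$, and your own formula gives
\[
\omega_R=\frac{b^2}{4a}-\frac{\gamma_R}{2}\bm 1^\top\bm\Sigma\bm 1=\frac{\gamma_R}{2}\left(\frac{(\bm u^\top\bm\Sigma\bm 1)^2}{\bm u^\top\bm\Sigma\bm u}-\bm 1^\top\bm\Sigma\bm 1\right)+O(1)\qquad(\gamma_R\to\infty).
\]
By Cauchy--Schwarz this is negative for large $\gamma_R$ unless $\bm u\parallel\bm 1$. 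Using $\eta^*=b/(2a)$, the condition you actually need is
\[
\omega_R\ge 0\iff\eta^*\bm\mu^\top\bm 1\ge\gamma_R\bm 1^\top\bm\Sigma\bm p^*(\eta^*\bm 1).
\]
You must either prove that \eqref{Eq:unicond} and \eqref{Eq:MIRcond} at $\eta^*\bm 1$ imply this inequality, or add it as a hypothesis. Neither condition involves the reinsurer, so the first option is a genuine additional argument, not a routine check.

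\textbf{Members' IR.} This part does transfer, but you should state why rather than defer it. The proof in Lemma~\ref{Lem:reinsurerIR} uses only the following, all evaluated at the loading actually used:
\begin{itemize}
\item $\bm p^*\in[0,1]^n$;
\item the lower bound in \eqref{Eq:unicond}, which gives $\eta^*>0$ and hence $\eta^*p^*_i\mu_i\le\eta^*\mu_i$;
\item the upper bound in \eqref{Eq:unicond};
\item \eqref{Eq:MIRcond}.
\end{itemize}
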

\begin{proof}
    See Appendix \ref{App:singleprice}.
\end{proof}


\begin{remark}[On including individual rationality constraints in optimizations]
    For mathematical tractability, we do not directly include two IR constraints \eqref{Eq:MemberIR} and \eqref{Eq:ReinsurerIR} in the optimization problems \eqref{Prob:follower} and \eqref{Prob:leader}, respectively. We nonetheless note that both conditions can be added to the leader's problem \eqref{Prob:leader}, yielding a convex minimization problem that is solvable. However, a closed-form solution is not available in this case, and one must rely on numerical optimization solvers. 
\end{remark}

\subsection{Comparison of Bowley and Pareto optima}

We provide a brief comparison between the Bowley-optimal contract $(\bm{A}^*(\bm{\eta}^*),\bm{p}^*(\bm{\eta}^*),\bm{\eta}^*)$ and a JP-optimal contract $(\bm{A}_*,\bm{p}_*,\bm{\eta}_*)$. Despite the flexibility of setting $\bm{\eta}_*$ in the JP-optimal contracts (see the discussion in Section \ref{Sec:CJPO}),   
we shall show that a Bowley-optimal contract is never JP-optimal.   

\begin{proposition}\label{Prop:BOisJPO}
  The Bowley-optimal contract $(\bm{A}^*(\bm{\eta}^*),\bm{p}^*(\bm{\eta}^*),\bm{\eta}^*)$ in Theorem \ref{Thm:BO} is not JP-optimal.

    
\end{proposition}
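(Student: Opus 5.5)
By Theorem \ref{Thm:JPO}, $\mathcal{JPO}=\mathcal{Z}$. So a contract is JP-optimal only if its risk allocation $(\bm{A},\bm{p})$ solves Problem \eqref{Prob:RS}, and Proposition \ref{Prop:ConvexPareto} says the solution is unique. It therefore suffices to show that $\bm{p}^*(\bm{\eta}^*)\neq\bm{p}_*$. This is enough because $\bm{A}$ in both \eqref{Eq:OptAP2} and \eqref{Eq:OptAP} is the same affine function of $\bm{p}$, so equality of the allocations would force equality of the $\bm{p}$'s. Hence the plan is to compute both reinsurance vectors in closed form and show they can never coincide.

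\textbf{Step 1: simplify $\bm{p}^*(\bm{\eta}^*)$.} Substitute \eqref{Eq:Opteta} into \eqref{Eq:OptAP}:
\[
\bm{p}^*(\bm{\eta}^*)=\bm{1}-(\gamma_R\bm{\Sigma}+2\bm{M})^{-1}(\gamma_R\bm{\Sigma}+\bm{M})\bm{1}=(\gamma_R\bm{\Sigma}+2\bm{M})^{-1}\bm{M}\bm{1}.
\]

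\textbf{Step 2: rewrite $\bm{p}_*$.} Since $\overline{\bm{M}}=\bm{M}+\gamma_R\bm{\Sigma}$ (Lemma \ref{lem:M}),
\[
\bm{p}_*=\overline{\bm{M}}^{-1}(\overline{\bm{M}}-\gamma_R\bm{\Sigma})\bm{1}=(\bm{M}+\gamma_R\bm{\Sigma})^{-1}\bm{M}\bm{1}.
\]

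\textbf{Step 3: compare.} Set $\bm{y}:=\bm{M}\bm{1}$. Equality of the two vectors means $(\gamma_R\bm{\Sigma}+2\bm{M})^{-1}\bm{y}=(\gamma_R\bm{\Sigma}+\bm{M})^{-1}\bm{y}$. Call this common vector $\bm{z}$. Then $(\gamma_R\bm{\Sigma}+2\bm{M})\bm{z}=(\gamma_R\bm{\Sigma}+\bm{M})\bm{z}$, i.e.\ $\bm{M}\bm{z}=\bm{0}$. Since $\bm{M}$ is positive definite (Lemma \ref{lem:M}), $\bm{z}=\bm{0}$, and hence $\bm{y}=\bm{M}\bm{1}=\bm{0}$. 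This contradicts the invertibility of $\bm{M}$, because $\bm{1}\neq\bm{0}$. So $\bm{p}^*(\bm{\eta}^*)\neq\bm{p}_*$, the pair $(\bm{A}^*(\bm{\eta}^*),\bm{p}^*(\bm{\eta}^*))$ does not solve Problem \eqref{Prob:RS}, and the Bowley contract lies outside $\mathcal{Z}=\mathcal{JPO}$.

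\textbf{Main obstacle.} The delicate part is the bookkeeping in invoking Theorem \ref{Thm:JPO}. That theorem requires Condition \eqref{Eq:unicond2}, whereas Theorem \ref{Thm:BO} assumes \eqref{Eq:unicond} and \eqref{Eq:MIRcond}. I would avoid this dependence by proving the needed inclusion $\mathcal{JPO}\subseteq\mathcal{Z}$ directly.

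Suppose a contract in $\mathcal{IR}$ has $(\bm{A},\bm{p})$ that does not minimise the strictly convex social objective. Replace the allocation with $(\bm{A}_*,\bm{p}_*)$; this strictly lowers the sum of all agents' disutilities. Then choose $\tilde{\bm{\eta}}$ with $\tilde\eta_i p_{*,i}\mu_i$ adjusted so that each member's $u_i$ stays equal and the reinsurer captures the entire strict surplus. This requires $p_{*,i}\neq 0$ for each $i$, which holds generically; if some $p_{*,i}=0$, shift the surplus across members via the other loadings instead. The resulting contract remains in $\mathcal{IR}$, since every welfare gain weakly rises, and it strictly dominates the original. Since the $\eta$-terms cancel in the total, this transfer argument is routine.
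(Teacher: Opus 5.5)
Your Steps 1--3 are the paper's argument: you reduce both reinsurance vectors to $(\gamma_R\bm{\Sigma}+2\bm{M})^{-1}\bm{M}\bm{1}$ and $(\gamma_R\bm{\Sigma}+\bm{M})^{-1}\bm{M}\bm{1}$, show they differ, and conclude from $\mathcal{JPO}=\mathcal{Z}$ and uniqueness of the minimizer of \eqref{Prob:RS}. Your Step 3 ($\bm{M}\bm{z}=\bm{0}\Rightarrow\bm{z}=\bm{0}\Rightarrow\bm{M}\bm{1}=\bm{0}$) is a sharper justification than the paper's remark that $\bm{M}$ is not the zero matrix. That remark alone would not exclude $\bm{M}\bm{1}=\bm{0}$; invertibility of $\bm{M}$ is what does the work.

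You are also right that invoking Theorem \ref{Thm:JPO} tacitly uses \eqref{Eq:unicond2}, which Theorem \ref{Thm:BO} does not assume. In fact only $p_{*,i}\neq 0$ for all $i$ is needed, to solve \eqref{Eq:etastar}. If you accept that hypothesis, as the paper implicitly does, your proof is complete.

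Your attempt to remove the hypothesis, however, fails exactly in the case it is meant to cover. If $p_{*,i}=0$, member $i$ pays no premium, so $u_i(\bm{A}_*,\bm{p}_*,\bm{\eta})=\rho_i(\bm{A}_*)$ for every $\bm{\eta}$; she has no transfer channel. Suppose $\rho_i(\bm{A}_*)$ exceeds her disutility under the original contract. This can happen, e.g.\ if that contract gave her a negative loading financed by the other members' loadings. Then reshuffling the other loadings cannot compensate her, and your contract does not dominate. ``Holds generically'' does not close this either.

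The fix is a perturbation. Let $(\bm{A},\bm{p},\bm{\eta})\in\mathcal{IR}$ be the original contract, and write $S(\bm{A},\bm{p}):=\rho_R(\bm{p})+\sum_{i}\rho_i(\bm{A})$. By assumption $S(\bm{A},\bm{p})$ lies strictly above the minimum.
\begin{itemize}
\item Pick $\bm{p}'$ near $\bm{p}_*$ with every $p'_i\neq 0$, and set $\bm{A}':=\bm{A}_*+\bm{D}(\bm{p}_*-\bm{p}')$. This satisfies both linear constraints, and by continuity $S(\bm{A}',\bm{p}')<S(\bm{A},\bm{p})$.
\item Define $\eta'_i$ by $\rho_i(\bm{A}')+(1+\eta'_i)p'_i\mu_i=u_i(\bm{A},\bm{p},\bm{\eta})$, so every member's $u_i$ is unchanged.
\item Since $u+v=S$, you get $v(\bm{\eta}',\bm{p}')=S(\bm{A}',\bm{p}')-u(\bm{A},\bm{p},\bm{\eta})<v(\bm{\eta},\bm{p})\le 0$.
\end{itemize}
So the new contract lies in $\mathcal{IR}$ and strictly dominates the original. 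This proves $\mathcal{JPO}\subseteq\mathcal{Z}$ with no condition on $\bm{p}_*$, which is all Proposition \ref{Prop:BOisJPO} needs. It also removes the unstated hypothesis in the paper's own proof.
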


\begin{proof}
    By direct calculation, $\bm{p}^*(\bm{\eta}^*)=(\gamma_R\bm{\Sigma}+2\bm{M})^{-1}\bm{M}\bm{1}\neq (\gamma_R\bm{\Sigma}+\bm{M})^{-1}\bm{M}\bm{1}=\bm{p}_*$, 
    as $\bm{M}$ is not a zero matrix according to Lemma \ref{lem:M}. {As $\bm{p}^*(\bm{\eta}^*){\neq}\bm{p}_*$, the Bowley-optimal contract does not solve Problem \eqref{Prob:RS} and thus not JP-optimal according to Theorem \ref{Thm:JPO}, } 
\end{proof}



Proposition \ref{Prop:BOisJPO} asserts that the Bowley optimum is inefficient in terms of JP optimality. 
This stands in contrast to \cite{boonen2023bowley}, who study a two-agent Bowley optimum under convex and comonotonic additive risk measures. In contrast, our focus is on a multi-agent problem with mean-variance-premium objectives for the followers, which are non-convex in the contract variables and not comonotonic-additive.  Our result echoes a similar finding in \cite{jiang2025bowley} that the two-agent Bowley optimum is not Pareto efficient under the generalized mean-variance setting. 

In addition, due to the inefficiency of the Bowley optimum, we can also assert that it achieves smaller total welfare gains for all agents compared to the JP-optimal counterpart, summarized in the following corollary.

\begin{corollary}\label{Cor:WelfareGain} 

{Let $(A^*(\bm{\eta}^*),\bm{p}^*(\bm{\eta}^*),\bm{\eta}^*)$ be the Bowley-optimal contract given by \eqref{Eq:OptAP} and \eqref{Eq:etastar}, and {let} $(\bm{A}_*,\bm{p}_*,\bm{\eta}_*)$ {be a} JP-optimal contract. Then, $\sum_{i\in\mathcal{N}}\omega_i(\bm{A}^*(\bm{\eta}^*),\bm{p}^*(\bm{\eta}^*),\bm{\eta}^*)<\sum_{i\in\mathcal{N}}\omega_i(\bm{A}_*,\bm{p}_*,\bm{\eta}_*)$.}
\end{corollary}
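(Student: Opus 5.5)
The plan is to write the total welfare gain as a function of the risk allocation $(\bm{A},\bm{p})$ alone, independent of the loading. Summing \eqref{Eq:wi} over $i=1,\dots,n$ and adding $\omega_R$, the premium terms $\eta_ip_i\mu_i$ cancel. Using the definitions of $\rho_i$ and $\rho_R$ together with actuarial fairness, for any admissible contract (one satisfying the actuarial fairness and zero-conserving constraints) we get
\begin{equation*}
\sum_{i\in\mathcal{N}}\omega_i(\bm{A},\bm{p},\bm{\eta})=\sum_{i=1}^n\Big(\mu_i+\frac{\gamma_i}{2}\sigma_i^2\Big)-\Big[\rho_R(\bm{p})+\sum_{i=1}^n\rho_i(\bm{A})\Big].
\end{equation*}
Indeed, $\sum_i \bm{A}_i\bm{\mu}+\bm{\mu}^\top\bm{p}=\sum_i\mu_i$ by actuarial fairness, so the mean terms also match. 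Hence maximizing total welfare is the same as minimizing the objective of Problem \eqref{Prob:RS} over the same constraint set.

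Next, I would apply this identity to both contracts. By Theorem \ref{Thm:JPO}, $\mathcal{JPO}=\mathcal{Z}$, so any JP-optimal $(\bm{A}_*,\bm{p}_*,\bm{\eta}_*)$ has $(\bm{A}_*,\bm{p}_*)$ solving Problem \eqref{Prob:RS}. By Proposition \ref{Prop:ConvexPareto}, this solution is the unique minimizer. The Bowley contract $(\bm{A}^*(\bm{\eta}^*),\bm{p}^*(\bm{\eta}^*))$ from Proposition \ref{Prop:convexfollower} satisfies the same two linear constraints by construction, so it is feasible for Problem \eqref{Prob:RS}.

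The key step is strictness. In the proof of Proposition \ref{Prop:BOisJPO} it was shown that $\bm{p}^*(\bm{\eta}^*)=(\gamma_R\bm{\Sigma}+2\bm{M})^{-1}\bm{M}\bm{1}\neq\bm{p}_*$. The Bowley pair is therefore feasible but different from the unique minimizer, so its objective value is strictly larger, and by the identity above its total welfare is strictly smaller. Uniqueness comes from strict convexity of the objective in $(\bm{A},\bm{p})$ on the affine feasible set, which Proposition \ref{Prop:ConvexPareto} already provides.

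The main point needing care is the first step: the identity must hold for arbitrary $\bm{\eta}$. This requires checking that \eqref{Eq:wi} uses actuarial fairness only to simplify the mean terms, and that the reinsurer's premium income exactly cancels the members' premium payments, so the loading drops out entirely. Once this is confirmed, no IR considerations enter the comparison, and the result holds for every choice of $\bm{\eta}_*$ in the JP-optimal set.
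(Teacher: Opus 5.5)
Your argument is correct and matches the paper's proof, which simply observes that the Bowley pair $(\bm{A}^*(\bm{\eta}^*),\bm{p}^*(\bm{\eta}^*))$ is feasible for Problem \eqref{Prob:RS} but, by Proposition \ref{Prop:BOisJPO}, is not its unique minimizer. You also make explicit a step the paper leaves implicit: total welfare equals a constant minus the objective of Problem \eqref{Prob:RS}, because the premium terms cancel (this follows directly from the definitions of $\omega_i$ and $\omega_R$, so actuarial fairness is not even needed there).
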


\begin{proof}
    It follows directly from the observation that $(\bm{A}^*,\bm{p}^*)$ is admissible and suboptimal (see Proposition \ref{Prop:BOisJPO}) in Problem \eqref{Prob:RS}. 
\end{proof}

Overall, the issues of inefficiency and welfare loss, respectively revealed in Proposition \ref{Prop:BOisJPO} and Corollary \ref{Cor:WelfareGain}, raise concerns about adopting the Bowley design with the monopolistic reinsurer. Importantly, 
the implication from Corollary \ref{Cor:WelfareGain}
should not be interpreted as a blanket rejection of Bowley-type contracting as it is more realistic in the practical business models. In our setting, the Pareto design corresponds to full cooperation and thus serves as an upper benchmark, whereas the Bowley design represents the opposite extreme of unilateral pricing power. This motivates further study of intermediate market structures—for example, Nash-bargaining-based contract composition \citep{boonen2016nash} and structures with multiple reinsurers \citep{cao2023reinsurance}.

\section{Numerical Illustration and Welfare Analysis}\label{Sec:Numeric}



In this section, we present numerical results to demonstrate our findings and conduct a welfare comparison across different contracts. In addition, we carry out a comparative statics analysis on the reinsurer's risk aversion $\gamma_R$.
Baseline parameters are shown as follows: we consider a three-member case, where
\begin{equation*}
    \mathcal{N}=\{1,2,3,R\},\quad\gamma_R=0.01,\quad\bm{\mu}=\begin{pmatrix}
        100\\
        125\\
        85
    \end{pmatrix}
    ,\quad
    \bm{\Sigma}=\begin{pmatrix}
        10000& -1200& 720\\
    -1200& 14400& 648\\
    720& 648& 8100
    \end{pmatrix}
    ,\quad
    \bm{\gamma}=\begin{pmatrix}
        0.015\\ 0.025\\ 0.02
    \end{pmatrix}.
\end{equation*}

{In this profile, Member 2 faces the riskiest loss, with the highest expected loss and variance, whereas Member 3 faces the least risky loss. Among the three members, Member 1 is the least risk averse, followed by Member 3.}

\subsection{Baseline contracts}

From the baseline setting, we compose the following five contracts:
\begin{itemize}
\item No-reinsurer case $\bm{A}_0$
\item JPO1 $(\bm{A}_*,\bm{p}_*,\bm{\eta}_*)$: JP-optimal contract 
\item JPO2 $(\bm{A}_*,\bm{p}_*,\eta_*\bm{1})$: JP-optimal contract {with the single-loading restriction}
\item BO1 $(\bm{A}^*(\bm{\eta}^*),\bm{p}^*(\bm{\eta}^*),\bm{\eta}^*)$: Bowley-optimal contract 
\item BO2 $(\bm{A}^*(\eta^*\bm{1}),\bm{p}^*(\eta^*\bm{1}),\eta^*\bm{1})$: Bowley-optimal contract {with the single-loading restriction}

\end{itemize}
Below, we discuss each component in the contract triplet and related figures.

\subsubsection{Construction of JP-optimal contracts and safety loadings}\label{Sec:SLs}




Safety loadings of the Bowley-optimal contracts can be calculated explicitly; see \eqref{Eq:Opteta} and \eqref{Eq:OptSingleeta} for the closed-form formulae. However, this is not the case for Pareto designs, which allow the plan manager and reinsurer to jointly devise the safety loadings. As such, we select the premiums for the JPO1 and JPO2 contracts so that they respectively Pareto-dominate the BO1 and BO2 contracts, as follows.

According to Corollary \ref{Cor:WelfareGain}, JP-optimal contracts always result in greater total welfare gains for the community. Utilizing the freedom of distributing the total welfare among members and the reinsurer in the Pareto design, we equally redistribute these extra welfare gains among them and derive $\bm{\eta}_*$ in the JPO1 contract. Mathematically, {whenever $\bm{p}_*\geq \bm{0}$,} we {determine $\bm{\eta}_*$ by solving}
\begin{align}
\label{eq:omega:dis:numerical}
    &\omega_i(\bm{A}_*,\bm{p}_*,\bm{\eta}_*) = \omega_i(\bm{A}^*(\bm{\eta}^*),\bm{p}^*(\bm{\eta}^*),\bm{\eta}^*)+\frac{\rho_R(\bm{p}_*)-\rho_R(\bm{p}^*(\bm{\eta}^*))+\sum_{i=1}^n(\rho_i(\bm{A}_*)-\rho_i(\bm{A}^*(\bm{\eta}^*)))}{|\mathcal{N}|},
\end{align}
for all $i\in\mathcal{N}$.
We also verify that the resulting vector of welfare gains lies in the core; see Equation \eqref{Eq:CoreNB}. Therefore, the JPO contracts are also coalitionally stable.


For $\eta_*$ in the JPO2 contract, we follow a similar scheme by redistributing the welfare gains. However, due to the single-loading restriction, we cannot, in general, equally split the extra welfare gains among agents while satisfying the coalitional stability criteria. Instead, we search for welfare gains induced by the single loading $\eta_*$ of the JP-optimal contract in the core that Pareto-dominates the BO2 contract. Mathematically, we look for $\eta_*\geq0$ such that
\begin{align*}
\left(\omega_i(\bm{A}_*,\bm{p}_*,\eta_*\bm{1})\right)_{i\in\mathcal{N}}
\in&\{\bm{c}:\bm{c}
\in core(\mathcal{N},\mathcal{B}),\  c_i\geq\omega_i(\bm{A}^*(\eta^*\bm{1}),\bm{p}^*(\eta^*\bm{1}),\eta^*\bm{1})\text{ for all }i\in\mathcal{N}\}.
\end{align*}
Note that  in general, there is no unique solution for this search, and we thus select one particular choice to construct the JPO2 contract. 

\begin{table}[!ht]
\centering
{\small\begin{tabular}{|c|ccc|}
\hline
Contracts\textbackslash{}Safety loading & \multicolumn{1}{c|}{$\eta_1$} & \multicolumn{1}{c|}{$\eta_2$} & $\eta_3$ \\ \hline
JPO1                                    & \multicolumn{1}{c|}{0.313589}  & \multicolumn{1}{c|}{0.678401}  & 0.404502  \\ \hline
JPO2                                    & \multicolumn{3}{c|}{0.4395}                                                \\ \hline
BO1                                     & \multicolumn{1}{c|}{0.345775}  & \multicolumn{1}{c|}{0.725918}  & 0.460025  \\ \hline
BO2                                     & \multicolumn{3}{c|}{0.495050}                                             \\ \hline
\end{tabular}}
\caption{Safety loadings under the baseline contracts.}
\label{Tab:SafetyLoadings}
\end{table}

The computed safety loadings of all contracts are summarized in Table \ref{Tab:SafetyLoadings}. All loadings are positive, and the loadings of the two contracts (JPO2/BO2) with the single-loading restriction are located around the average of safety loadings in the corresponding unrestricted contracts (JPO1/BO1).  In line with general intuition, members with larger expected losses and higher loss variances are charged higher loadings in both the JPO1 and BO1 contracts, with Member 2 facing the highest loading. {In addition, the loading for Member 2 is reduced under the single-loading restriction, which is balanced by higher loadings for Members 1 and 3.} Comparing the corresponding contracts across the two designs, we observe that the higher loadings are set under the Bowley design, as the reinsurer, acting as the leader, can leverage its first-mover advantage when determining the premium.

\subsubsection{Reinsurance strategies}

\begin{table}[!ht]
\centering
{\small\begin{tabular}{|c|c|c|c|}
\hline
Contracts \textbackslash Reinsurance & $p_1$    & $p_2$    & $p_3$    \\ \hline
JPO1 and JPO2                        & 0.357528 & 0.473022 & 0.364981 \\ \hline
BO1                                  & 0.265399 & 0.378127 & 0.269785 \\ \hline
BO2                                  & 0.148915 & 0.445595 & 0.244199 \\ \hline
\end{tabular}}
\caption{Comparison of reinsurance strategies among contracts.}
\label{Tab:Reinsurance_comparison}
\end{table}


Table \ref{Tab:Reinsurance_comparison} depicts all members' reinsurance strategies in different contracts. The reinsurance strategies are the same for JPO1 and JPO2, since under the Pareto design, the risk transfer is independent of the safety loadings. In general, the Pareto design leads to higher reinsurance levels for all members compared to the Bowley design. This is because the reinsurer cooperates with the manager and internalizes members’ risk-sharing benefits, thereby allowing more effective risk management. For all but Member 2, the next highest levels occur under the BO1 contract, since her safety loading decreases substantially when moving from BO1 to BO2, while the other members’ loadings increase (see Table \ref{Tab:SafetyLoadings}). 


\subsubsection{Risk allocation matrices}

\begin{table}[!ht]
\centering

\begin{subtable}[t]{0.495\textwidth}
\centering
{\small\begin{tabular}{|c|c|c|c|}
\hline
$i\backslash j$ & 1 & 2 & 3 \\ \hline
1 & 0.223266 & 0.180385 & 0.227910 \\ \hline
2 & 0.231155 & 0.193278 & 0.218787 \\ \hline
3 & 0.188050 & 0.153314 & 0.188321 \\ \hline
\end{tabular}}%
\caption{JPO1 and JPO2 $\bm{A}_*$}
\end{subtable}\hfill
\begin{subtable}[t]{0.495\textwidth}
\centering
{\small\begin{tabular}{|c|c|c|c|}
\hline
$i\backslash j$ & 1 & 2 & 3 \\ \hline
1 & 0.312006 & 0.326195 & 0.329707 \\ \hline
2 & 0.418419 & 0.298033 & 0.392988 \\ \hline
3 & 0.269576 & 0.275772 & 0.277305 \\ \hline
\end{tabular}}%
\caption{No-reinsurer case $\bm{A}_0$}
\end{subtable}

\vspace{0.5em}
\begin{subtable}[t]{0.495\textwidth}
\centering
{\small\begin{tabular}{|c|c|c|c|}
\hline
$i\backslash j$ & 1 & 2 & 3 \\ \hline
1 & 0.239741 & 0.226409 & 0.249234 \\ \hline
2 & 0.290250 & 0.263952 & 0.273118 \\ \hline
3 & 0.204610 & 0.191511 & 0.207863 \\ \hline
\end{tabular}}%
\caption{BO1 $\bm{A}^*(\bm{\eta}^*)$}
\end{subtable}\hfill
\begin{subtable}[t]{0.495\textwidth}
\centering
{\small\begin{tabular}{|c|c|c|c|}
\hline
$i\backslash j$ & 1 & 2 & 3 \\ \hline
1 & 0.333939 & 0.211219 & 0.297793 \\ \hline
2 & 0.265230 & 0.183491 & 0.233428 \\ \hline
3 & 0.251918 & 0.159696 & 0.224580 \\ \hline
\end{tabular}}%
\caption{BO2 $\bm{A}^*(\eta^*\bm{1})$}
\end{subtable}

\caption{Risk allocations $(a_{ij})_{i,j\in\{1,2,3\}}$ of members in different contracts.
}
\label{tab:A}
\end{table}

Table \ref{tab:A} presents the risk mutualization of members in all contracts. 
Driven by the introduction of the reinsurance option, we observe two key effects. First, {the allocation matrix $\bm{A}_0$ has the highest values in all entries, followed by $\bm{A}^*(\bm{\eta}^*)$ and then $\bm{A}_*$.}
This follows from the ordering of the corresponding reinsurance strategies, i.e., $\bm{0}<\bm{p}^*(\bm{\eta}^*)<\bm{p}_*$ 
(cf. Table \ref{Tab:Reinsurance_comparison}). This is consistent with the intuition that members need to carry more risk from each other with fewer risk transfers. 

For the BO2 contract where the single-loading restriction is imposed, {the entries of} $\bm{A}(\eta^*\bm{1})$ {exhibit greater dispersion} compared to other allocation matrices. {In particular, $a_{11}$ in $\bm{A}(\eta^*\bm{1})$ is the highest among all contracts, whereas $a_{22}$ is the smallest.}
The reasons are twofold. First, the lower premium payment incentivizes Member 2 to transfer a large portion of risk to the reinsurer. Consequently, she bears a smaller proportion of losses for others due to the actuarial fairness condition. As such, Member 1 must retain more risk in her own house. Second, for Member 1, the premium is substantially higher in the BO2
contract than in the other designs, making the reinsurance option less appealing and thereby increasing her retained loss.

\subsubsection{Welfare analysis}


In this subsection, we compare the induced welfare of all agents and analyze the attribution of welfare improvements with respect to the premium and mean-variance disutility components.



\begin{table}[!ht]
\centering
{\small\begin{tabular}{|c|c|c|c|c|}
\hline
Contracts \textbackslash Premium & $\pi_1(\eta_1,p_1)$ & $\pi_2(\eta_2,p_2)$ & $\pi_3(\eta_3,p_3)$ & Total premium payment \\ \hline
JPO1                             & 46.9654             & 99.2401             & 43.5725             & 189.777               \\ \hline
JPO2                             & 51.4662             & 85.1144             & 44.6582             & 181.239               \\ \hline
BO1                              & 35.7167             & 68.6327             & 33.4809             & 137.830               \\ \hline
BO2                              & 22.2633             & 83.2733             & 31.0327             & 136.569               \\ \hline
\end{tabular}}%
\caption{Comparison of premium payment among contracts.}
\label{Tab:premium_comparison}
\end{table}

Table~\ref{Tab:premium_comparison} reports the members' premium payments under different contracts, showing that members pay higher premiums under the JP-optimal contracts. {With lower safety loadings in the JPO contracts (see Table~\ref{Tab:SafetyLoadings}), members are incentivized to increase their reinsurance purchases (see Table~\ref{Tab:Reinsurance_comparison}). Overall, the higher reinsurance volumes offset the lower unit premiums, resulting in higher total premium payments.} 
Consequently, the reinsurer receives higher premium income when engaging in the JP-optimal contracts.

\begin{table}[!ht]
\centering
{\small\begin{tabular}{|c|c|c|c|c|}
\hline
Contracts \textbackslash Mean-variance disutility & $\rho_1(\bm{A},\bm{p})$ & $\rho_2(\bm{A},\bm{p})$ & $\rho_3(\bm{A},\bm{p})$ & $\rho_R(\bm{p})$ \\ \hline
Status quo                                        & 175                     & 305                     & 166                     & 0                \\ \hline
No-reinsurer case                                                & 125.721                 & 191.536                 & 109.730                 & N/A              \\ \hline
JPO1 and JPO2                                     & 74.9797                 & 84.3771                 & 63.9625                 & 153.829          \\ \hline
BO1                                               & 87.2974                 & 116.154                 & 75.2239                 & 103.052          \\ \hline
BO2                                               & 104.094                 & 90.0197                 & 78.6610                 & 109.338          \\ \hline
\end{tabular}}%
\caption{Mean-variance disutilities among agents in different contracts.}
	\label{Tab:MeanVariance}
\end{table}

Table \ref{Tab:MeanVariance} reports the mean-variance disutilities for all agents, which indicate the level of risk borne by each agent.
We note that 
including the reinsurance layer 
yields a {substantially} greater risk reduction for all members than in the case without reinsurance. As the mean-variance disutilities attain their minimum in JP-optimal contracts, this indicates better risk sharing among agents in the full-cooperation framework. 
Juxtaposing the two Bowley-optimal contracts in Table \ref{Tab:MeanVariance} reflects the impact of the single-loading restriction. As Member 2 cedes more risk in the BO2 contract (see Table \ref{Tab:Reinsurance_comparison}), she faces a lower mean-variance disutility. Note that similar arguments can be made to explain why Members 1 and 3 instead bear more risk in the BO2 contract.

\begin{table}[!ht]
\centering
{\small\begin{tabular}{|c|c|c|c|c|c|}
\hline
     & $\omega_1(\bm{A},\bm{p},\bm{\eta})$ & $\omega_2(\bm{A},\bm{p},\bm{\eta})$ & $\omega_3(\bm{A},\bm{p},\bm{\eta})$ & $\omega_R(\bm{A},\bm{p},\bm{\eta})$ & Total welfare gains       \\ \hline
No-reinsurer case   & 49.2790                             & 113.464                             & 56.2696                             & N/A                                 & 219.012                   \\ \hline
JPO1 & 53.1558                             & 121.383                             & 58.4651                             & 35.9478                             & \multirow{2}{*}{268.951} \\ \cline{1-5}
JPO2 & 48.6541                             & 135.508                             & 57.3793                             & 27.4096                             &                           \\ \hline
BO1  & 51.9859                             & 120.213                             & 57.2952                            & 34.7780                            & 264.272                   \\ \hline
BO2  & 48.6430                             & 131.707                             & 56.3063                             & 27.2311                             & 263.888                   \\ \hline
\end{tabular}}%
\caption{Welfare gains among contracts.}
\label{Tab:WelfareGains}
\end{table}

Table \ref{Tab:WelfareGains} reports the overall welfare gains of agents. Scrutinizing the effect of reinsurance, we note that all figures are much lower in the design without the reinsurer than in those with the reinsurer, highlighting the value of the reinsurance option and thus echoing the finding in \citet[Theorem 4]{anthropelos2026expansion} despite a different research focus.

Concerning the effect of the single-loading restriction, under the Bowley design, the total welfare gain of the BO2 contract is smaller than that of the BO1 contract. {Indeed, except for Member 2 who benefits substantially from the reduced loading (see Table \ref{Tab:SafetyLoadings}), all other agents experience a decline in individual welfare. This suggests that the single-loading restriction primarily benefits members with riskier losses, while failing to improve overall welfare and potentially harming it. In particular, Table \ref{Tab:premium_comparison} indicates that the reinsurer's welfare loss is due to a lower premium income.  
A similar effect on the individual welfare is observed when comparing JPO1 and JPO2 contracts, except that the total welfare gain is preserved due to the Pareto design. }

Between the two designs, all members and the reinsurer prefer the Pareto design to the Bowley design, with all individual welfare gains being higher under the JPO1 (resp.~JPO2) contract than under the BO1 (resp.~BO2) contract. The intuition is straightforward: the Pareto design generates a higher total welfare gain by default and, owing to the flexibility in setting the reinsurance premium, the resulting surplus is redistributed among all agents according to \eqref{eq:omega:dis:numerical}. It is also noteworthy that the reinsurer attains a higher welfare gain under the JP-optimal contracts owing to the fact that the higher premium income more than compensates for the additional risk (see Tables \ref{Tab:premium_comparison} and \ref{Tab:MeanVariance}).
From the members' perspective, although they need to pay more under JP-optimal contracts, they are well-compensated with better risk sharing and thus enjoy greater welfare gains.  

Consequently, for all agents except Member 2, individual welfare gains are maximized under the JPO1 contract.
By contrast, benefiting from the reduced loading, Member 2 attains the highest welfare gain under the JPO2 contract. Moreover, the asymmetric allocation of welfare toward Member 2 in the JPO2 contract results in lower individual welfare gains for all other agents compared to the JPO1 contract, even though their total welfare is the same. This observation further reinforces the conclusion that the single-loading restriction disproportionately favors high-risk members.   }



\subsection{Impact of the reinsurer's risk aversion}

In this subsection, we examine the effect of the reinsurer's risk aversion $\gamma_R$ on members' reinsurance policies and the total welfare gains across various contracts. Given this focus, we do not consider the construction of coalitionally stable JP-optimal contracts.

\subsubsection{Members' reinsurance policies}

\begin{figure}[!ht]	
	\begin{subfigure}[t]{.33\linewidth}
		\centering\includegraphics[scale=0.285]{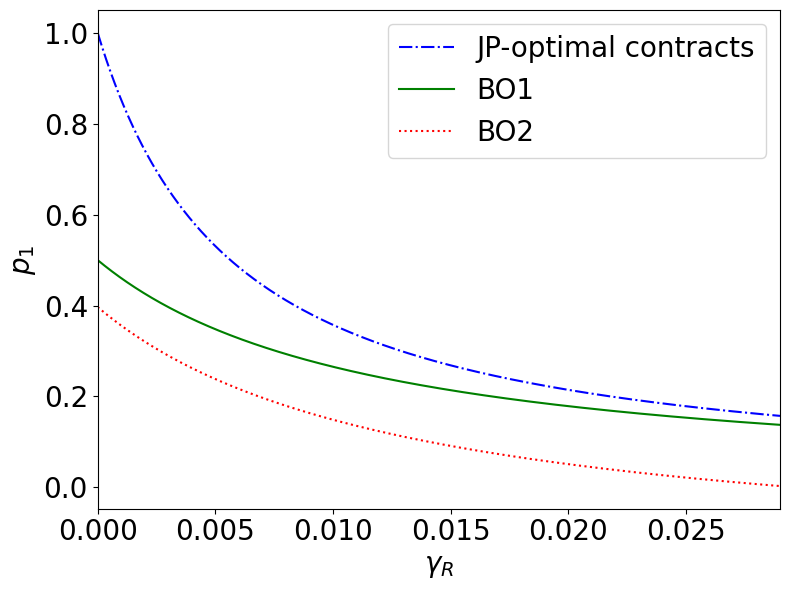}
		\caption{Member 1}
		\label{fig:p1}
	\end{subfigure}%
        \begin{subfigure}[t]{.33\linewidth}
		\centering\includegraphics[scale=0.285]{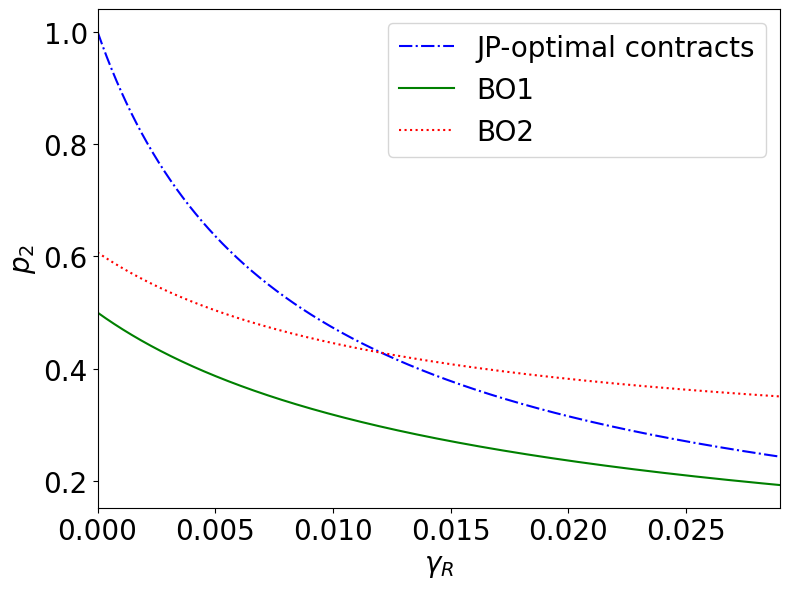}
		\caption{Member 2}
		\label{fig:p2}
	\end{subfigure}%
    \begin{subfigure}[t]{.33\linewidth}
		\centering\includegraphics[scale=0.285]{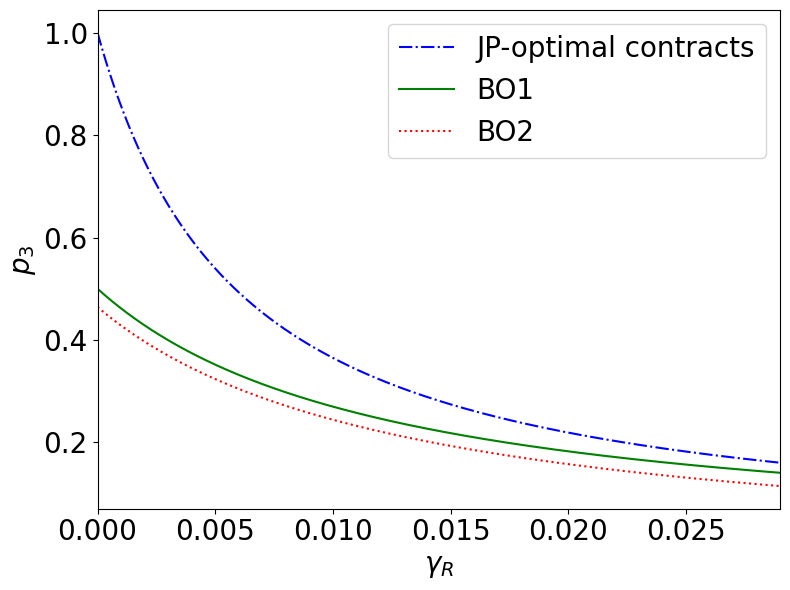}
		\caption{Member 3}
		\label{fig:p3}
	\end{subfigure}
    \caption{Members' reinsurance strategies of the four contracts with different $\gamma_R$.}
	\label{fig:pVSGammaR}
\end{figure}


Fig.~\ref{fig:pVSGammaR} displays the members' reinsurance strategies $p_i$ against $\gamma_R$.
In general, every $p_i$ is decreasing in $\gamma_R$. This is consistent with the general intuition that the more risk averse the reinsurer, the less risk she is willing to take. In addition, when $\gamma_R=0$, i.e., the reinsurer is risk-neutral, all members adopt full reinsurance in the JP-optimal contracts. {This can also be deduced from the form of $\bm{p}_*$ given in \eqref{Eq:OptAP2}.} Moreover, members cede only a portion of their risks when the reinsurer is risk-neutral. In particular, under the BO1 contract, {the reinsurance proportion is always $50\%$ for all members when $\gamma_R=0$, which follows directly from \eqref{Eq:OptAP} and \eqref{Eq:Opteta}}.

{Concerning the impact of the single-loading restriction, we note that Member 1's reinsurance is near zero in the BO2 contract when $\gamma_R=0.029$.\footnote{This also explains the choice of the plotting range.}
This indicates that the single-loading restriction can discourage the least risk-averse member from participating in reinsurance protection as the reinsurer becomes more risk averse. Indeed, as noted earlier, this restriction tends to favor members with riskier losses. When $\gamma_R>0.013$, Member~2 adopts the highest reinsurance level under the BO2 contract, highlighting the extent to which the single-loading restriction enhances her welfare with a more risk-averse reinsurer.}


{Comparing the three contract designs, we find that the ordering of reinsurance proportions in Fig.~\ref{fig:pVSGammaR} largely aligns with that reported in Table~\ref{Tab:Reinsurance_comparison}, with reinsurance levels typically being highest for all members under the JP-optimal contracts. Focusing on the BO1 and JP-optimal contracts, this finding echoes the results of \citet[Theorem~4.3]{jiang2025bowley}, despite the different modeling framework.  The only exception to the above pattern occurs for Member~2, for whom the reinsurance level under the BO2 contract exceeds that under the JP-optimal contracts when $\gamma_R$ becomes sufficiently large for the aforementioned reason. In addition, the reinsurance strategies of all members tend to converge under the JP-optimal and BO1 contracts as $\gamma_R$ becomes sufficiently large. This occurs because the reinsurer's risk-bearing capacity decreases as she becomes more risk averse, reducing the advantage of the JP-optimal contracts over the BO1 contracts. Consequently, the distinction between the two designs becomes less significant, leading to the alignment of the reinsurance strategies.
  }


\subsubsection{Total welfare gains}


\begin{figure}[!ht]
    \centering
    \includegraphics[scale=0.325]{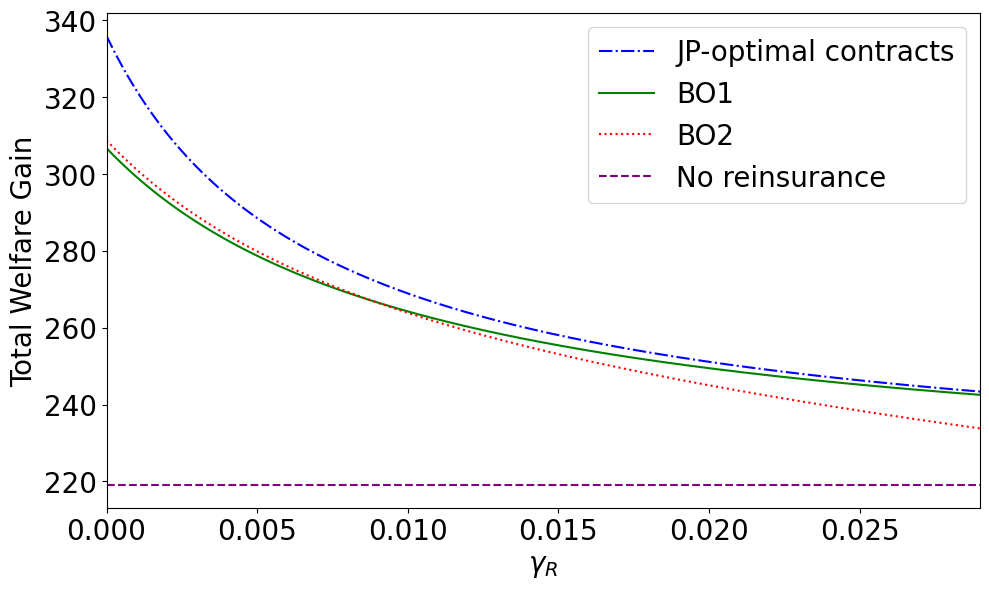}
    \caption{Comparison of total welfare gains among contracts with different $\gamma_R$.}
    \label{fig:TWGvsGammaR}
\end{figure}

Fig.~\ref{fig:TWGvsGammaR}  shows the total welfare gains of all contracts as $\gamma_R$ varies, where JP-optimal contracts attain the largest total welfare levels and thus corroborate the result in Corollary \ref{Cor:WelfareGain}.
{We highlight three key observations below.}

First, the welfare levels of both Pareto and Bowley contracts decline with $\gamma_R$. 
{As reflected in the decline of members’ reinsurance strategies with $\gamma_R$ depicted in Fig.~\ref{fig:pVSGammaR}, the effectiveness of the risk-transfer mechanism weakens as the risk-bearing capacity of the reinsurer decreases.}
Such dampening of risk sharing 
leads to a decline in the welfare improvement.

{Second, concerning the impact of the single-loading restriction, we observe that when $\gamma_R$ is small, the total welfare gain under the BO2 contract mildly exceeds that under the BO1 contract. As the reinsurer has a higher risk-bearing capacity, Member 2, who bears the riskiest loss, can take advantage of the loading design in the BO2 contract to transfer her risk to the reinsurer. This disproportionately increases her welfare gain, which dominates the welfare improvement of the community, thereby leading to a lower total welfare gain (cf. Table \ref{Tab:WelfareGains}). As $\gamma_R$ increases, the risk-bearing capacity is weakened, which calls for a more balanced distribution of risks in the community. In that case, benefiting from the better risk-premium alignment, the BO1 contract outperforms the BO2 contract and delivers a higher total welfare gain. }

{Third, the total welfare gains of the JP-optimal and the BO1 contracts converge as $\gamma_R$ increases. This is consistent with the findings in Fig.~\ref{fig:pVSGammaR}, which reveal the diminishing advantage of the Pareto design as the reinsurer’s risk-bearing capacity declines, ultimately aligning both the reinsurance strategies and the total welfare outcomes of the two designs. }

\section{Conclusion}\label{Sec:Conclusion}
This paper introduces a new design of P2P insurance schemes with an endogenous 
reinsurance treaty.
We compare two canonical designs depending on the interaction between the plan manager and the reinsurer: Pareto and Bowley.

In the Pareto design, full cooperation induces pricing freedom, corresponding to the allocation of welfare gains among agents. By formulating a transferable-utility cooperative game, we show that this allocation problem admits a non-empty core, and any core allocation induces a coalitionally stable JP-optimal contract. 
In contrast, the Bowley design leads to strategic pricing distortions. 
We show that it is never JP-optimal and yields strictly lower total welfare than its cooperative counterpart. This arises from the reinsurer’s first-mover advantage, which leads to higher reinsurance prices and alters reinsurance demand and mutualization patterns. 

Our numerical analysis showcases the welfare improvement of the reinsurance option and illustrates how single-loading restrictions further affect welfare distribution, particularly disadvantaging low-risk members. 
Also, the comparative static analysis highlights 
that a less risk-averse reinsurer accepts more risk from members, leading to a greater increase in total welfare.

Having opened up a new avenue of game-theoretic study on reinsurance contracting in P2P insurance, our framework can be generalized to other settings: 
(i) A continuous-time risk-exchange environment \citep{tao2025pareto}, which can also incorporate correlated investment risk, thereby enabling designs with collective investment \citep{balter2024robust,ng2025pareto}; 
(ii) Multiple reinsurers: this enables different strategic interaction between reinsurers such as cooperation, co-opetition and competition; see, e.g., \cite{cao2025co} and \cite{chu2025mean};
(iii) The Nash bargaining model \citep{boonen2016nash,nguyen2025optimal}: it is of both theoretical and practical interest to study differences in bargaining power between the reinsurer and the P2P plan manager. 
We leave these for future research.



\section*{Acknowledgments}		
		Tak Wa Ng and Thai Nguyen acknowledge the support from the Natural Sciences and Engineering Research Council of Canada (Grant No. RGPIN-2021-02594) and the Chair of Actuary, Laval University. Tak Wa Ng also acknowledges the hospitality of the Department of Mathematics at The Ohio State University during the drafting of this paper. Kenneth Ng acknowledges support from the start-up fund at The Ohio State University and the CKER research fund of the Society of Actuaries (Project title: Pricing and Staking of Decentralized Insurance).

\begingroup
\setlength{\bibsep}{0.25\baselineskip} 
\bibliographystyle{apalike}
\bibliography{sample}
\endgroup


\appendix
\renewcommand{\thesubsection}{\Alph{subsection}}

\section{Proof of Lemma \ref{Lem:Mbar}}\label{App:Mbar}


To show that $\overline{\bm{M}}$ is positive definite, consider for any $\bm{v}\in \mathbb{R}^n$, 
    \begin{align*}
        \bm{v}^\top \overline{\bm{M}}\bm{v} &= \gamma_R\bm{v}^\top\bm{\Sigma}\bm{v} + k\bm{v}^\top\bm{D}(\bm{\mu}^2)\bm{D}(\bm{\gamma})\bm{v} +  \frac{1}{\sum_{j=1}^n\gamma_j^{-1}}\left( \bm{v}^\top \bm{\Sigma}\bm{v} - k(\bm{\mu}^\top\bm{v})^2  \right).   
    \end{align*}
Using the Cauchy--Schwarz inequality and the positive-definiteness of $\bm{\Sigma}$,  
\[
(\bm{\mu}^\top \bm{v})^2 = \big((\bm{\Sigma}^{1/2}\bm{v})^\top (\bm{\Sigma}^{-1/2}\bm{\mu})\big)^2 \le (\bm{v}^\top \bm{\Sigma} \bm{v}) (\bm{\mu}^\top \bm{\Sigma}^{-1} \bm{\mu}) = \frac{1}{k}(\bm{v}^\top \bm{\Sigma} \bm{v}). 
\]
Therefore, for $\bm{v}\neq 0$, we have $\bm{v}^\top \overline{\bm{M}}\bm{v} \geq  \gamma_R\bm{v}^\top\bm{\Sigma}\bm{v} + k\bm{v}^\top\bm{D}(\bm{\mu}^2)\bm{D}(\bm{\gamma})\bm{v}  >0$.

\section{Proof of Proposition \ref{Prop:ConvexPareto}}\label{App:ConvexPareto}

Due to the actuarial fairness condition, the objective can be transformed to $\frac{1}{2}tr(\bm{D}(\bm{\gamma})\bm{A}\bm{\Sigma}\bm{A}^\top)+\frac{\gamma_{R}}{2}\bm{p}^\top\bm{\Sigma}\bm{p}.$
Consider the corresponding Lagrangian:
{\small\begin{align*}
\mathcal{L}(\bm{A},\bm{p},\bm{\phi},\bm{\psi})
=\frac{1}{2}tr({\bm{D}(\bm{\gamma})}\bm{A}\bm{\Sigma}\bm{A}^\top)+\frac{\gamma_{R}}{2}\bm{p}^\top\bm{\Sigma}\bm{p}+(\bm{1}^\top-\bm{1}^\top\bm{A}-\bm{p}^\top)\bm{\phi}+\bm{\psi}^\top (\bm{\mu}-\bm{D}(\bm{\mu})\bm{p}-\bm{A}\bm{\mu}),
\end{align*}}%
where $\bm{\phi}$ and $\bm{\psi}$ are Lagrangian multipliers capturing two constraints in Problem \eqref{Prob:RS}.
The first order condition (FOC) then yields
{\small\begin{align}\label{Eq:FOCL2}
\begin{split}
&\frac{\partial \mathcal{L}}{\partial \bm{A}}=\bm{D}(\bm{\gamma})\bm{A}\bm{\Sigma}-\bm{1}\bm{\phi}^\top-\bm{\psi}\bm{\mu}^\top=\bm{0}_{n\times n},\quad
\frac{\partial \mathcal{L}}{\partial \bm{p}}=\gamma_{R}\bm{\Sigma}\bm{p}-\bm{D}(\bm{\mu})\bm{\psi}-\bm{\phi}=\bm{0},\\
&\frac{\partial \mathcal{L}}{\partial \bm{\phi}}=\bm{1}^\top-\bm{1}^\top\bm{A}-\bm{p}^\top=\bm{0},\quad
\frac{\partial \mathcal{L}}{\partial \bm{\psi}}=\bm{\mu}-\bm{D}(\bm{\mu})\bm{p}-\bm{A}\bm{\mu}=\bm{0},
\end{split}
\end{align}}%
where the first equation follows from \citet[Equation (114)]{petersen2008matrix} and the cyclic property of the trace operator.

From the first equation in \eqref{Eq:FOCL2}, we have $ \bm{A}=\bm{D}(\bm{\gamma})^{-1}(\bm{1}\bm{\phi}^\top+\bm{\psi}\bm{\mu}^\top)\bm{\Sigma}^{-1}$.
Plugging this into $\bm{1}^\top\bm{A}=\bm{1}^\top-\bm{p}^\top$, we have $\bm{1}^\top\bm{D}(\bm{\gamma})^{-1}(\bm{1}\bm{\phi}^\top+\bm{\psi}\bm{\mu}^\top)\bm{\Sigma}^{-1}=\bm{1}^\top-\bm{p}^\top$,
which leads to 
   {\small  \begin{equation*}
        \bm{\phi}^\top=\frac{\bm{1}^\top}{\sum_{j=1}^n\gamma_j^{-1}}(\bm{D}(\bm{1}-\bm{p})\bm{\Sigma}-\bm{D}(\bm{\gamma})^{-1}\bm{\psi}\bm{\mu}^\top).
    \end{equation*}}

Substituting $ \bm{A}=\bm{D}(\bm{\gamma})^{-1}(\bm{1}\bm{\phi}^\top+\bm{\psi}\bm{\mu}^\top)\bm{\Sigma}^{-1}$ and $\bm{\phi}^\top$ to $\bm{A}\bm{\mu}=\bm{D}(\bm{1}-\bm{p})\bm{\mu}$, we obtain
 {\small \begin{align*}
      &\bm{D}(\bm{1}-\bm{p})\bm{\mu}
       = \bm{D}(\bm{\gamma})^{-1}\left(\frac{\bm{1}\bm{1}^\top}{\sum_{j=1}^n\gamma_j^{-1}}(\bm{D}(\bm{1}-\bm{p})\bm{\Sigma}-\bm{D}(\bm{\gamma})^{-1}\bm{\psi}\bm{\mu}^\top)+\bm{\psi}\bm{\mu}^\top\right)\bm{\Sigma}^{-1}\bm{\mu}\\
\iff&\left(\bm{D}(\bm{\gamma})-\frac{\bm{1}\bm{1}^\top}{\sum_{j=1}^n\gamma_j^{-1}}\right)\bm{D}(\bm{1}-\bm{p})\bm{\mu}=k^{-1}\left(\bm{I}_n-\frac{\bm{1}\bm{1}^\top\bm{D}(\bm{\gamma})^{-1}}{\sum_{j=1}^n\gamma_j^{-1}}\right)\bm{\psi}.
    \end{align*}}%
Solving the equation yields 
 {\small    \begin{equation}
    \label{eq:psi:phi:pareto}
        \bm{\psi}=k\bm{D}(\bm{\gamma})\bm{D}(\bm{1}-\bm{p})\bm{\mu} \quad \text{and} \quad \bm{\phi}^\top=\frac{\bm{1}^\top}{\sum_{j=1}^n\gamma_j^{-1}}(\bm{D}(\bm{1}-\bm{p})\bm{\Sigma}-k\bm{D}(\bm{1}-\bm{p})\bm{\mu}\bm{\mu}^\top).
    \end{equation}}%
Substituting \eqref{eq:psi:phi:pareto} into $\bm{A}=\bm{D}(\bm{\gamma})^{-1}(\bm{1}\bm{\phi}^\top+\bm{\psi}\bm{\mu}^\top)\bm{\Sigma}^{-1}$, {we deduce that $\bm{A}_*$ is given by \eqref{Eq:OptAP2}. }  

To obtain $\bm{p}_*$, we substitute \eqref{eq:psi:phi:pareto} into the second equation of \eqref{Eq:FOCL2}, which leads to
\begin{align}\label{Eq:linp2}
\begin{split}    &\gamma_{R}\bm{\Sigma}\bm{p}=k\bm{D}(\bm{\mu})\bm{D}(\bm{\gamma})\bm{D}(\bm{1}-\bm{p})\bm{\mu}+\frac{1}{\sum_{j=1}^n\gamma_j^{-1}}(\bm{\Sigma}(\bm{1}-\bm{p})-k\bm{\mu}\bm{\mu}^\top(\bm{1}-\bm{p})).
\end{split}
\end{align}
Solving the equation yields \eqref{Eq:OptAP2}. 

Finally, we prove that the unique solution of the linear system \eqref{Eq:linp2} lies in $(0,1)^n$ given the condition \eqref{Eq:unicond2}. Note that the system can be re-expressed as: {\small 
\begin{equation}\label{Eq:pi2}
L_i(\bm{p}):=\sum_{m=1}^n\sigma_{im}\left(\frac{1-p_m}{\sum_{j=1}^n\gamma_j^{-1}}-\gamma_{R}p_m\right)+k\mu_i^2\gamma_i(1-p_i)-\frac{k\mu_i}{\sum_{j=1}^n\gamma_j^{-1}}\sum_{m=1}^n\mu_m(1-p_m)=0,
\end{equation}}%
for all $i=1,\ldots,n$. The claim can be established using the Poincaré–Miranda theorem. To this end, we need to check that for $\bm{p}\in[0,1]^n$, the sign of left-hand side in \eqref{Eq:pi2} is nonnegative (resp.~nonpositive) if $p_i=1$ (resp. $p_i=0$) for all $i=1,\ldots,n$. For $p_i=1$, we have  
   {\small   \begin{align*}
    &\ L_i(p_1,\dots,p_{i-1},1,p_{i+1},\dots,p_n) \\ 
     =&\ -\gamma_R \sigma_i^2 + \frac{\sum_{m\neq i} \sigma_{im}}{\sum_{j=1}^n\gamma_j^{-1}} - \frac{k\mu_i}{\sum_{j=1}^n\gamma_j^{-1}}\sum_{m\neq i}\mu_m + \sum_{m\neq i}p_m\left(\frac{k\mu_i\mu_m}{\sum_{j=1}^n\gamma_j^{-1}} - \left(\gamma_R+\frac{1}{\sum_{j=1}^n\gamma_j^{-1}}\right)\sigma_{im} \right)\\
     \leq&\ -\gamma_R \sigma_i^2 + \frac{\sum_{m\neq i} \sigma_{im}}{\sum_{j=1}^n\gamma_j^{-1}} - \frac{k\mu_i\sum_{m\neq i}\mu_m}{\sum_{j=1}^n\gamma_j^{-1}} + \sum_{m\neq i}\left(\frac{k\mu_i\mu_m}{\sum_{j=1}^n\gamma_j^{-1}}-\left(\gamma_R+\frac{1}{\sum_{j=1}^n\gamma_j^{-1}}\right)\sigma_{im} \right)_+  
     < 0. 
 \end{align*}}%
On the other hand, for $p_i=0$, we obtain 
{\small\begin{align*}
     &\ L_i(p_1,\dots,p_{i-1},0,p_{i+1},\dots,p_n) \\
    =&\  \frac{\sum_{m=1}^n\sigma_{im}}{\sum_{j=1}^n\gamma_j^{-1}} + k\mu_i^2\gamma_i - \frac{k\mu_i\sum_{m=1}^n\mu_m}{\sum_{j=1}^n\gamma_j^{-1}} +  \sum_{m\neq i} p_m\left( \frac{k\mu_i\mu_m}{\sum_{j=1}^n\gamma_j^{-1}} - \left(\gamma_R + \frac{1}{\sum_{j=1}^n\gamma_j^{-1}} \right)\sigma_{im}  \right) \\
    \geq&\  \frac{\sum_{m=1}^n\sigma_{im}}{\sum_{j=1}^n\gamma_j^{-1}} + k\mu_i^2\gamma_i - \frac{k\mu_i\sum_{m=1}^n\mu_m}{\sum_{j=1}^n\gamma_j^{-1}} - \sum_{m\neq i} \left( \left(\gamma_R + \frac{1}{\sum_{j=1}^n\gamma_j^{-1}} \right)\sigma_{im}-\frac{k\mu_i\mu_m}{\sum_{j=1}^n\gamma_j^{-1}}  \right)_+  
    > 0.
    \end{align*}}%

Note that the above two inequalities follow from the condition \eqref{Eq:unicond2}. In addition, as the inequalities in \eqref{Eq:unicond2} are strict, $p_i\neq0$ (resp. $p_i\neq1$) for all $i=1,\dots,n$. Finally, the uniqueness  follows from the invertibility of $\overline{\bm{M}}$ as asserted in Lemma \ref{Lem:Mbar}.

\section{Proof of Theorem \ref{Thm:JPO}}\label{App:JPO}

We first prove $\mathcal{Z}\subseteq\mathcal{JPO}$ by contradiction. Assume that there exists a contract $(\bm{\hat{A}},\bm{\hat{p}},\bm{\hat{\eta}})\in\mathcal{Z}$ that is not JP-optimal, then there exists $(\bm{\Tilde{A}},\bm{\Tilde{p}},\bm{\Tilde{\eta}})\in\mathcal{IR}$ such that $v(\bm{\Tilde{\eta}},\bm{\Tilde{p}})\leq v(\bm{\hat{\eta}},\bm{\hat{p}})$ and  $u_i(\bm{\Tilde{A}},\bm{\Tilde{p}},\bm{\Tilde{\eta}})\leq u_i(\bm{\hat{A}},\bm{\hat{p}},\bm{\hat{\eta}})$ 
for all $i=1,\dots,n$, with at least one strict inequality. This implies
$u(\bm{\Tilde{A}},\bm{\Tilde{p}},\bm{\Tilde{\eta}})+v(\bm{\Tilde{\eta}},\bm{\Tilde{p}})<u(\bm{\hat{A}},\bm{\hat{p}},\bm{\hat{\eta}})+v(\bm{\hat{\eta}},\bm{\hat{p}})
$, which violates the optimality of $(\bm{\hat{A}},\bm{\hat{p}},\bm{\hat{\eta}})$.

Next, we prove that $\mathcal{JPO}\subseteq\mathcal{Z}$, again by contradiction. Assume that there exists $(\bm{\underline{A}},\bm{\underline{p}},\bm{\underline{\eta}})\in\mathcal{JPO}$ that does not belong to $\mathcal{Z}$. Then, there exists $(\bm{\check{A}},\bm{\check{p}},\bm{\check{\eta}})\in\mathcal{Z}$ such that $u(\bm{\check{A}},\bm{\check{p}},\bm{\check{\eta}})+v(\bm{\check{\eta}},\bm{\check{p}})<u(\bm{\underline{A}},\bm{\underline{p}},\bm{\underline{\eta}})+v(\bm{\underline{\eta}},\bm{\underline{p}})$. 
Define $\eta_i^\star$, $i=1,\dots,n$, by the solution of
\begin{equation}\label{Eq:etastar}    (1+\eta_i^\star)\check{p}_i\mu_i=(1+\check{\eta}_i)\check{p}_i\mu_i+u_i(\bm{\underline{A}},\bm{\underline{p}},\bm{\underline{\eta}})-u_i(\bm{\check{A}},\bm{\check{p}},\bm{\check{\eta}})=(1+\underline{\eta}_i)\underline{p}_i\mu_i+\rho_i(\bm{\underline{A}})-\rho_i(\bm{\check{A}}).
\end{equation}
Note that this needs $\check{p}_i>0$ for all $i=1,\dots,n$, which is ensured by Condition \eqref{Eq:unicond2}.
We then have
\begin{align*}   u_i(\bm{\check{A}},\bm{\check{p}},\bm{\eta}^\star)=&\rho_i(\bm{\check{A}})+(1+\eta_i^\star)\check{p}_i\mu_i
=(1+\underline{\eta}_i)\underline{p}_i\mu_i+\rho_i(\bm{\underline{A}})
=u_i(\bm{\underline{A}},\bm{\underline{p}},\bm{\underline{\eta}})
\end{align*}
for all $i=1,\dots,n$, where the second last equality follows from \eqref{Eq:etastar}.

On the other hand, summing \eqref{Eq:etastar} over $i=1,\dots,n$ yields $(\bm{D}(\bm{\mu})\bm{\eta}^\star)^\top\bm{\check{p}}=(\bm{D}(\bm{\mu})\bm{\check{\eta}})^\top\bm{\check{p}}+u(\bm{\underline{A}},\bm{\underline{p}},\bm{\underline{\eta}})-u(\bm{\check{A}},\bm{\check{p}},\bm{\check{\eta}})$. 
Using this and $u(\bm{\check{A}},\bm{\check{p}},\bm{\check{\eta}})+v(\bm{\check{\eta}},\bm{\check{p}})<u(\bm{\underline{A}},\bm{\underline{p}},\bm{\underline{\eta}})+v(\bm{\underline{\eta}},\bm{\underline{p}})$, 
\begin{align*}
    v(\bm{\eta^\star},\bm{\check{p}})=&\frac{\gamma_{R}}{2}\bm{\check{p}}^\top\bm{\Sigma}\bm{\check{p}}-(\bm{D}(\bm{\mu})\bm{\eta}^\star)^\top\bm{\check{p}}=v(\bm{\check{\eta}},\bm{\check{p}})+u(\bm{\check{A}},\bm{\check{p}},\bm{\check{\eta}})-u(\bm{\underline{A}},\bm{\underline{p}},\bm{\underline{\eta}})<v(\bm{\underline{\eta}},\bm{\underline{p}}),
\end{align*}
These imply $(\bm{\check{A}},\bm{\check{p}},\bm{\eta}^\star)\in\mathcal{IR}$ and Pareto-dominates $(\bm{\underline{A}},\bm{\underline{p}},\bm{\underline{\eta}})$, leading to the desired contradiction.

Next, we show that $\mathcal{Z}$ is non-empty. {To this end, for $(\bm{A}_*,\bm{p}_*)$ solving \eqref{Prob:RS}, we construct $\bm{\eta}_*$ such that $(\bm{A}_*,\bm{p}_*,\bm{\eta}_*)\in \mathcal{IR}$.} For any $\bm{\eta}$, let $\varepsilon:= u(\bm{I},0,\bm{\eta})+v(\bm{\eta},\bm{0}) -u(\bm{A}_*,\bm{p}_*,\bm{\eta}) - v(\bm{\eta},\bm{p}_*)$. Note that $\varepsilon>0$ and is independent of $\bm{\eta}$, since $(\bm{A}_*,\bm{p}_*)$ is the unique minimizer of \eqref{Prob:RS}. Define $\bm{\eta}_*=(\eta_{*1},\dots,\eta_{*n})$ by,
       \begin{equation}
        \label{Eq:eta*:IR}
            \eta_{*i}p_{i*}\mu_i = \frac{\gamma_i}{2}\left( \sigma_i^2 - \bm{A}_{*i}\bm{\Sigma}\bm{A}_{*i}^\top \right) - \frac{\varepsilon}{2n}, \quad i=1,\dots,n,
         \end{equation}   
    It is clear that the members' IR constraints \eqref{Eq:MemberIR} are satisfied by the definition of $\eta_{*i}$ with welfare gain 
        \begin{equation*}
            \omega_i(\bm{A}_*,\bm{p}_*,\bm{\eta}_*) = \mu_i + \frac{\gamma_i\sigma_i^2}{2}- u_i(\bm{A}_*,\bm{p}_*,\bm{\eta}_*) = \frac{\gamma_i}{2}\left( \sigma_i^2 -  \bm{A}_{*i}\bm{\Sigma}\bm{A}_{*i}^\top  \right) -  \eta_{*i}p_{i*}\mu_i = \frac{\varepsilon}{2n}>0. 
        \end{equation*}

    Summing \eqref{Eq:eta*:IR} over $i=1,\dots,n$, we have 
        \begin{equation*}
            (\bm{D}(\bm{\mu})\bm{\eta}_*)^\top\bm{p}_* = \frac{1}{2}\left( tr(\bm{D}(\bm{\gamma})\bm{\Sigma}) - tr(\bm{D}(\bm{\gamma})\bm{A}_*\bm{\Sigma}\bm{A}_*^\top) \right)  - \frac{\varepsilon}{2}. 
        \end{equation*}
    Using this and the actuarial fairness condition, we have {\small 
        \begin{align*}
         \omega_R(\bm{A}_*,\bm{p}_*,\bm{\eta}_*)&=       (\bm{D}(\bm{\mu})\bm{\eta}_*)^\top\bm{p}_* - \frac{\gamma_R}{2}\bm{p}_*^\top\bm{\Sigma}\bm{p}_* \\
            &=  \frac{1}{2}\left( tr(\bm{D}(\bm{\gamma})\bm{\Sigma}) - tr(\bm{D}(\bm{\gamma})\bm{A}_*\bm{\Sigma}\bm{A}_*^\top) \right) -  \frac{\gamma_R}{2}\bm{p}_*^\top\bm{\Sigma}\bm{p}_* - \frac{\varepsilon}{2} \\
            &= \frac{1}{2}tr(\bm{D}(\bm{\gamma})\bm{\Sigma}) + \bm{1}^\top \bm{A}_*\bm{\mu} + \bm{\mu}^\top\bm{p}_* \\
            &\quad -\left( \frac{\gamma_R}{2}\bm{p}_*^\top\bm{\Sigma}\bm{p}_* + \frac{1}{2}tr(\bm{D}(\bm{\gamma})\bm{A}_*\bm{\Sigma}\bm{A}_*^\top) + \bm{1}^\top \bm{A}_*\bm{\mu} + \bm{\mu}^\top\bm{p}_* \right) - \frac{\varepsilon}{2} \\
            &=  \frac{1}{2}tr(\bm{D}(\bm{\gamma})\bm{\Sigma})  + \bm{1}^\top\bm{\mu} -\left(u(\bm{A}_*,\bm{p}_*,\bm{\eta}_*) + v(\bm{\eta}_*,\bm{p}_*) \right) - \frac{\varepsilon}{2} \\
            &= u(\bm{I},\bm{0},\bm{\eta}_*) + v(\bm{\eta}_*,\bm{0})  - \left(u(\bm{A}_*,\bm{p}_*,\bm{\eta}_*) + v(\bm{\eta}_*,\bm{p}_*)   \right) - \frac{\varepsilon}{2}   =\frac{\varepsilon}{2} >0,
        \end{align*}}%
     Therefore, the reinsurer's IR constraint is also satisfied.

To study the equivalence between (i) and (ii), we define the following auxiliary problem:
\begin{equation}
\label{Prob:auxJPO}
    \min_{(\bm{A},\bm{p},\bm{\eta})\in\mathcal{IR}}v(\bm{\eta},\bm{p})\quad\text{s.t. }u_i(\bm{A},\bm{p},\bm{\eta})\leq c_i\text{ for all }i=1,\dots,n,\  \bm{A}\bm{\mu}+\bm{D}(\bm{\mu})\bm{p}=\bm{\mu},\ \mathbf{1}^\top\bm{A}+\bm{p}^\top=\bm{1}^\top.
\end{equation}
We show that $(\bm{A}_*,\bm{p}_*,\bm{\eta}_*)\in\mathcal{JPO}$ if and only if there exists $(c_i)_{i=1}^n$ such that $(\bm{A}_*,\bm{p}_*,\bm{\eta}_*)$ solves \eqref{Prob:auxJPO}.

Suppose that $(\bm{A}_*,\bm{p}_*,\bm{\eta}_*)\in\mathcal{JPO}$ and let $c_i=u_i(\bm{A}_*,\bm{p}_*,\bm{\eta}_*)$, $i=1,\dots,n$. We need to show that $(\bm{A}_*,\bm{p}_*,\bm{\eta}_*)$ is the minimizer of Problem \eqref{Prob:auxJPO} with parameters $c_1,\dots,c_n$.
Assume the contrary, so that there exists $(\bm{A}',\bm{p}',\bm{\eta}')\in\mathcal{IR}$ with $u_i(\bm{A}',\bm{p}',\bm{\eta}')\leq c_i= u_i(\bm{A}_*,\bm{p}_*,\bm{\eta}_*)$ for all $i=1,\dots,n$, and $v(\bm{\eta}',\bm{p}')< v(\bm{\eta}_*,\bm{p}_*)$. 
This contradicts with the JP optimality of $(\bm{A}_*,\bm{p}_*,\bm{\eta}_*)$. 

Conversely, assume that there exists $c_1,\dots,c_n$ such that $(\bm{A}_*,\bm{p}_*,\bm{\eta}_*)\notin\mathcal{JPO}$ is optimal for Problem \eqref{Prob:auxJPO}. Then, there exists $(\bm{A}_{'},\bm{p}_{'},\bm{\eta}_{'})\in\mathcal{IR}$ such that $u_i(\bm{A}_{'},\bm{p}_{'},\bm{\eta}_{'})\leq c_i= u_i(\bm{A}_*,\bm{p}_*,\bm{\eta}_*)$ for all $i=1,\dots,n$, and $v(\bm{\eta}_{'},\bm{p}_{'})\leq v(\bm{\eta}_*,\bm{p}_*)$, 
with at least one strict inequality. If $v(\bm{\eta}_{'},\bm{p}_{'})< v(\bm{\eta}_*,\bm{p}_*)$, then it contradicts the optimality of $(\bm{A}_*,\bm{p}_*,\bm{\eta}_*)$. Thus, we have $ v(\bm{\eta}_{'},\bm{p}_{'})= v(\bm{\eta}_*,\bm{p}_*)$,
and there exists $j\in\{1,\dots,n\}$ such that $  u_j(\bm{A}_{'},\bm{p}_{'},\bm{\eta}_{'})<u_j(\bm{A}_*,\bm{p}_*,\bm{\eta}_*)\leq c_j$.
Define $\epsilon_j:=u_j(\bm{A}_*,\bm{p}_*,\bm{\eta}_*)-u_j(\bm{A}_{'},\bm{p}_{'},\bm{\eta}_{'})$ and consider a new contract $(\bm{A}_{'},\bm{p}_{'},\bm{\eta}_>)$, where 
$\bm{\eta}_>=(\eta_{>1},\dots,\eta_{>n})$ is given by $\eta_{>i} := \eta_{'i}$ for $i\neq j$, and $\eta_{>j} := \eta_{'j} + \varepsilon_j/(p_{'j}\mu_j)$. Then, $u_i(\bm{A}_{'},\bm{p}_{'},\bm{\eta}_{>}) = u_i(\bm{A}_{'},\bm{p}_{'},\bm{\eta}_{'})\leq c_i$ for $i\neq j$, and $u_j(\bm{A}_{'},\bm{p}_{'},\bm{\eta}_{>}) = u_j(\bm{A}_{'},\bm{p}_{'},\bm{\eta}_{'}) + \varepsilon_j = u_j(\bm{A}_*,\bm{p}_*,\bm{\eta}_*) \leq c_j$. 
In addition, by the fact that $v(\bm{\eta}_{'},\bm{p}_{'})= v(\bm{\eta}_*,\bm{p}_*)$, we have $v(\bm{\eta}_>,\bm{p}_{'})=v(\bm{\eta}_{'},\bm{p}_{'})-\epsilon_j=v(\bm{\eta}_*,\bm{p}_*)-\epsilon_j<v(\bm{\eta}_*,\bm{p}_*)$,
contradicting the optimality of $(\bm{A}_*,\bm{p}_*,\bm{\eta}_*)$.

Finally, we need to show the equivalence of Problems \eqref{Prob:auxJPO} and \eqref{Prob:equJPO}, which follows from the fact that optimality of Problem \eqref{Prob:auxJPO} only occurs when constraints are binding; otherwise, one can repeat the above proof with a new contract that surcharges the member without a binding constraint to improve the objective value.  

To show the equivalence between (i) and (iii), we introduce the following auxiliary problem:
\begin{equation}\label{Prob:auxJPO2}
    \min_{(\bm{A},\bm{p},\bm{\eta})\in\mathcal{IR},{\bm{p}>0}}u(\bm{A},\bm{p},\bm{\eta})\quad\text{s.t. }v(\bm{\eta},\bm{p})\leq c_R,\ \bm{A}\bm{\mu}+\bm{D}(\bm{\mu})\bm{p}=\bm{\mu},\ \mathbf{1}^\top\bm{A}+\bm{p}^\top=\bm{1}^\top.
\end{equation}
Suppose that $(\bm{A}_*,\bm{p}_*,\bm{\eta}_*)\in\mathcal{JPO}$ and let $c_R=v(\bm{\eta}_*,\bm{p}_*)$. Assume the contrary that $(\bm{A}_*,\bm{p}_*,\bm{\eta}_*)$ does not solve Problem \eqref{Prob:auxJPO2}. Then, there exists $(\bm{A}^{"},\bm{p}^{"},\bm{\eta}^{"})\in\mathcal{IR}$ that solves Problem \eqref{Prob:auxJPO2}, with $u(\bm{A}^{"},\bm{p}^{"},\bm{\eta}^{"})<u(\bm{A}_*,\bm{p}_*,\bm{\eta}_*)$ and $v(\bm{\eta}^{"},\bm{p}^{"})\leq c_R$.
Fix $i\in\{1,\ldots,n\}$, and for $j\neq i$, define $\overline{\epsilon}_j:=u_j(\bm{A}_*,\bm{p}_*,\bm{\eta}_*)-u_j(\bm{A}^{"},\bm{p}^{"},\bm{\eta}^{"})$, $\eta_j^>=\eta_j^{"}+\frac{\overline{\epsilon}_j}{p_j^{"}\mu_j}$, and $\eta_i^>=\eta_i^{"}-\frac{\sum_{j\neq i}\overline{\epsilon}_j}{p_i^{"}\mu_i}$. 
 Then, we have $u_j(\bm{A}^{"},\bm{p}^{"},\bm{\eta}^>)=u_j(\bm{A}^{"},\bm{p}^{"},\bm{\eta}^{"})+\overline{\epsilon}_j=u_j(\bm{A}_*,\bm{p}_*,\bm{\eta}_*)$, $v(\bm{\eta}^>,\bm{p}^{"})=v(\bm{\eta}^{"},\bm{p}^{"})$, and
{\small\begin{align*}
u_i(\bm{A}^{"},\bm{p}^{"},\bm{\eta}^>)=&u_i(\bm{A}^{"},\bm{p}^{"},\bm{\eta}^{"})-\sum_{j\neq i}{\bar{\epsilon}_j}\\
=&u_i(\bm{A}^{"},\bm{p}^{"},\bm{\eta}^{"})-\sum_{j\neq i}(u_j(\bm{A}_*,\bm{p}_*,\bm{\eta}_*)-u_j(\bm{A}^{"},\bm{p}^{"},\bm{\eta}^{"}))\\
=&u(\bm{A}^{"},\bm{p}^{"},\bm{\eta}^{"})-\sum_{j\neq i}u_j(\bm{A}_*,\bm{p}_*,\bm{\eta}_*)\\
=&u(\bm{A}^{"},\bm{p}^{"},\bm{\eta}^{"})-u(\bm{A}_*,\bm{p}_*,\bm{\eta}_*)+u_i(\bm{A}_*,\bm{p}_*,\bm{\eta}_*)<u_i(\bm{A}_*,\bm{p}_*,\bm{\eta}_*),
\end{align*}}%
contradicting the JP-optimality of $(\bm{A}_*,\bm{p}_*,\bm{\eta}_*)$.

Conversely, suppose that $(\bm{A}_*,\bm{p}_*,\bm{\eta}_*)$ solves Problem \eqref{Prob:auxJPO2}. Assume the contrary that $(\bm{A}_*,\bm{p}_*,\bm{\eta}_*)$ is not JP-optimal and there exists $(\bm{A}^@,\bm{p}^@,\bm{\eta}^@)\in\mathcal{IR}$ such that $u_i(\bm{A}^@,\bm{p}^@,\bm{\eta}^@)\leq u_i(\bm{A}_*,\bm{p}_*,\bm{\eta}_*)$ for all $i=1,\ldots,n$, and $v(\bm{\eta}^@,\bm{p}^@)\leq v(\bm{\eta}_*,\bm{p}_*)$, 
with at least one of these inequalities being strict. The first $n$ inequalities must be equalities; otherwise, it violates the optimality of $(\bm{A}_*,\bm{p}_*,\bm{\eta}_*)$. Then, we have $v(\bm{\eta}^@,\bm{p}^@)<v(\bm{\eta}_*,\bm{p}_*)$.
Define $\underline{\epsilon}=v(\bm{\eta}_*,\bm{p}_*)-v(\bm{\eta}^@,\bm{p}^@)>0$ and
$\eta_i^<=\eta_i^@-\underline{\epsilon}/np_i^@\mu_i$. Then, $u_i(\bm{A}^@,\bm{p}^@,\bm{\eta}^<)=u_i(\bm{A}^@,\bm{p}^@,\bm{\eta}^@)-\frac{\underline{\epsilon}}{n}<u_i(\bm{A}^@,\bm{p}^@,\bm{\eta}^@)$.
This implies $u(\bm{A}^@,\bm{p}^@,\bm{\eta}^<)<u(\bm{A}^@,\bm{p}^@,\bm{\eta}^@)=u(\bm{A}_*,\bm{p}_*,\bm{\eta}_*)$ and $v(\bm{\eta}_*,\bm{p}_*)=v(\bm{\eta}^@,\bm{p}^@)+\underline{\epsilon}=v(\bm{\eta}^<,\bm{p}^@)$, contradicting the optimality of $(\bm{A}_*,\bm{p}_*,\bm{\eta}_*)$.

Finally, we observe that Problem \eqref{Prob:auxJPO2} reaches optimality when the inequality constraint becomes an equality, i.e., Problem \eqref{Prob:equJPO2}; otherwise, the loadings can be adjusted to improve the objective value.

\section{Proof of Theorem \ref{Thm:CoreNB}}\label{App:CoreNB}

We first define an auxiliary game $(\mathcal{T},b)$ that includes only coalitions with the reinsurer, where $\mathcal{T}:=\{1,\dots,n\}$ and $b:2^{\mathcal{T}}\mapsto\mathbb{R}$ such that $b(\mathcal{S}\cup\mathcal{R}):=\mathcal{B}(\mathcal{S}\cup\mathcal{R})$.
The corresponding core is defined as 
{\small\begin{equation*}
    core(\mathcal{T},b):=\left\{\bm{c}\in\mathbb{R}^{n+1}_+:\sum_{i\in\mathcal{S}\cup\mathcal{R}}c_i\geq b(\mathcal{S}\cup\mathcal{R}),\ \sum_{i\in\mathcal{T}\cup\mathcal{R}}{c_i}=b(\mathcal{T}\cup\mathcal{R}),\  \text{ for all } \emptyset\neq\mathcal{S}\subseteq\mathcal{T}\right\}. 
\end{equation*}}%

Note that $core(\mathcal{T},b)$ is not empty since $(0,\dots,0,b(\mathcal{T}\cup\mathcal{R}))\in core(\mathcal{T},b)$. By the Bondareva-Shapley theorem \citep[see, e.g.,][Proposition 262.1]{osborne1994course}, $(\mathcal{T},b)$ is balanced, i.e., $\sum_{\mathcal{S}\subseteq\mathcal{T}}\lambda_{\mathcal{S}}b(\mathcal{S}\cup\mathcal{R})\leq b(\mathcal{T}\cup\mathcal{R})$
for all 
$\lambda_{\mathcal{S}}\in[0,1]$ such that $\sum_{\mathcal{S}\subseteq\mathcal{T}}\lambda_{\mathcal{S}}\bm{e}_{\mathcal{S}}=\bm{e}_{\mathcal{T}}$, where $\bm{e}_{\mathcal{S}}\in\mathbb{R}^n$ is the characteristic vector of the subset $\mathcal{S}$ such that $(\bm{e}_{\mathcal{S}})_i=1$ if $i\in\mathcal{S}$, and $(\bm{e}_{\mathcal{S}})_i=0$ otherwise. 

Next,
for any $\mathcal{C}\subsetneq  \mathcal{N}$ with  $R\notin\mathcal{C}$, let   $(\bm{A}_*^{\mathcal{C}\cup\mathcal{R}},\bm{p}_*^{\mathcal{C}\cup\mathcal{R}})$ and $\bm{A}_*^\mathcal{C}$ be the optimal solution to Problem \eqref{Prob:RS:C} and Problem \eqref{Prob:RS2}, respectively. 
Since Problem \eqref{Prob:RS2} can be considered as Problem \eqref{Prob:RS:C} with an additional constraint $\bm{p}^\mathcal{C}=\bm{0}^\mathcal{C}$, we have 
{\small\begin{equation}\label{Eq:OVcompare}    \sum_{i\in\mathcal{C}}\rho_i(\bm{A}_*^{\mathcal{C}\cup\mathcal{R}};\mathcal{C})+\rho_R(\bm{p}_*^{\mathcal{C}\cup\mathcal{R}};\mathcal{C}\cup\mathcal{R})\leq \sum_{i\in\mathcal{C}}\rho_i(\bm{A}_*^\mathcal{C};\mathcal{C}). 
\end{equation}}

To prove $core(\mathcal{N},\mathcal{B})\neq\emptyset$, we again invoke the Bondareva-Shapley theorem and show that $(\mathcal{N},\mathcal{B})$ is balanced. 
Let $\bm{\overline{e}}_{\mathcal{C}}\in\mathbb{R}^{n+1}$ be the characteristic vector of the subset $\mathcal{C}\subseteq\mathcal{N}$ such that $(\bm{\overline{e}}_{\mathcal{C}})_i=1$ if $i\in\mathcal{C}$, and $(\bm{\overline{e}}_{\mathcal{C}})_i=0$ otherwise. 
For any {$\mathcal{C}\subseteq \mathcal{N}$ and } $\overline{\lambda}_{\mathcal{C}}\in[0,1]$ such that $\sum_{\mathcal{C}\subseteq\mathcal{N}}\overline{\lambda}_{\mathcal{C}}\bm{\overline{e}}_{\mathcal{C}}=\bm{\overline{e}}_{\mathcal{N}}$, 
we have{\small 
\begin{align*}
    \sum_{\mathcal{C}\subseteq\mathcal{N}}\overline{\lambda}_{\mathcal{C}}\mathcal{B}(\mathcal{C})
    =&\sum_{\mathcal{C}\subseteq\mathcal{N}}\overline{\lambda}_{\mathcal{C}}\left[1_{R\in\mathcal{C}}\left(\sum_{i\in\mathcal{C}\backslash\mathcal{R}}\left(\mu_i+\frac{\gamma_i\sigma_i^2}{2}\right)- \sum_{i\in\mathcal{C}\backslash\mathcal{R}}\rho_i(\bm{A}_*^\mathcal{C};\mathcal{C}\backslash\mathcal{R})-\rho_R(\bm{p}_*^\mathcal{C};{\mathcal{C}}) \right)\right.\\
    &+\left.1_{R\notin\mathcal{C}}\left(\sum_{i\in\mathcal{C}}\left(\mu_i+\frac{\gamma_i\sigma_i^2}{2}\right)-\sum_{i\in\mathcal{C}}\rho_i(\bm{A}_*^{\mathcal{C}};\mathcal{C})\right)\right]\\
    \leq &\sum_{\mathcal{C}\subseteq\mathcal{N}}\overline{\lambda}_{\mathcal{C}}\left[1_{R\in\mathcal{C}}\left(\sum_{i\in\mathcal{C}\backslash\mathcal{R}}\left(\mu_i+\frac{\gamma_i\sigma_i^2}{2}\right)-\left(\sum_{i\in\mathcal{C}\backslash\mathcal{R}}\rho_i(\bm{A}_*^\mathcal{C};\mathcal{C}\backslash\mathcal{R})+\rho_R(\bm{p}_*^\mathcal{C};\mathcal{C})\right)\right)\right.\\
    &\left.+1_{R\notin\mathcal{C}}\left(\sum_{i\in\mathcal{C}}\left(\mu_i+\frac{\gamma_i\sigma_i^2}{2}\right)-\left(\sum_{i\in\mathcal{C}}\rho_i(\bm{A}_*^{\mathcal{C}\cup\mathcal{R}};\mathcal{C})+\rho_R(\bm{p}_*^\mathcal{C};\mathcal{C}\cup\mathcal{R})\right)\right)\right]\\
    =&\ \sum_{\mathcal{C}\subsetneq\mathcal{N},\mathcal{C}\ni R}\overline{\lambda}_{\mathcal{C}}b(\mathcal{C})+\sum_{\mathcal{C}\subsetneq\mathcal{N},\mathcal{C}\not\ni R}\overline{\lambda}_{\mathcal{C}}b(\mathcal{C}\cup\mathcal{R})\\
    =&\ {\sum_{\mathcal{S}\subseteq\mathcal{T}}\overline{\lambda}_{\mathcal{S}\cup\mathcal{R}} b(\mathcal{S}\cup\mathcal{R}) + \sum_{\mathcal{S}\subseteq\mathcal{T}}\overline{\lambda}_{\mathcal{S}} b(\mathcal{S}\cup\mathcal{R})  }
    \leq b(\mathcal{T}\cup\mathcal{R})=\mathcal{B}(\mathcal{N}),
\end{align*}}%
where the first inequality follows from \eqref{Eq:OVcompare}
, and
the second inequality follows from  the fact that $(\mathcal{T},b)$ is balanced and $\sum_{\mathcal{S}\subseteq\mathcal{T}}(\overline{\lambda}_{\mathcal{S}\cup\mathcal{R}}+\overline{\lambda}_{\mathcal{S}})\bm{e}_{\mathcal{S}}=\bm{e}_{\mathcal{T}}$. To see the last relation, note that  
{\small \begin{align*}
    \bm{\overline{e}}_N = \sum_{\mathcal{C}\subseteq\mathcal{N}}\overline{\lambda}_{\mathcal{C}}\bm{\overline{e}}_{\mathcal{C}}  = \sum_{\mathcal{S}\subseteq\mathcal{T}}\overline{\lambda}_{\mathcal{S}}\bm{\overline{e}}_{\mathcal{S}} + \sum_{\mathcal{S}\subseteq\mathcal{T}}\overline{\lambda}_{\mathcal{S}\cup\mathcal{R}}\bm{\overline{e}}_{\mathcal{S}\cup\mathcal{R}} &=\sum_{\mathcal{S}\subseteq\mathcal{T}}\left(\overline{\lambda}_{\mathcal{S}}+\overline{\lambda}_{\mathcal{S}\cup\mathcal{R}} \right)\bm{\overline{e}}_{\mathcal{S}} + \sum_{\mathcal{S}\subseteq\mathcal{T}}\overline{\lambda}_{\mathcal{S}\cup\mathcal{R}}(\bm{\overline{e}}_{\mathcal{S}\cup\mathcal{R}} -\bm{\overline{e}}_{\mathcal{S}}) \\
    &= \sum_{\mathcal{S}\subseteq\mathcal{T}}\left(\overline{\lambda}_{\mathcal{S}}+\overline{\lambda}_{\mathcal{S}\cup\mathcal{R}} \right)\bm{\overline{e}}_{\mathcal{S}} + \sum_{\mathcal{S}\subseteq\mathcal{T}}\overline{\lambda}_{\mathcal{S}\cup\mathcal{R}} \bm{\overline{e}}_{\mathcal{R}}. 
\end{align*}}%
The claim thus follows from the linear independence of $\bm{\overline{e}}_{\mathcal{S}}$, $\mathcal{S}\subseteq\mathcal{T}$, and $\bm{\overline{e}}_{\mathcal{R}}$, and the proof is complete.


\section{Proof of Proposition \ref{Prop:coreinJPO}}\label{App:coreinJPO}

Let $(c_1,\dots,c_n,c_R)\in core(\mathcal{N},\mathcal{B})$ and choose $\bm{\eta}_*=(\eta_{*,1},\dots,\eta_{*,n})$ such that $\omega_i=\omega_i(\bm{A}_*,\bm{p}_*,\bm{\eta}_*) = c_i$, for $i=1,\dots,n$. Note that
{\small\begin{align*}
0\leq c_{R}
= \mathcal{B}(\mathcal{N})-\sum_{i=1}^n\omega_i
&=\sum_{i=1}^n\left(\mu_i+\frac{\gamma_i\sigma_i^2}{2}-\rho_i(\bm{A}_*)\right)-\rho_R(\bm{p}_*)-\sum_{i=1}^n\left(\mu_i+\frac{\gamma_i\sigma_i^2}{2}-\rho_i(\bm{A}_*)-\pi_i(\eta_{*,i},p_{*,i})\right)\\
&=\sum_{i=1}^n\pi_i(\eta_{*,i},p_{*,i})-\rho_R(\bm{p}_*),
\end{align*}}%
which implies $\rho_R(\bm{p}_*)\leq \sum_{i=1}^n\pi_i(\eta_{*,i},p_{*,i})$, i.e., the reinsurer's IR \eqref{Eq:ReinsurerIR} is satisfied. In addition, for each $i=1,\dots,n$, it holds that $0\leq c_i=\omega_i=\mu_i+\frac{\gamma_i\sigma_i^2}{2}-\rho_i(\bm{A}_*)-\pi_i(\eta_{*,i},p_{*,i}),$
which implies members' IRs \eqref{Eq:MemberIR} are also satisfied. As $(\bm{A}_*,\bm{p}_*,\bm{\eta}_*)\in\mathcal{IR}$, its JP-optimality follows from Theorem \ref{Thm:JPO}.

\section{Proof of Proposition \ref{Prop:CoreiffCS}}\label{App:CoreiffCS}

Assume that $(\bm{A},\bm{p},\bm{\eta})$ is not coalitional stable.
This induces two cases. 

(i): There exists $\emptyset\neq\mathcal{C}\subsetneq\mathcal{N}$, $R\in\mathcal{C}$ and $(\check{\bm{A}}^\mathcal{C},\check{\bm{p}}^\mathcal{C},\check{\bm{\eta}}^\mathcal{C})\in\mathcal{F}^\mathcal{C}$ such that $u_i(\check{\bm{A}}^\mathcal{C},\check{\bm{p}}^\mathcal{C},\check{\bm{\eta}}^\mathcal{C}; \mathcal{C})\leq u_i(\bm{A},\bm{p},\bm{\eta}; \mathcal{N})$ for all $i\in\mathcal{C}$, and $v(\check{\bm{\eta}}^\mathcal{C},\check{\bm{p}}^\mathcal{C}; \mathcal{C})\leq v(\bm{\eta},\bm{p}; \mathcal{N})$,
with at least one strict inequality.
Note that $\mathcal{C}\neq\mathcal{N}$ since $(\bm{A},\bm{p},\bm{\eta})\in\mathcal{JPO}$. Hence,
{\small \begin{align*}
\sum_{i\in\mathcal{C}}\omega_i({\bm{A},\bm{p},\bm{\eta}})
=&\sum_{i\in\mathcal{C},i\neq R}\left(\mu_i+\frac{\gamma_i\sigma_i^2}{2}-{u_i(\bm{A},\bm{p},\bm{\eta}; \mathcal{N}) }\right)-v(\bm{\eta},\bm{p};{\mathcal{N}})\\
<&\sum_{i\in\mathcal{C},i\neq R}\left(\mu_i+\frac{\gamma_i\sigma_i^2}{2}{-u_i(\check{\bm{A}}^\mathcal{C},\check{\bm{p}}^\mathcal{C},\check{\bm{\eta}}^\mathcal{C}; \mathcal{C})}\right)-v(\check{\bm{\eta}}^\mathcal{C},\check{\bm{p}}^\mathcal{C}; \mathcal{C})\\
\leq&\sum_{i\in\mathcal{C},i\neq R}\left(\mu_i+\frac{\gamma_i\sigma_i^2}{2}\right)-\min_{(\bm{A}^\mathcal{C},\bm{p}^\mathcal{C},\bm{\eta}^\mathcal{C})\in \mathcal{F}^{\mathcal{C}}}{\sum_{i\in\mathcal{C}}} \left(\rho_i(\bm{A}^\mathcal{C}; \mathcal{C})+\rho_R(\bm{p}^\mathcal{C}; \mathcal{C}) \right)
=\mathcal{B}(\mathcal{C}). 
\end{align*}}%

(ii): There exists $\emptyset\neq\mathcal{C}\subsetneq\mathcal{N}$, $R\notin\mathcal{C}$ and $\check{\bm{A}}^\mathcal{C}\in\mathcal{F}^\mathcal{C}$ such that $ u_i(\check{\bm{A}}^\mathcal{C},\bm{0}^\mathcal{C},\bm{0}^\mathcal{C}; \mathcal{C})\leq u_i(\bm{A},\bm{p},\bm{\eta}; \mathcal{N})$ for all $i\in\mathcal{C}$,
with at least one strict inequality. Hence,
{\small \begin{align*}
\sum_{i\in\mathcal{C}}\omega_i({\bm{A},\bm{p},\bm{\eta}})
=\sum_{i\in\mathcal{C}}\left(\mu_i+\frac{\gamma_i\sigma_i^2}{2}-{u_i(\bm{A},\bm{p},\bm{\eta}; \mathcal{N}) }\right)
&<\sum_{i\in\mathcal{C}}\left(\mu_i+\frac{\gamma_i\sigma_i^2}{2}{-u_i(\check{\bm{A}}^\mathcal{C},\bm{0}^\mathcal{C},\bm{0}^\mathcal{C}; \mathcal{C})}\right)\\
& \leq\sum_{i\in\mathcal{C}}\left(\mu_i+\frac{\gamma_i\sigma_i^2}{2}\right)-\min_{\bm{A}^\mathcal{C}\in \mathcal{F}^{\mathcal{C}}}{\sum_{i\in\mathcal{C}}} \rho_i(\bm{A}^\mathcal{C}; \mathcal{C})
=\mathcal{B}(\mathcal{C}),
\end{align*}}%

As the above two cases indicate 
$(\omega_1({\bm{A},\bm{p},\bm{\eta}}),\dots,\omega_n({\bm{A},\bm{p},\bm{\eta}}),\mathcal{B}(\mathcal{N})-\sum_{i=1}^n\omega_i({\bm{A},\bm{p},\bm{\eta}})))\not\in core(\mathcal{N},\mathcal{B})$ and contradiction arises, the proof is complete.

\section{Proof of Proposition \ref{Prop:nonNegEta}}\label{App:nonNegEta}

 To prove the first statement, it suffices to show that $\omega_i(\bm{A}_*,\bm{p}_*,\max\{\bm{0},\bm{\eta}_{\min}(\bm{p}_*)\})\geq0$ for all $i\in \mathcal{N}$. 
Using \eqref{Eq:OptAP2}, the $i$-th row of $\bm{A}_*$ is given by 
   {\small  \begin{equation*}
        \bm{A}_{*,i} = \frac{\gamma_i^{-1}}{\sum_{j=1}^n\gamma_j^{-1}}\bm{1}^\top\bm{D}(\bm{1}-\bm{p}_*) + k(\alpha_{*,i} - \bar{\alpha}_{*,i})\bm{\mu}^\top \bm{\Sigma}^{-1}, 
    \end{equation*}}%
where $\alpha_{*,i}:= (1-p_{*,i})\mu_i$ and $\bar{\alpha}_{*,i} = \gamma_i^{-1}\sum_{j=1}^n (1-p_{*,j})\mu_j/\sum_{j=1}^n\gamma_j^{-1}$. Hence, 
   {\small  \begin{align}
    \label{eq:A:Sigma:A:pareto}
         \bm{A}_{*,i}\bm{\Sigma} \bm{A}_{*,i}^\top 
         &= \left(\frac{\gamma_i^{-1}}{\sum_{j=1}^n\gamma_j^{-1}}\right)^2 \bm{1}^\top \bm{D}(\bm{1}-\bm{p}_*) \bm{\Sigma} \bm{D}(\bm{1}-\bm{p}_*) \bm{1} +\frac{2k\gamma_i^{-1}(\alpha_{*,i}-\bar{\alpha}_{*,i})}{\sum_{j=1}^n\gamma_j^{-1}} \bm{\mu}^\top \bm{D}(\bm{1}-\bm{p}_*) \bm{1} \nonumber\\
         &+k^2(\alpha_{*,i}-\bar{\alpha}_{*,i})^2 \bm{\mu}^\top \bm{\Sigma}^{-1}\bm{\mu} \nonumber \\
         &= \left(\frac{\gamma_i^{-1}}{\sum_{j=1}^n\gamma_j^{-1}}\right)^2 \bm{1}^\top \bm{D}(\bm{1}-\bm{p}_*) \bm{\Sigma} \bm{D}(\bm{1}-\bm{p}_*) \bm{1}  + 2k\bar{\alpha}_{*,i}(\alpha_{*,i}-\bar{\alpha}_{*,i}) + k(\alpha_{*,i}-\bar{\alpha}_{*,i})^2\nonumber \\
         &= \left(\frac{\gamma_i^{-1}}{\sum_{j=1}^n\gamma_j^{-1}}\right)^2\sum_{j,l=1}^n(1-p_{*,j})(1-p_{*,l})\sigma_{jl} + k( (\alpha_{*,i})^2 - (\bar{\alpha}_{*,i})^2).
    \end{align}}%
Using this, we have, for $i=1,\dots,n$,
{\small\begin{align*}
        &\quad \omega_i(\bm{A}_*,\bm{p}_*,\bm{\eta}_{\min}(\bm{p}_*))\\ &= \frac{\gamma_i}{2}\left(\sigma_i^2 -    \bm{A}_{*,i}\bm{\Sigma} \bm{A}_{*,i}^\top\right)  -\eta_{\min,i}p_{*,i}\mu_i\\
        &= \frac{\gamma_i}{2}\left(\sigma_i^2 - \left(\frac{\gamma_i^{-1}}{\sum_{j=1}^n\gamma_j^{-1}}\right)^2\sum_{j,l=1}^n(1-p_{*,j})(1-p_{*,l})\sigma_{jl} - k( (\alpha_{*,i})^2 - (\bar{\alpha}_{*,i})^2)   \right)-\frac{\gamma_R}{2}p_{*,i}\sum_{m=1}^n\sigma_{im}p_{*,m} \\
        &\geq  \frac{\gamma_i}{2}\left(\sigma_i^2 - \left(\frac{\gamma_i^{-1}}{\sum_{j=1}^n\gamma_j^{-1}}\right)^2\sum_{j,l=1}^n (\sigma_{jl})_+ - k  (\alpha_{*,i})^2    \right) -\frac{\gamma_R}{2}\sum_{m=1}^n(\sigma_{im})_+\\
          &\geq  \frac{\gamma_i}{2}\left(\sigma_i^2 - \left(\frac{\gamma_i^{-1}}{\sum_{j=1}^n\gamma_j^{-1}}\right)^2\sum_{j,l=1}^n (\sigma_{jl})_+ - k  \mu_i^2   \right)-\frac{\gamma_R}{2}\sum_{m=1}^n(\sigma_{im})_+\geq0,
    \end{align*}}%
where the second equality results from \eqref{Eq:mineta},
the first inequality follows from the fact that $p_{*,i}\in[0,1]$ under \eqref{Eq:unicond2}, and the second inequality follows from \eqref{Eq:WGcond}. Following a similar argument, one can also obtain $\omega_i(\bm{A}_*,\bm{p}_*,\bm{0})\geq 0$. 

{To verify the reinsurer's welfare gain is non-negative if $\bm{\eta}\geq \max\{\bm{0},\bm{\eta}_{\min}(\bm{p}_*)\}$, it suffices to show that $\omega_R(\bm{A}_*,\bm{p}_*,\bm{\eta}_{\min}(\bm{p}_*))\geq 0$, since $\omega_R(\bm{A}_*,\bm{p}_*,\bm{\eta}_{\min}(\bm{p}_*)) \leq \omega_R(\bm{A}_*,\bm{p}_*,\bm{\eta})$. Indeed,} $\frac{\gamma_{R}}{2}\bm{p}_*^\top\bm{\Sigma}\bm{p}_*={ \bm{p}_*^\top\bm{D}(\bm{\mu})\bm{\eta}_{\min}(\bm{p}_*)}\leq(\bm{D}(\bm{\mu})\bm{\eta} )^\top\bm{p}_*.$ 
Therefore, the proof of the first statement is complete. 

For the second statement, note that for any $\bm{c}\in core(\mathcal{N},\mathcal{B})$, we have $\mathcal{B}(\mathcal{N})-c_i=\sum_{j\neq i}c_i\geq \mathcal{B}(\mathcal{N}\backslash\{i\})$ for all $i=1,\dots,n$.
This leads to $c_i\leq \mathcal{B}(\mathcal{N})-\mathcal{B}(\mathcal{N}\backslash\{i\})$.  Then, imposing \eqref{Eq:coreBound}
means each member has an excess welfare gain to pay $\max\{\bm{0},\bm{\eta}_{\min}(\bm{p}_*)\}$ after joining the program.
Then, by invoking Proposition \ref{Prop:coreinJPO}, we can construct a JP-optimal contract with nonnegative loading. 


\section{Proof of Corollary \ref{Cor:JPOSingleLoading}}\label{App:JPOSingleLoading}
Let $\bm{\eta}^{(1)}$ and $\bm{\eta}^{(2)}$ be the safety loadings given in the statement. Define $\underline{n}:=\max_i\eta_i^{(2)}$ and $\overline{n}:=\min_i\eta_i^{(1)}$. For any scalar $t\in[\underline{n},\overline{n}]$, set a uniform loading $t\bm{1}$, and define the associated welfare gains by $\bm{\omega}(t)$, i.e., $\omega_i(t):=\omega_i(\bm{A}_*,\bm{p}_*,t\bm{1})=\omega_i(\bm{A}_*,\bm{p}_*,\bm{0})-tp_{*,i}\mu_i$ for all $i=1,\dots,n$, and $\omega_R(t):=\omega_R(\bm{A}_*,\bm{p}_*,t\bm{1}) = \mathcal{B}(\mathcal{N})-\sum_{i=1}^n\omega_i(t)$. 
Note that the efficiency constraint is satisfied as $\sum_{i\in\mathcal{N}}\omega_i(t)=\mathcal{B}(\mathcal{N})$.

Next, we show that $\omega_i(t)\geq0$ for all $i\in\mathcal{N}$. As $t\leq\overline{n}\leq\eta_i^{(1)}$ for all $i=1,\ldots,n$, we have
\begin{equation*}
    \omega_i(t)=\omega_i(0)-tp_{*,i}\mu_i\geq\omega_i(0)-\eta_i^{(1)}p_{*,i}\mu_i=c_i^{(1)}:= \omega_i(\bm{A}_*,\bm{p}_*,\bm{\eta}^{(1)}).
\end{equation*}
Similarly, since $t\leq\underline{n}\leq\eta_i^{(2)}$ for all $i=1,\ldots,n$, we obtain $\omega_i(t)\leq\omega_i(0)-\eta_i^{(2)}p_{*,i}\mu_i=c_i^{(2)} := \omega_i(\bm{A}_*,\bm{p}_*,\bm{\eta}^{(2)})$. 
These two inequalities lead to $ \omega_i(t)\geq c_i^{(1)}\geq0$ and  $\omega_R(t)=\mathcal{B}(\mathcal{N})-\sum_{i=1}^n\omega_i(t)\geq \mathcal{B}(\mathcal{N})-\sum_{i=1}^nc_i^{(2)}=\omega_R^{(2)}\geq0.$
Therefore, $\bm{\omega}(t)\in\mathbb{R}_+^{n+1}$. 

The last step is to verify all inequality constraints in the definition of $core(\mathcal{N},\mathcal{B})$ (see \eqref{Eq:CoreNB}). For any $\mathcal{C}\subsetneq\mathcal{N}$ such that $R\notin\mathcal{C}$, we have $\sum_{i\in\mathcal{C}}\omega_i(t)\geq\sum_{i\in\mathcal{C}}c_i^{(1)}\geq\mathcal{B}(\mathcal{C})$,
since $\bm{\omega}^{(1)}\in core(\mathcal{N},\mathcal{B})$. Similarly, for any $\mathcal{C}\subsetneq\mathcal{N}$ such that $R\in\mathcal{C}$, we have $\sum_{i\in\mathcal{C}}\omega_i(t)=\mathcal{B}(\mathcal{N})-\sum_{i\notin\mathcal{C}}\omega_i(t)\geq \mathcal{B}(\mathcal{N})-\sum_{i\notin\mathcal{C}}c_i^{(2)}=\sum_{i\in\mathcal{C}}c_i^{(2)}\geq\mathcal{B}(\mathcal{C}),$
since $\bm{c^{(2)}}\in core(\mathcal{N},\mathcal{B})$.
Therefore, $\bm{\omega}(t)\in core(\mathcal{N},\mathcal{B})$.

\section{Proof of Proposition \ref{Prop:leader}}\label{App:leader}
Using $\bm{p}^*$ in \eqref{Eq:OptAP}, we have 
{\small\begin{align*}
\bm{\eta}^\top \bm{D}(\bm{\mu}) \bm{p}^*
= \bm{\mu}^\top\bm{\eta}-\bm{\eta}^\top\bm{D}(\bm{\mu})\bm{M}^{-1}\bm{D}(\bm{\mu})\bm{\eta}, \quad 
(\bm{p}^*)^\top\bm{\Sigma}\bm{p}^*
= (\bm{1}-\bm{M}^{-1}\bm{D}(\bm{\mu})\bm{\eta})^\top\bm{\Sigma}(\bm{1}-\bm{M}^{-1}\bm{D}(\bm{\mu})\bm{\eta}).
\end{align*}}%
As the terms independent of $\bm{\eta}$ play no role in the optimization, Problem \eqref{Prob:leader2} is equivalent to minimizing 
{\small \begin{align*}
&\frac{\gamma_{R}}{2}(-2\cdot\bm{1}^\top\bm{\Sigma}\bm{M}^{-1}\bm{D}(\bm{\mu})\bm{\eta}+\bm{\eta}^\top\bm{D}(\bm{\mu})\bm{M}^{-1}\bm{\Sigma}\bm{M}^{-1}\bm{D}(\bm{\mu})\bm{\eta})-\bm{\mu}^\top\bm{\eta}+\bm{\eta}^\top\bm{D}(\bm{\mu})\bm{M}^{-1}\bm{D}(\bm{\mu})\bm{\eta}\\
=&\ \bm{\eta}^\top\left(\bm{D}(\bm{\mu})\bm{M}^{-1}\bm{D}(\bm{\mu})+\frac{\gamma_{R}}{2}\bm{D}(\bm{\mu})\bm{M}^{-1}\bm{\Sigma}\bm{M}^{-1}\bm{D}(\bm{\mu})\right)\bm{\eta}-(\bm{\mu}^\top+\gamma_{R}\bm{1}^\top\bm{\Sigma}\bm{M}^{-1}\bm{D}(\bm{\mu}))\bm{\eta}
\end{align*}}
Then, the FOC yields 
{\small\begin{align*}
    &\bm{D}(\bm{\mu})(\gamma_{R}\bm{M}^{-1}\bm{\Sigma}+2\bm{I})\bm{M}^{-1}\bm{D}(\bm{\mu})\bm{\eta}=\gamma_{R}\bm{D}(\bm{\mu})\bm{M}^{-1}\bm{\Sigma}\bm{1}+\bm{\mu}    
    \iff {(\gamma_{R}\bm{\Sigma}+2\bm{M})\bm{M}^{-1}\bm{D}(\bm{\mu})\bm{\eta}= (\gamma_{R}\bm{\Sigma}+\bm{M})\bm{1}}.
\end{align*}}%
Since $\bm{\Sigma}$ and $\bm{M}$ are positive definite (see Lemma \ref{lem:M}), $\gamma_{R}\bm{\Sigma}+2\bm{M}$ is invertible. The result thus follows.


Finally, we prove the non-negativity of $\bm{\eta}^*$ under Condition \eqref{Eq:deltaINE}. Consider 
{\small\begin{align}
        \label{eq:Dmu:eta:0}
\bm{D}(\bm{\mu}) \bm{\eta}^* &=  \bm{M}\bm{1}  -  \bm{M}(\gamma_{R}\bm{\Sigma}+2\bm{M})^{-1}\bm{M} \bm{1} \nonumber \\
            &= \bm{M}\bm{1} -\frac{1}{2}\left(\gamma_R\bm{\Sigma} + 2\bm{M}-\gamma_R\bm{\Sigma} \right) (\gamma_{R}\bm{\Sigma}+2\bm{M})^{-1}\bm{M}\bm{1} \nonumber \\
            &= \frac{1}{2}\bm{M}\bm{1}  + \frac{\gamma_R}{2}\bm{\Sigma}(\gamma_R\bm{\Sigma}+2\bm{M})^{-1} \bm{M}\bm{1} \nonumber \\
            &= \frac{1}{2}\bm{M}\bm{1}  + \frac{\gamma_R}{4}\bm{\Sigma}(\gamma_R\bm{\Sigma}+2\bm{M})^{-1}\left(\gamma_R\bm{\Sigma} + 2\bm{M} - \gamma_R\bm{\Sigma} \right)\bm{1}  \nonumber \\
            &= \frac{1}{2}\bm{M}\bm{1}  + \frac{\gamma_R}{4}\bm{\Sigma}\bm{1} - \frac{\gamma_R^2}{4}\bm{\Sigma}(\gamma_R\bm{\Sigma}+2\bm{M})^{-1}\bm{\Sigma}\bm{1} 
            :=\frac{1}{2}\bm{B}\bm{1}.
\end{align}}%
Thus, we only need to show $\bm{B}\bm{1}\geq\bm{0}$, implied by $B_{ii}\geq\sum_{j\neq i}|B_{ij}|$ for all $i=1,\ldots,n$, where $B_{ij}$ represents the $(i,j)$-th entry of $\bm{B}$. 

Define $\bm{B}^0:=\bm{M}+\frac{\gamma_R}{2}\bm{\Sigma}$ and $\bm{B}^1:=\bm{\Sigma}(\gamma_R\bm{\Sigma}+2\bm{M})^{-1}\bm{\Sigma}$, then
{\small\begin{align*}
B_{ii}-\sum_{j\neq i}|B_{ij}|
=&B_{ii}^0-\frac{\gamma_R^2}{2}B_{ii}^1-\sum_{j\neq i}\left|B_{ij}^0-\frac{\gamma_R^2}{2}B_{ij}^1\right|
\geq  B_{ii}^0-\sum_{j\neq i}|B_{ij}^0|-\frac{\gamma_R^2}{2}\left(B_{ii}^1+\sum_{j\neq i}|B_{ij}^1|\right),
\end{align*}}%
{which is non-negative if}
\begin{equation}\label{Eq:INEB0B1}
    \min_{i\in\{1,\ldots,n\}}\left(B_{ii}^0-\sum_{j\neq i}|B_{ij}^0|\right)
    \geq\frac{\gamma_R^2}{2}\|\bm{B}^1\|_{\infty}.
\end{equation}

We complete the proof by showing that \eqref{Eq:deltaINE} leads to \eqref{Eq:INEB0B1}. For the left-hand side of \eqref{Eq:INEB0B1}, we have
{\small\begin{equation*}
\min_{i\in\{1,\ldots,n\}}\left(B_{ii}^0-\sum_{j\neq i}|B_{ij}^0|\right)
=\min_{i\in\{1,\ldots,n\}}\left(M_{ii}+\frac{\gamma_R}{2}\sigma_{ii}-\sum_{j\neq i}\left|M_{ij}+\frac{\gamma_R}{2}\sigma_{ij}\right|\right)\geq\delta_{\bm{M}}+\frac{\gamma_R}{2}\delta_{\bm{\Sigma}}. 
\end{equation*}}%
For the right-hand side of \eqref{Eq:INEB0B1}, we observe that {\small  
\begin{align*}
\|\bm{B}^1\|_{\infty}
=\|\bm{\Sigma}(\gamma_R\bm{\Sigma}+2\bm{M})^{-1}\bm{\Sigma}\|_{\infty}
\leq\|\bm{\Sigma}\|_{\infty}^2\|(\gamma_R\bm{\Sigma}+2\bm{M})^{-1}\|_{\infty}
\leq\frac{\|\bm{\Sigma}\|_{\infty}^2}{\kappa_{\min}}
\leq\frac{\|\bm{\Sigma}\|_{\infty}^2}{2\delta_{\bm{M}}+\gamma_R\delta_{\bm{\Sigma}}},
\end{align*}}%
where $\kappa_{\min}:=\min_{i\in\{1,\ldots,n\}}(\gamma_R\sigma_{ii}+2M_{ii}-\sum_{j\neq i}|\gamma_R\sigma_{ij}+2M_{ij}|)$ and the last inequality follows from the triangle inequality. The proof is complete.

\section{Proof of Lemma \ref{Lem:reinsurerIR}}\label{App:reinsurerIR}

Using the expression $\bm{p}^*(\bm{\eta}^*) = \bm{1}-\bm{M}^{-1}\bm{D}(\bm{\mu})\bm{\eta}^*$, we have $\omega_R^*:= \omega_R(\bm{A}^*(\bm{\eta}^*),\bm{p}^*(\bm{\eta}^*),\bm{\eta}^*)$ satisfies 
{\small \begin{align*}
  \omega_R^*  & =  (\bm{\eta}^*)^\top \bm{D}(\bm{\mu})\bm{p}^* - \frac{\gamma_R}{2}(\bm{p}^*)^\top\bm{\Sigma}\bm{p}^*\\ 
    &=  (\bm{\eta}^*)^\top \bm{D}(\bm{\mu})\left[\bm{1}-\bm{M}^{-1}\bm{D}(\bm{\mu})\bm{\eta}^*\right] - \frac{\gamma_R}{2}[\bm{1}-\bm{M}^{-1}\bm{D}(\bm{\mu})\bm{\eta}^*]^\top\bm{\Sigma}[\bm{1}-\bm{M}^{-1}\bm{D}(\bm{\mu})\bm{\eta}^*] \\
    &= - (\bm{D}(\bm{\mu})\bm{\eta}^*)^\top\bm{M}^{-1}\bm{D}(\bm{\mu})\bm{\eta}^* - \frac{\gamma_R}{2}(\bm{D}(\bm{\mu})\bm{\eta}^*)^\top\bm{M}^{-1}\bm{\Sigma}\bm{M}^{-1}\bm{D}(\bm{\mu})\bm{\eta}^* \\
    &\quad + \bm{1}^\top\bm{D}(\bm{\mu})\bm{\eta}^*+\gamma_R\bm{1}^\top\bm{\Sigma}\bm{M}^{-1}\bm{D}(\bm{\mu})\bm{\eta}^*  - \frac{\gamma_R}{2} \bm{1}^\top\bm{\Sigma}\bm{1} \\
    &= -\frac{1}{2}(\bm{D}(\bm{\mu})\bm{\eta}^*)^\top \left[  \gamma_R\bm{M}^{-1}\bm{\Sigma}  + 2\bm{I}_n  \right]\bm{M}^{-1} \bm{D}(\bm{\mu})\bm{\eta}^* + \left[\left(\gamma_R\bm{M}^{-1}\bm{\Sigma} +\bm{I}_n   \right)\bm{1}\right]^\top \bm{D}(\bm{\mu})\bm{\eta}^* - \frac{\gamma_R}{2} \bm{1}^\top\bm{\Sigma}\bm{1}.
\end{align*}}%
Using \eqref{Eq:Opteta}, we have $\bm{D}(\bm{\mu})\bm{\eta}^*= \bm{M}(\gamma_{R}\bm{M}^{-1}\bm{\Sigma}+2\bm{I}_n)^{-1}(\gamma_{R}\bm{M}^{-1}\bm{\Sigma}+\bm{I}_n)\bm{1}$, whence
  {\small   \begin{align*}
       \omega_R^*
         &= -\frac{1}{2}\bm{1}^\top(\gamma_{R}\bm{M}^{-1}\bm{\Sigma}+\bm{I}_n)^\top[(\gamma_{R}\bm{M}^{-1}\bm{\Sigma}+2\bm{I}_n)^{-1}]^\top\bm{M}(\gamma_{R}\bm{M}^{-1}\bm{\Sigma}+\bm{I}_n)\bm{1} \\
         &\quad + \bm{1}^\top(\gamma_{R}\bm{M}^{-1}\bm{\Sigma}+\bm{I}_n)^\top\bm{M}(\gamma_{R}\bm{M}^{-1}\bm{\Sigma}+2\bm{I}_n)^{-1}(\gamma_{R}\bm{M}^{-1}\bm{\Sigma}+\bm{I}_n)\bm{1} - \frac{\gamma_R}{2} \bm{1}^\top\bm{\Sigma}\bm{1} \\
         &= \frac{1}{2}\bm{1}^\top \left[(\gamma_{R}\bm{M}^{-1}\bm{\Sigma}+\bm{I}_n)^\top\bm{M}(\gamma_{R}\bm{M}^{-1}\bm{\Sigma}+2\bm{I}_n)^{-1}(\gamma_{R}\bm{M}^{-1}\bm{\Sigma}+\bm{I}_n) - \gamma_R \bm{\Sigma} \right]\bm{1}.
    \end{align*}}%
Using $\gamma_R\bm{\Sigma} =  -\bm{M}+(\gamma_{R}\bm{M}^{-1}\bm{\Sigma}+\bm{I}_n)^\top\bm{M}(\gamma_{R}\bm{M}^{-1}\bm{\Sigma}+\bm{I}_n)^{-1}(\gamma_{R}\bm{M}^{-1}\bm{\Sigma}+\bm{I}_n),$
we have 
   {\small  \begin{align*}
    & \ \ \ \ (\gamma_{R}\bm{M}^{-1}\bm{\Sigma}+\bm{I}_n)^\top\bm{M}(\gamma_{R}\bm{M}^{-1}\bm{\Sigma}+2\bm{I}_n)^{-1}(\gamma_{R}\bm{M}^{-1}\bm{\Sigma}+\bm{I}_n) - \gamma_R \bm{\Sigma}\\
    &= \bm{M} + (\gamma_{R}\bm{M}^{-1}\bm{\Sigma}+\bm{I}_n)^\top\bm{M}\left[(\gamma_{R}\bm{M}^{-1}\bm{\Sigma}+2\bm{I}_n)^{-1}-(\gamma_{R}\bm{M}^{-1}\bm{\Sigma}+\bm{I}_n)^{-1}\right](\gamma_{R}\bm{M}^{-1} \bm{\Sigma}+\bm{I}_n)\\
    &= \bm{M} - (\gamma_{R}\bm{M}^{-1}\bm{\Sigma}+\bm{I}_n)^\top\bm{M}(\gamma_{R}\bm{M}^{-1}\bm{\Sigma}+2\bm{I}_n)^{-1}  \\
    &= \left[ \bm{M}(\gamma_{R}\bm{M}^{-1}\bm{\Sigma}+2\bm{I}_n) - (\gamma_{R}\bm{M}^{-1}\bm{\Sigma}+\bm{I}_n)^\top\bm{M} \right](\gamma_{R}\bm{M}^{-1}\bm{\Sigma}+2\bm{I}_n)^{-1} \\
    &= \left(\gamma_{R}\bm{M}^{-1}\bm{\Sigma}\bm{M}^{-1}+2\bm{M}^{-1} \right)^{-1}.
    \end{align*}}%
Therefore, $\omega_R^*=\frac{1}{2}\bm{1}^\top \left(\gamma_{R}\bm{M}^{-1}\bm{\Sigma}\bm{M}^{-1}+2\bm{M}^{-1} \right)^{-1} \bm{1}$. 
 Since $\gamma_R\bm{M}^{-1}\bm{\Sigma}\bm{M}^{-1} + 2\bm{M}^{-1}$ is positive definite, we conclude that   $\omega_R^*=\omega_R(\bm{A}^*(\bm{\eta}^*),\bm{p}^*(\bm{\eta}^*),\bm{\eta}^*)>0$.  

Next, we verify the members' IR constraints given \eqref{Eq:unicond} and \eqref{Eq:MIRcond}. Using \eqref{Eq:OptAP}, the fact that under \eqref{Eq:unicond}, $p^*_{i}\in[0,1]$, $i=1,\dots,n$, and following the derivation of \eqref{eq:A:Sigma:A:pareto}, $\omega_i^*:= \omega_i(\bm{A}^*(\bm{\eta}^*),\bm{p}^*(\bm{\eta}^*),\bm{\eta}^*)$ satisfies  
   {\small  \begin{align*}
        \omega_i^*
        &= \frac{\gamma_i}{2}\left(\sigma_i^2 - \left(\frac{\gamma_i^{-1}}{\sum_{j=1}^n\gamma_j^{-1}}\right)^2\sum_{j,l=1}^n(1-p^*_{j})(1-p^*_{l})\sigma_{jl} - k( (\alpha^*_{i})^2 - (\bar{\alpha}^*_{i})^2)   \right) - \eta^*_ip^*_i\mu_i \\
          &\geq  \frac{\gamma_i}{2}\left(\sigma_i^2 - \left(\frac{\gamma_i^{-1}}{\sum_{j=1}^n\gamma_j^{-1}}\right)^2\sum_{j,l=1}^n (\sigma_{jl})_+ - k  \mu_i^2   \right)- \eta^*_ip^*_i\mu_i,
    \end{align*}}%
where $\alpha^*_i:= (1-p^*_i)\mu_i$ and $\bar{\alpha}^*_i = \gamma_i^{-1}\sum_{j=1}^n (1-p^*_j)\mu_j/\sum_{j=1}^n\gamma_j^{-1}$.
 Using  the upper bound of $\mu_i\eta^*_i$ in  \eqref{Eq:unicond},  $p^*_i \in[0,1]$, and  \eqref{Eq:MIRcond},  we further have  
 {\small  \begin{align*}
          \omega_i^* &\geq   \frac{\gamma_i}{2}\left(\sigma_i^2 - \left(\frac{\gamma_i^{-1}}{\sum_{j=1}^n\gamma_j^{-1}}\right)^2\sum_{j,l=1}^n (\sigma_{jl})_+ - k  \mu_i^2   \right) - k\mu_i^2\gamma_i  + \frac{\sum_{m\neq i} (k\mu_i\mu_m-\sigma_{im})_+   + k\mu_i^2 - \sigma_i^2 }{\sum_{j=1}^n\gamma_j^{-1}} \geq 0. 
    \end{align*}}%
Therefore, the members' IR constraints are fulfilled, and the proof is complete.

\section{Proof of Corollary \ref{Lem:singleprice}}\label{App:singleprice}

When $\bm{\eta} = \eta\bm{1}$, the optimal reinsurance strategy in Proposition \ref{Prop:convexfollower} becomes
$\bm{p} = \bm{1} - \eta \bm{M}^{-1}\bm{\mu}.$
We shall repeat the proof of Proposition \ref{Prop:leader} with this new setting.

By direct substitutions, we have $\bm{p}^\top\bm{\Sigma}\bm{p}=
\bm{1}^\top\bm{\Sigma}\bm{1}-2\eta\bm{1}^\top\bm{\Sigma}\bm{M}^{-1}\bm{\mu}+\eta^2\bm{\mu}^\top\bm{M}^{-1}\bm{\Sigma}\bm{M}^{-1}\bm{\mu}$ and $\eta\bm{\mu}^\top\bm{p}=\eta\bm{\mu}^\top\bm{1}-\eta^2\bm{\mu}^\top\bm{M}^{-1}\bm{\mu}$.
Removing terms that are independent of $\eta$, we see that Problem \eqref{Prob:leader2} is equivalent to minimizing 
{\small\begin{align*}
&\frac{\gamma_{R}}{2}(-2\eta\bm{1}^\top\bm{\Sigma}\bm{M}^{-1}\bm{\mu}+\eta^2\bm{\mu}^\top\bm{M}^{-1}\bm{\Sigma}\bm{M}^{-1}\bm{\mu})-\eta\bm{\mu}^\top\bm{1}+\eta^2\bm{\mu}^\top\bm{M}^{-1}\bm{\mu}\\
=&\left(\bm{\mu}^\top\bm{M}^{-1}\bm{\mu}+\frac{\gamma_{R}}{2}\bm{\mu}^\top\bm{M}^{-1}\bm{\Sigma}\bm{M}^{-1}\bm{\mu}\right)\eta^2-(\bm{\mu}^\top\bm{1}+\gamma_{R}\bm{1}^\top\bm{\Sigma}\bm{M}^{-1}\bm{\mu})\eta
\end{align*}}%
The FOC then yields the solution \eqref{Eq:Opteta} by noting the positive definiteness of $\bm{M}$ and $\bm{\Sigma}$. The second statement follows from Theorem \ref{Thm:BO}.

For the last statement, note that the non-negativity of $\eta^*$ follows from the diagonal dominance of $\bm{I}_n+\gamma_{R}\bm{M}^{-1}\bm{\Sigma}$ and its diagonal elements are positive. To prove the latter, we need $\gamma_R\|\bm{M}^{-1}\bm{\Sigma}\|_{\infty}\leq 1$. As $\bm{M}$ is strictly diagonally dominant, using the Varah bound, we have $\|\bm{M}^{-1}\|_{\infty}\leq 1/\delta_{\bm{M}}$. Together with $\gamma_R\leq\delta_{\bm{M}}/\|\bm{\Sigma}\|_{\infty}$, the result follows.

\section{A Discussion on Condition \eqref{Eq:unicond}}
\label{sec:discussion:eta:bound}

This note provides a discussion on obtaining bounds on the optimal safety loading $\bm{\eta}^*$, and consequently, a way to verify the condition \eqref{Eq:unicond} under the Bowley optimum by deriving an explicit representation that depends on the underlying model parameters. 

We first consider the benchmark case $\gamma_R=0$ where $\bm{\eta}^*$ admits a closed-form, entry-wise expression, allowing \eqref{Eq:unicond} to be written explicitly in terms of the model parameters. When $\gamma_R>0$, $\bm{\eta}^*$ contains additional terms involving matrix inverses, and its entries are no longer available in closed form. We therefore discuss ways to bound these terms and substitute the resulting bounds back into \eqref{Eq:unicond} to obtain tractable sufficient conditions. To this end, we utilize the formula derived in \eqref{eq:Dmu:eta:0}:
   {\small  \begin{equation}
                \label{eq:Dmu:eta:1}
       \bm{D}(\bm{\mu})\bm{\eta}^* =   \frac{1}{2}\bm{M}\bm{1}  + \frac{\gamma_R}{4}\bm{\Sigma}\bm{1} - \frac{\gamma_R^2}{4}\bm{\Sigma}(\gamma_R\bm{\Sigma}+2\bm{M})^{-1}\bm{\Sigma}\bm{1} .
    \end{equation}}%

   When the reinsurer is risk-neutral, i.e.,  $\gamma_R=0$, we have $  \bm{D}(\bm{\mu}) \bm{\eta}^* = \frac{1}{2}\bm{M}\bm{1}$.     Hence, 
      {\small   \begin{equation}
        \label{eq:mui:etai:gam_R=0}
         \mu_i \eta^*_i =    \frac{1}{2}\left[\frac{\sum_{j=1}^n(\sigma_{ij}-k\mu_i\mu_j) }{\sum_{j=1}^n\gamma_j^{-1}} + k\mu_i^2\gamma_i   \right]. 
        \end{equation}}%
    Substituting this into \eqref{Eq:unicond}, the condition is equivalent to, for any $i=1,\dots,n$,
     {\small 
        \begin{equation*} 
           k\mu_i^2\gamma_i + \frac{\sigma_i^2-k\mu_i^2}{\sum_{j=1}^n\gamma_j^{-1}}> \frac{\sum_{j\neq i}|\sigma_{ij}-k\mu_i\mu_j|}{\sum_{j=1}^n\gamma_j^{-1}} \quad \text{and} \quad    k\mu_i^2\gamma_i + \frac{\sigma_i^2-k\mu_i^2}{\sum_{j=1}^n\gamma_j^{-1}} >    \frac{\sum_{j\neq i}(\sigma_{ij}-k\mu_i\mu_j)_+ }{\sum_{j\neq i}\gamma_j^{-1}}   .
        \end{equation*}}%
 Combining the two inequalities, we see that  \eqref{Eq:unicond} is equivalent to \eqref{Eq:unicond2:gam_R=0}.


    When $\gamma_R>0$, 
    one has to handle the term $(\gamma_R\bm{\Sigma}+2\bm{M})^{-1}$. In that case, using \eqref{eq:Dmu:eta:1}, we have 
    {\small\begin{align}
    \label{eq:mui:etai:gam_R>0}
         \mu_i \eta^*_i &=     \frac{1}{2}\left[\frac{\sum_{j=1}^n(\sigma_{ij}-k\mu_i\mu_j) }{\sum_{j=1}^n\gamma_j^{-1}} + k\mu_i^2\gamma_i   \right] + \frac{\gamma_R}{4}\sum_{j=1}^n\sigma_{ij} - \frac{\gamma_R^2}{4} \left[\bm{\Sigma}(\gamma_R\bm{\Sigma}+2\bm{M})^{-1} \bm{\Sigma}\bm{1}   \right]_i,
    \end{align}}%
    where $[\bm{x}]_i$ denotes the $i$-th entry of $\bm{x}$.  The formula shows that the non-explicit term appears only at second order in $\gamma_R$. Substituting this into \eqref{Eq:unicond}, we see that the condition is equivalent to
   {\small\begin{equation}\label{Eq:unicond2:gam_R>0}
  \begin{aligned}
               k\mu_i^2\gamma_i + \frac{\sigma_i^2-k\mu_i^2}{\sum_{j=1}^n\gamma_j^{-1}} &>  \max\bigg\{ \frac{\sum_{j\neq i}|\sigma_{ij}-k\mu_i\mu_j|}{\sum_{j=1}^n\gamma_j^{-1}} - \frac{\gamma_R}{2}\sum_{j=1}^n\sigma_{ij} +  \frac{\gamma_R^2}{2} \left[\bm{\Sigma}(\gamma_R\bm{\Sigma}+2\bm{M})^{-1} \bm{\Sigma}\bm{1}   \right]_i,  \\
               &\qquad \frac{\sum_{j=1}^n(\sigma_{ij}-k\mu_i\mu_j) }{\sum_{j=1}^n\gamma_j^{-1}} + \frac{\gamma_R}{2}\sum_{j=1}^n\sigma_{ij} - \frac{\gamma_R^2}{2} \left[\bm{\Sigma}(\gamma_R\bm{\Sigma}+2\bm{M})^{-1} \bm{\Sigma}\bm{1}   \right]_i  \bigg\}.  
    \end{aligned}
\end{equation}}%

Herein, we provide two possible entry-wise bounds. For the first bound, we impose  the following condition: for $i=1,\dots,n$,
   {\small \begin{equation}
    \label{eq:delta:i}
       \kappa_i :=  \left(\gamma_R + \frac{2}{\sum_{j=1}^n\gamma_j^{-1}} \right)\left(\sigma_i^2 -\sum_{j\neq i}|\sigma_{ij}| \right) + 2k\mu_i^2\left(\gamma_i-\frac{1}{\sum_{j=1}^n\gamma_j^{-1}}\right)   +\frac{2k\mu_i}{\sum_{j=1}^n\gamma_j^{-1}}\sum_{j\neq i}\mu_j>0,
    \end{equation}}%
i.e., the matrix $\gamma_R\bm{\Sigma}+2\bm{M}$ is strictly diagonally dominant. Let also $\kappa_{\min} := \min_{i=1,\dots,n}\kappa_i$. By the Varah bound for strictly diagonally dominant matrices, $ \|(\gamma_R\bm{\Sigma}+2\bm{M})^{-1}\|_\infty \leq \frac{1}{\kappa_{\min}}$.
Therefore, 
   {\small  \begin{align}
    \label{eq:gam_R:pertub:1}
      & \ \ \ \   \left|\gamma_R^2  \left[ \bm{\Sigma}(\gamma_R\bm{\Sigma}+2\bm{M})^{-1} \bm{\Sigma}\bm{1}   \right]_i\right| \leq  \frac{\gamma_R^2 \|\bm{\Sigma}\|_\infty }{\kappa_{\min}} \max_{j=1,\dots,n} \left|  \sum_{k=1}^n\sigma_{jk} \right|. 
    \end{align}}%
We can thus substitute \eqref{eq:gam_R:pertub:1} into \eqref{Eq:unicond2:gam_R>0} to obtain an expression for the condition \eqref{Eq:unicond}. 

In the second bound, by considering $ \bm{\Sigma} (\gamma_R\bm{\Sigma}+2\bm{M})^{-1} =   \bm{\Sigma}^{\frac{1}{2}} (\gamma_R\bm{I}_n+2\bm{\Sigma}^{-\frac{1}{2}}\bm{M}\bm{\Sigma}^{-\frac{1}{2}})^{-1}\bm{\Sigma}^{-\frac{1}{2}}$,
we have 
    \begin{equation*}
        \|\bm{\Sigma} (\gamma_R\bm{\Sigma}+2\bm{M})^{-1}\|_2 \leq \frac{1}{\gamma_R + 2\lambda_{\min}(\bm{\Sigma}^{-\frac{1}{2}}\bm{M}\bm{\Sigma}^{-\frac{1}{2}})^{-1}\bm{\Sigma}^{-\frac{1}{2}} ) }.
    \end{equation*}
Hence,
\begin{align}
        \label{eq:gam_R:pertub:2}
        & \ \ \ \ \left| \gamma_R^2 \left[\bm{\Sigma}(\gamma_R\bm{\Sigma}+2\bm{M})^{-1} \bm{\Sigma}\bm{1}   \right]_i\right| \leq \frac{\gamma_R^2}{\gamma_R + 2\lambda_{\min}(\bm{\Sigma}^{-\frac{1}{2}}\bm{M}\bm{\Sigma}^{-\frac{1}{2}})^{-1}\bm{\Sigma}^{-\frac{1}{2}} ) } \left\|  \bm{\Sigma}\bm{1} \right\|_2.
    \end{align}

Compared with \eqref{eq:gam_R:pertub:1}, the bound \eqref{eq:gam_R:pertub:2} does not require additional conditions as introduced in \eqref{eq:delta:i}. However, it could suffer from the curse of dimensionality as the $l_2$-norm of $\bm{\Sigma}\bm{1}$ would, in general, introduce a factor of $\sqrt{n}$.

\end{document}